\definecolor{ForestGreen}{rgb}{0.1333,0.5451,0.1333}
\definecolor{DarkRed}{rgb}{0.8,0,0}
\definecolor{Red}{rgb}{1,0,0}
\newcommand{\stackGeq}[1]{%
	\setbox0=\hbox{${}\mathrel{\stackon[-1pt]{\geq}{\scriptstyle\text{#1\strut}}}{}$}
	\xdef\tmpwd{\dimexpr\the\wd0\relax}
	\kern.5\tmpwd\mathclap{\box0}&\kern.5\tmpwd
}
\DeclareMathOperator*{\expectation}{\mathbb{E}}
\let\poly\relax
\DeclareMathOperator*{\poly}{poly}
\renewcommand\R{\mathbb{R}}
\newcommand\eps{\epsilon}
\newcommand\norm[1]{\left\| #1 \right\|}
\newcommand{\expect}{\expectation\expectarg}
\DeclarePairedDelimiterX{\expectarg}[1]{[}{]}{%
	\ifnum\currentgrouptype=16 \else\begingroup\fi
	\activatebar#1
	\ifnum\currentgrouptype=16 \else\endgroup\fi
}
\DeclarePairedDelimiterX{\nicesetarg}[1]{\{}{\}}{%
	\ifnum\currentgrouptype=16 \else\begingroup\fi
	\activatebar#1
	\ifnum\currentgrouptype=16 \else\endgroup\fi
}
\newcommand{\innermid}{\nonscript\;\delimsize\vert\nonscript\;}
\newcommand{\activatebar}{%
	\begingroup\lccode`\~=`\|
	\lowercase{\endgroup\let~}\innermid 
	\mathcode`|=\string"8000
}
\newcommand\prob[1]{\Pr\left[#1\right]}
\newcommand\opt{\textsc{Opt}\xspace}
\newcommand\optr{\textsc{Opt}_{\textsc{Rec}}\xspace}
\newcommand\alg{\textsc{Alg}\xspace}
\newcommand\wKL[2]{\textsc{KL}_w\left(#1 \mid \mid #2\right)}
\theoremstyle{plain}
\newtheorem{theorem}{Theorem}[section]
\newtheorem{lemma}[theorem]{Lemma}
\newtheorem{fact}[theorem]{Fact}
\newtheorem{claim}[theorem]{Claim}
\newtheorem{corollary}[theorem]{Corollary}
\newlength{\continueindent}
\newcommand*{\ALG@customparshape}{\parshape 2 \leftmargin \linewidth \dimexpr\ALG@tlm+\continueindent\relax \dimexpr\linewidth+\leftmargin-\ALG@tlm-\continueindent\relax}
\apptocmd{\ALG@beginblock}{\ALG@customparshape}{}{\errmessage{failed to patch}}
\def\thm@space@setup{%
	\thm@preskip=\parskip \thm@postskip=0pt
}
\newcommand{\ALGtikzmarkcolor}{black}
\newcommand{\ALGtikzmarkextraindent}{4pt}
\newcommand{\ALGtikzmarkverticaloffsetstart}{-.5ex}
\newcommand{\ALGtikzmarkverticaloffsetend}{-.5ex}
\newcounter{ALG@tikzmark@tempcnta}
\newcommand\ALG@tikzmark@start{%
	\global\let\ALG@tikzmark@last\ALG@tikzmark@starttext%
	\expandafter\edef\csname ALG@tikzmark@\theALG@nested\endcsname{\theALG@tikzmark@tempcnta}%
	\tikzmark{ALG@tikzmark@start@\csname ALG@tikzmark@\theALG@nested\endcsname}%
	\addtocounter{ALG@tikzmark@tempcnta}{1}%
}
\def\ALG@tikzmark@starttext{start}
\newcommand\ALG@tikzmark@end{%
	\ifx\ALG@tikzmark@last\ALG@tikzmark@starttext
	\else
	\tikzmark{ALG@tikzmark@end@\csname ALG@tikzmark@\theALG@nested\endcsname}%
	\tikz[overlay,remember picture] \draw[\ALGtikzmarkcolor] let \p{S}=($(pic cs:ALG@tikzmark@start@\csname ALG@tikzmark@\theALG@nested\endcsname)+(\ALGtikzmarkextraindent,\ALGtikzmarkverticaloffsetstart)$), \p{E}=($(pic cs:ALG@tikzmark@end@\csname ALG@tikzmark@\theALG@nested\endcsname)+(\ALGtikzmarkextraindent,\ALGtikzmarkverticaloffsetend)$) in (\x{S},\y{S})--(\x{S},\y{E});%
	\fi
	\gdef\ALG@tikzmark@last{end}%
}
\apptocmd{\ALG@beginblock}{\ALG@tikzmark@start}{}{\errmessage{failed to patch}}
\pretocmd{\ALG@endblock}{\ALG@tikzmark@end}{}{\errmessage{failed to patch}}
\global\long\def\inner#1#2{\left\langle #1,#2\right\rangle }
\title{Chasing Positive Bodies}
\author{}
 \author{Sayan Bhattacharya\thanks{Department of Computer Science, University of Warwick. Email:
 		\texttt{S.Bhattacharya@warwick.ac.uk}. Supported by Engineering and Physical Sciences Research Council, UK (EPSRC) Grant EP/S03353X/1.}
 	\and
 Niv Buchbinder\thanks{Department of Statistics and Operations Research, School of Mathematical Sciences, Tel Aviv University, Tel Aviv. Emails:
 		\texttt{niv.buchbinder@gmail.com}, \texttt{roiel@tauex.tau.ac.il}. Supported in part by Israel Science Foundation grant 2233/19 and United States - Israel Binational Science Foundation grant 2022418.}
   \and
   Roie Levin$^{\dag}$\thanks{Supported in part by a Fulbright Israel Postdoctoral Fellowship.} 
   \and
 	Thatchaphol Saranurak\thanks{Computer Science and Engineering Division, University of Michigan. Email: 		\texttt{thsa@umich.edu}. Supported by NSF CAREER grant 2238138.}}
\date{}
\begin{document}
\maketitle

\pagenumbering{gobble}

\begin{abstract}
We study the problem of chasing positive bodies in $\ell_1$: given a sequence of bodies $K_{t}=\{x^{t}\in\R_{+}^{n}\mid C^{t}x^{t}\ge1,P^{t}x^{t}\le1\}$ revealed online, where $C^{t}$ and $P^{t}$ are nonnegative matrices, the goal is to (approximately) maintain a point $x_t \in K_t$ such that $\sum_t \|x_t - x_{t-1}\|_1$ is minimized. This captures the fully-dynamic low-recourse variant of any problem that can be expressed as a mixed packing-covering linear program and thus also the fractional version of many central problems in dynamic algorithms such as set cover, load balancing, hyperedge orientation, minimum spanning tree, and matching.

We give an $O(\log d)$-competitive algorithm for this problem, where $d$ is the maximum row sparsity of any matrix $C^t$. This bypasses and improves exponentially over the lower bound of $\sqrt{n}$ known for general convex bodies. Our algorithm is based on iterated information projections, and, in contrast to general convex body chasing algorithms, is entirely memoryless.

We also show how to round our solution dynamically to obtain the first fully dynamic algorithms with \emph{competitive recourse} for all the stated problems above; i.e. their recourse is less than the recourse of every other algorithm on every update sequence, up to polylogarithmic factors. This is a significantly stronger notion than the notion of \emph{absolute recourse} in the dynamic algorithms literature. 
\end{abstract}

\newpage

\pagenumbering{arabic}

\section{Introduction}
\label{sec:intro}

We study the problem of \emph{chasing positive bodies} in $\ell_{1}$ defined as follows. We are given a sequence of bodies $K_{t}=\{x^{t}\in\R_{+}^{n}\mid C^{t}x^{t}\ge1,P^{t}x^{t}\le1\}$ revealed online, where $C^{t}$ and $P^{t}$ are matrices with non-negative entries. The goal is to (approximately) maintain a point $x^{t}\in K_{t}$ such that the total $\ell_{1}$-movement, $\sum_{t}\|x^{t}-x^{t-1}\|_{1}$, is minimized where $x^0=0$. 
More generally, given weight $w\in\R_{+}^{n}$, we want to minimize the weighted $\ell_{1}$-movement, $\sum_{t}w_{i}\sum_{i=1}^{n}|x_{i}^{t}-x_{i}^{t-1}|$. This captures the fully dynamic variant of any problem that can be expressed as a mixed packing-covering linear program, and thus also the fractional version of many central fully dynamic problems when the goal is to minimize \emph{recourse}, i.e., the amount of change to the solution, as measured by $\ell_1$ movement. Examples include set cover, load balancing (hyperedge orientation), minimum spanning tree, and matching.

A more general version our problem, called the \emph{convex body chasing} problem, allows $K_{t}$ to be an arbitrary convex body and the goal is to minimize total $\ell_{p}$-movement for any $p\ge1$. Friedman and Linial \cite{FL93} introduced convex body chasing as a vast generalization of many online problems. This problem has been the subject of intensive study recently \cite{BBEKU20,ABCGL19,BKLLS20} which has culminated in algorithms with $O(n)$ competitive ratios \cite{AGTG21,Sellke20}. This bound is nearly tight since there is a lower bound of $\Omega(\max\{\sqrt{n},n^{1-1/p}\})$ \cite{BKLLS20}. Unfortunately, because of this strong lower bound, algorithms for this general problem have generated no interesting applications to concrete combinatorial optimization problem.

In this paper, we show that for the special yet still expressive case of chasing positive bodies in $\ell_{1}$, we can bypass and exponentially improve upon the $\sqrt{n}$ lower bound. We then show that a solution to the positive body chasing problem can be rounded online to yield low-recourse fully dynamic algorithms for all the combinatorial problems mentioned above. Our algorithms have \emph{competitive recourse} guarantees, a significantly stronger notion of recourse than the usual one used in the dynamic algorithm literature, which we will discuss soon.

\subsection{Our Results}

Our main contribution is the following theorem.
\begin{theorem}
\label{thm:main}For any $\eps\in(0,1]$, there is an an $O\left(\nicefrac{1}{\eps}\log\left(\nicefrac{d}{\eps}\right)\right)$-competitive algorithm for chasing positive bodies in $\ell_{1}$ such that $x^{t}\in K_{t}^{1+\eps}=\left\{ x^{t}\in R_{+}^{n}\ |\ C^{t}x^{t}\geq1,P^{t}x^{t}\leq1+\eps\right\} $ at time $t$, and $d$ is the maximal number of non-negative coefficients in a covering constraint. 
\end{theorem}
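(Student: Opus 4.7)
The plan is to maintain $x^t$ by iterated information projections with respect to a weighted Bregman divergence, and to analyze the competitive ratio via a single potential function tracking the divergence to an offline optimum point $y^t \in K_t$. Specifically, work with the shifted entropy $\phi(x) = \sum_i w_i (x_i + \delta_i) \ln(x_i + \delta_i)$, where $\delta_i$ is a carefully chosen offset of order $\eps/d$ that makes the induced Bregman divergence $D_w(\cdot \| \cdot)$ well-defined on $\R_+^n$ and keeps logarithmic ratios of coordinates bounded by $O(\log(d/\eps))$. The algorithm, when body $K_t$ arrives, sets
\[
x^t \;=\; \argmin_{x \in K_t^{1+\eps}} D_w(x \,\|\, x^{t-1}).
\]
This is memoryless (depends only on $x^{t-1}$ and $K_t$) and captures both shrinking (when $x^{t-1}$ violates a packing constraint) and growing (when $x^{t-1}$ violates a covering constraint) via a single convex projection.

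The analysis would use the potential $\Phi^t = D_w(y^t \| x^t)$, where $y^t \in K_t$ is OPT. The algorithm's move decreases $\Phi$: since $y^t \in K_t \subseteq K_t^{1+\eps}$, the Pythagorean inequality for Bregman projections onto convex sets yields
\[
D_w(y^t \,\|\, x^{t-1}) \;\geq\; D_w(y^t \,\|\, x^t) + D_w(x^t \,\|\, x^{t-1}),
\]
so the potential drop pays for $D_w(x^t \| x^{t-1})$. The remaining step in this half is a Pinsker-like bound \emph{tailored to the constraint structure}: the weighted movement $\sum_i w_i |x_i^t - x_i^{t-1}|$ is at most $O(\tfrac{1}{\eps}\log(d/\eps))$ times $D_w(x^t \| x^{t-1})$. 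The $\eps$ appears because the projection onto the relaxed body $K_t^{1+\eps}$ guarantees a $(1+\eps)$ slack on packing constraints, so each update must be at least an $\Omega(\eps)$ multiplicative change on affected coordinates, and row sparsity $d$ localizes how this change spreads.

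For OPT's move from $y^{t-1}$ to $y^t$, the potential can only increase, and one bounds the change by $O(\log(d/\eps)) \cdot \sum_i w_i |y_i^t - y_i^{t-1}|$. Since OPT's coordinates lie in roughly $[0, 1]$ (by the packing constraints and non-negativity) while the $\delta_i$ offset is at least $\Omega(\eps/d)$, the relevant log-ratios appearing in $D_w$ are all bounded by $O(\log(d/\eps))$, so the change in $\Phi$ per unit of OPT movement is at most the claimed factor. Summing the telescoping identity over all $t$, using $\Phi^0 = D_w(y^0 \| 0) = O(\log(d/\eps)) \cdot \|y^0\|_{1,w}$ and $\Phi^T \geq 0$, yields total algorithm movement $\leq O(\tfrac{1}{\eps}\log(d/\eps))$ times OPT's movement.

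The main obstacle is the first bound: relating weighted $\ell_1$ movement to the Bregman divergence. A naive Pinsker inequality gives only a square-root dependence and is wrong-sided for unbounded measures. The correct bound must exploit (i) the $\eps$-slack which forces each individual violation to produce an $\Omega(\eps)$ relative change, so that the KL on that coordinate dominates its $\ell_1$ change up to $\log(d/\eps)$, and (ii) the fact that decreasing coordinates pay for themselves against the shifted entropy, since $x_i + \delta_i \geq \delta_i$ bounds the log-derivative. Getting the exponents to match, and verifying that the KKT conditions of the projection do not produce wild coordinate swings on entries not touched by a violated constraint, is where the row-sparsity parameter $d$ enters the analysis.
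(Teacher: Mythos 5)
Your plan is a genuinely different proof strategy from the paper's: you propose a direct potential-function argument with $\Phi^t = D_w(y^t \,\|\, x^t)$ and the Bregman Pythagorean inequality, whereas the paper writes an LP $(\overline{\mathcal{P}})$ whose optimum equals the optimal upward recourse and fits a feasible solution to its dual $(\overline{\mathcal{D}})$ using the Lagrange multipliers of each projection. Unfortunately, the potential argument as stated has a real gap at exactly the step you flag as ``the main obstacle,'' and I do not think the fix you sketch can work.

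The claimed inequality ``movement $\sum_i w_i |x_i^t - x_i^{t-1}| \leq O\big(\tfrac{1}{\eps}\log(d/\eps)\big)\, D_w(x^t \,\|\, x^{t-1})$'' is false in general. For a single coordinate moving from $b$ to $a=(1+\eta)b$, the $\ell_1$ movement is $\eta b$ while $D_w(a\,\|\,b) = a\log(a/b)-a+b \approx \tfrac{1}{2}\eta^2 b$, so the ratio diverges like $2/\eta$ as $\eta \to 0$. Your proposed rescue is that the $\eps$-slack from $K_t^{1+\eps}$ forces an $\Omega(\eps)$ multiplicative change per update, but this is not the case: the algorithm must react to a covering constraint as soon as $\inner{c^t}{x^{t-1}}<1$, and to a packing constraint as soon as $\inner{p^t}{x^{t-1}}>1+\eps$, and in both cases the violation---and hence the movement---can be arbitrarily small. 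The resource augmentation puts a gap between what the algorithm maintains ($\leq 1+\eps$) and what \opt maintains ($\leq 1$); it does not put a floor on the size of any individual step. Since the Pythagorean inequality only gives you potential drop $\geq D_w(x^t \,\|\, x^{t-1})$, which is quadratically small for small moves, the potential decrease is insufficient to pay for the linear movement cost on a per-step basis, and the argument does not close.

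The paper sidesteps this by never comparing movement to the Bregman divergence. Instead, \cref{lemmamovement} uses the KKT stationarity condition $w_i\log(\widehat x_i^t/\widehat x_i^{t-1}) = c_i^t y^t$ together with complementary slackness $\inner{c^t}{x^t}=1$ to show the \emph{exact linear} relation $\sum_i w_i (x_i^t - x_i^{t-1})_+ \leq (1+\eps/4)\, y^t$, where $y^t$ is the Lagrange multiplier (not the divergence). The multiplier $y^t$ scales linearly with the violation size, so small moves are charged small amounts. The $\log$ factor then appears, not from a per-step Pinsker bound, but from dual feasibility: the dual variable $\overline r_i^t$ accumulates a telescoping sum of $\log(\widehat x_i^\tau / \widehat x_i^{\tau-1})$ terms, and the constraint $\overline r_i^t \leq w_i$ requires dividing by $A = \log(1+O(d\Delta/\eps))$. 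This is morally the same telescoping you are implicitly relying on when you say ``log-ratios are bounded by $O(\log(d/\eps))$,'' but it lives on the dual side, where it can be summed over time rather than applied to individual steps.

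Two further points. First, your assumption that ``\opt's coordinates lie in roughly $[0,1]$'' is normalization-dependent and builds in an aspect-ratio assumption; the paper's offsets are $\eps/(4 d^t c_i^t)$, not a uniform $\eps/d$, and even with the right offsets the warmup argument (\cref{thm:warmup}) still incurs a $\log\Delta$ which is only removed by the separate (and delicate) \cref{lem:sp_dualok}, which \emph{decreases} selected dual coordinates $\widetilde y^\tau$. Your plan has no analogue of this step and would at best give $O(\tfrac{1}{\eps}\log(d\Delta/\eps))$. Second, you treat the projection onto all of $K_t^{1+\eps}$ as a single convex projection; the paper first reduces to chasing single halfspaces (\cref{sec:reduction_halfspaces}) precisely so that the KKT conditions take the simple product form needed for the dual fitting---one would need to verify that a direct multi-constraint projection admits a comparably tractable characterization.
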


Note we give our algorithm \emph{$\eps$-resource augmentation}, i.e. we allow it to violate the packing constraints slightly. Alternatively, by scaling the solution, we may produce a solution that fully satisfies all packing constraints but violates the covering constraints up to an $\eps$ factor. As we discuss below, $\eps$-resource augmentation does not matter in our applications since we lose additional approximation factors while rounding anyway. 
We compare the total movement of our algorithm that maintains a point in $K_{t}^{1+\eps}$ with an optimal solution that maintains a point in $K_{t}$. In several applications where $d=O(1)$, such as the fractional version of dynamic edge orientation and set cover with bounded frequency, our competitive ratio is completely independent of $n$. We also remark that we can handle static box constraints of the form $x\leq b$, where $b\in R_{+}^{n}\cup \{\infty\}$ without any violation. See \cref{sec:box} for more details.

We complement \cref{thm:main} by the following lower bound:
\begin{restatable}{theorem}{lbmainthm}
\label{thm:lb_mainthm}
    No algorithm for positive body chasing can achieve competitive ratio better than \[\Omega\left(\max\left\{\min\left(\frac{1}{\eps \sqrt{\log (1/\eps)}}, \sqrt{n}\right),\log n\right\}\right).\]
\end{restatable}

The $\Omega(\log n)$ lower bound follows by the known bound for covering LPs \cite{BN05, GKL21}. The $\Omega(\min(1/(\eps \sqrt{\log (1/\eps)}), \sqrt{n}))$ lower bound follows by a reduction to the $\sqrt{n}$ lower bound for the general convex bodies chasing problem (Theorem 5.4 of in \cite{BKLLS20})\footnote{We would like to thank Mark Sellke for pointing out this reduction.}. The full proof appears in \cref{sec:lb}.

This lower bound implies that all the assumptions of \Cref{thm:main} are crucial for the exponential improvement over the general convex body chasing case.
In particular, it shows that the lower bound of $\Omega(\sqrt{n})$ for the general convex body chasing problem holds for the positive body chasing problem (without resource augmentation) as well. More generally, our linear dependence on $1/\eps$ is essentially tight. For example, when $\eps = \Theta(n^{-\delta})$ for a constant $0<\delta<\nicefrac{1}{2}$, the competitive ratio is at least $\tilde{\Omega}(n^{\delta})$ (hiding logarithmic terms), and in particular $O(\poly \log n)$-competitive ratio is impossible. When $\eps= O(\frac{\log n}{n})$ the best algorithm is still the $O(n)$-competitive algorithm for general convex body chasing \cite{AGTG21,Sellke20}. Finally, we remark that for $\ell_p$-norms, when $p>1$, the known lower bound of \cite{FL93} forbids any $o(n^{1-1/p})$-competitive algorithms even for chasing positive bodies and even with $\Omega(1)$-resource augmentation.

\paragraph{Competitive Recourse for Dynamic Problems.}
An overarching goal of online and dynamic algorithms is to maintain near-optimal solutions to combinatorial problems as constraints change over time while minimizing the number of edits to the solution, a.k.a. recourse. An extensive line of work has produced low-recourse algorithms for Steiner tree under terminal updates \cite{DBLP:journals/siamdm/ImaseW91,DBLP:journals/siamcomp/GuG016,DBLP:conf/soda/GuptaK14,lkacki2015power,DBLP:conf/focs/GuptaL20}, load balancing under job arrivals \cite{DBLP:journals/jcss/AwerbuchAPW01,GKS14,KLS23}, set cover under element insertions/deletions \cite{GKKP17,DBLP:conf/stoc/AbboudA0PS19,  DBLP:conf/focs/BhattacharyaHN19,DBLP:conf/focs/GuptaL20, DBLP:conf/soda/BhattacharyaHNW21,DBLP:conf/esa/AssadiS21}, facility location under client updates \cite{BLP22,guo2020facility}, and many fully dynamic graph problems under edge updates like edge orientation \cite{brodal1999dynamic,sawlani2020near,bera2022new}, graph coloring \cite{solomon2020improved}, maximal independent sets \cite{assadi2018fully}, and spanners \cite{baswana2012fully,bhattacharya2022simple}.

The works mentioned above, and most others on dynamic algorithms, measure recourse in absolute terms. Such bounds are of the form ``after $T$ updates, the algorithm incurs recourse at most $kT$''. We call these \emph{absolute recourse} bounds. On the other hand, the online algorithms literature prefers \emph{competitive} analysis, where one compares the performance of the algorithm to the best algorithm in hindsight. In this paper we give \emph{competitive recourse} bounds: we say an algorithm has $c$-competitive recourse if it incurs recourse at most $c$ times that of any other \emph{offline} algorithm on every update sequence.

There are several advantages to studying competitive recourse over absolute recourse. An absolute recourse bound is a worst-case bound over \emph{all} update sequences, whereas a competitive recourse bound is tailored to \emph{each} update sequence. To illustrate this, consider an update sequence for set cover that repeatedly inserts and deletes the same element without changing the optimal solution. While competitive recourse algorithms would incur no recourse, an algorithm with absolute recourse guarantees might have large recourse proportional to these ``irrelevant'' updates. Hence, competitive recourse bounds can be much stronger than absolute bounds. For a concrete example, it is straightforward to obtain a fully dynamic algorithm for $(1+\eps)$-approximate matching with $O(1/\eps)$ absolute recourse by eliminating short augmenting paths, but no non-trivial algorithm with competitive recourse was known prior to our work. 

For some problems small absolute recourse may not be possible at all. In the context of fully dynamic load balancing, \cite{KLS23} show that no algorithm, even with full knowledge of the job arrival/departure sequence, can achieve competitive ratio $\alpha=o(\log(n))$ with less than $\Omega(n^{1/\alpha})$ absolute recourse. They write: 
\begin{quote}
``To circumvent the negative result, one needs to consider a different measurement for recourse for the fully dynamic model.'' 
\end{quote}
We argue in this paper that competitive recourse is the natural way around this obstacle.

\paragraph{Applications of \Cref{thm:main} to Combinatorial Problems. }

In this paper, we give the first fully dynamic algorithms with polylogarithmic competitive recourse for central problems studied in the dynamic algorithms literature, including set cover, load balancing, bipartite matching and minimum spanning tree. The results are summarized in \Cref{tab:app}. In the table, our algorithms with $\alpha$-approximation and $c$-competitive recourse guarantees have the following formal guarantee. For \emph{every} update sequence, our algorithms maintain $\alpha$-approximate solution with recourse at most $c$ times the optimal recourse for maintaining an optimal solution. More generally, each of our algorithms even allow the following trade-off. Given parameter $\beta$, for \emph{every} update sequence $\sigma$, it maintains a $(\alpha\beta)$-approximate solution with total recourse at most $c\cdot\optr^{\beta}(\sigma)$ where $\optr^{\beta}(\sigma)$ denotes the optimal recourse for maintaining a $\beta$-approximate solution undergoing updates $\sigma$. This is useful because it is natural to expect that the optimal recourse becomes significantly smaller when we allow some approximation.

\begin{table}[h]
\footnotesize{

\def\arraystretch{1.3}

\begin{tabular}{|>{\raggedright}m{0.097\textwidth}|>{\centering}m{0.12\textwidth}|>{\centering}m{0.15\textwidth}|>{\centering}m{0.14\textwidth}|>{\centering}m{0.075\textwidth}|>{\raggedright}m{0.08\textwidth}|>{\raggedright}m{0.145\textwidth}|}
\hline 
\textbf{Problems} & \textbf{Approx.} & \textbf{Competitive recourse} & \textbf{Randomized vs. Deterministic} & \textbf{Ref.} & \textbf{Update type} & \textbf{Variables}\tabularnewline
\hline 
\multirow{2}{0.13\textwidth}{Set cover} & $O(\log n)$ & $O(\log n\log f)$  & Randomized & Cor. \ref{cor:set cover} & \multirow{2}{0.08\textwidth}{elements} & \multirow{2}{0.19\textwidth}{$n$ elements,\\ each in $\le f$ sets}\tabularnewline
\cline{2-5} 
 & $O(f)$ & $O(f\log f)$ & Deterministic & Cor. \ref{cor:set cover} &  & \tabularnewline
\hline 
\multirow{2}{0.13\textwidth}{Load balancing} & $2+\eps$ & $O\left(\hspace{-0.05in}\begin{array}{cc}
     \frac{1}{\eps^{5}}\log n \\
     \cdot \log\frac{r}{\eps}\log\frac{1}{\eps}
\end{array}\hspace{-0.05in}\right)$ & Deterministic & Cor. \ref{cor:kt:makespan} & \multirow{2}{0.08\textwidth}{jobs} & \multirow{2}{0.19\textwidth}{$n$ jobs, each applicable\\ to $\le r$ machines}\tabularnewline
\cline{2-5} 
 & $O\left(\frac{\log\log n}{\log\log\log n}\right)$ & $O(\log r)$  & Randomized & Cor. \ref{cor:kt:makespan} &  & \tabularnewline
\hline 
Bipartite matching & $1+\eps$ & $O\left(\frac{1}{\eps^{4}}\log n\log\frac{n}{\eps}\right)$  & Randomized & Cor. \ref{cor:matching:rounding} & edges & $n$ vertices\tabularnewline
\hline 
Minimum spanning tree & $2+\eps$ & $O\left(\frac{1}{\eps^3}\log n\log\frac{n}{\eps}\right)$  & Randomized & Cor. \ref{cor:kt:mst} & edges & $n$ vertices\tabularnewline
\hline 
\end{tabular}

}

{\footnotesize{}\caption{\label{tab:app}Summary of our fully dynamic algorithms with competitive recourse. All randomized algorithms assume an oblivious adversary, and incur a polynomially-small additive term in the recourse bound which we omit. See \Cref{sec:rounding} for details.}
}{\footnotesize \par}
\end{table}

An important feature of our fractional algorithms is that their guarantees are independent of the type of dynamic update because \Cref{thm:main} works so long as the feasible region at every step forms a positive body. 
For example, our fractional minimum spanning tree and bipartite matching algorithms work under both vertex and edge arrivals/departures, our fractional load balancing algorithm works under fine-grained updates to job-machine loads, and our fractional set cover algorithm works when set costs can be updated. Since our rounding schemes also work under these generalized types of updates, our final algorithms do as well.

We emphasize that our focus is minimizing recourse and not \emph{update time}, the usual metric in the dynamic algorithms literature. Obtaining competitive recourse and small update time simultaneously is an interesting research direction.

\subsection{Connections to Previous Work}

\paragraph{Positive Body Chasing.}

Our result for chasing positive bodies directly generalizes the influential line of work on online covering problems~\cite{BN05,GN14,DBLP:conf/soda/GuptaL20}. In these, the convex bodies are nested and defined solely by covering constraints: for each time $t$, the body is $K_{t}=\left\{ x\in R_{+}^{n}\ |\ C^{t}x\geq1\right\} $ where $C^{t}$ is a non-negative matrix and $K_{t}\subseteq K_{t-1}$. The task is to maintain a monotonically increasing $x^{t}\in K_{t}$ while minimizing $\inner wx$ for fixed $w\in\R_{+}^{n}$. This goal is equivalent to minimizing the $\ell_{1}$-movement weighted by $w$, since decreasing any variable never helps to satisfy covering constraints.  Even this special case of positive body chasing has been amazingly successful in unifying previous results and in resolving important open questions in competitive analysis. This includes, e.g., the classic ski rental problem,  online set cover \cite{AAABN03}, weighted paging and variants \cite{BBN07,BBN08, CLNT22}, graph optimization problems related to connectivity and cuts \cite{AAABN04,NPS11}, the dynamic TCP-acknowledgement problem \cite{KKR01,BJN07}, metrical task systems~\cite{BNN10b,BCLL18a}, and the $k$-server problem~\cite{BCLL18,BGMN18}.

Another line of work \cite{ABFP13,ABC16}  generalizes \cite{BN05,GN14} in a different way. Here the convex bodies $K_{t}$ are nested covering constraints and $x^{t}\in K_{t}$ may only move monotonically. The goal in \cite{ABFP13} is to minimize the maximum violation of a \emph{fixed} set of packing constraints, and the goal of \cite{ABC16} is to minimize a non-decreasing convex function. This is different from our objective since we must handle an online sequence of both packing and covering constraints, but our solution $x$ is allowed to move non-monotonically.

\cite{bhattacharya2023dynamic} gave dynamic algorithms for maintaining $(1+\eps)$-approximate solutions to mixed packing-covering linear programs. They showed how to handle \emph{relaxing} updates (where the feasible region only grows) using small \emph{update time}, and asked as an open problem if it is possible to do the same for \emph{restricting updates} (where the feasible region only shrinks). \Cref{thm:main} resolves the recourse version of this question: we can maintain an $(1+\eps)$-approximate solution of the mixed packing-covering linear program with total absolute recourse $O(|x^{*}|_{1}\cdot\frac{1}{\eps}\log\frac{n}{\eps})$, where $x^{*}$ is any feasible solution at the end of the sequence of restricting updates. 
This follows because an offline algorithm can just move the solution to $x^{*}$ from the beginning.

\paragraph{Competitive Recourse in Combinatorial Problems. }
Here, we only compare our algorithms from \Cref{tab:app} with some of the few known dynamic algorithms with competitive recourse guarantees.

For the set cover problem, the online algorithms in \cite{GN14} support element insertions and guarantee $\min\{O(\log n\log f),O(f\log f)\}$-approximation, and these are the best possible. When every set has  unit weight, the approximation ratio also translates to a competitive recourse bound, because no set ever leaves the solution. Our algorithm (\Cref{cor:set cover}) is the first fully dynamic competitive-recourse algorithm, even in the unweighted case.

For the load balancing problem, our algorithms have interesting consequences even when we focus on special cases; \Cref{cor:kt:makespan} implies the first competitive-recourse algorithm for fully dynamic edge orientation in weighted graphs.\footnote{This corresponds to the load balancing instance where each job is applicable to two machines.} Previously, this was known only for unweighted graphs \cite{brodal1999dynamic}. In the unweighted setting, our algorithm with $O(\log r)$-competitive recourse is the first algorithm for fully dynamic hyperedge orientation with \emph{constant} competitive recourse and non-trivial approximation when the rank $r=O(1)$. The previous result of \cite{bera2022new} suffers a $O(\log n)$ factor in recourse.

\cite{GTW14} gave an algorithm for maintaining a spanning tree (more generally a matroid base) when edges have acquisition costs and evolving holding costs. Their result can be used to obtain an $O(\log^{2}n)$ competitive recourse algorithm for maintaining a spanning tree under edge updates but does not give guarantees about the cost of the tree at every step. Our algorithm (\Cref{cor:kt:mst}), in contrast, guarantees a guarantees $(2+\eps)$ approximation with $O_\eps(\log^2 n)$ competitive recourse.

The only other dynamic algorithm with competitive recourse guarantees we are aware of are \cite{azar2023competitive} for vertex coloring and \cite{avin2016online,avin2020dynamic} for balanced graph partitioning. We hope that our generic framework will facilitate further results in this direction.

\subsection{Techniques and Overview}

Our key technical contribution is an algorithm for chasing positive \emph{halfspaces}, or positive bodies defined by a single covering or packing constraint (we explain the reduction from chasing bodies to chasing halfspaces in \Cref{sec:reduction_halfspaces}). Our algorithm is the following. Given point $x^{t-1}$ from the last round, our algorithm performs a projection in KL divergence onto the new feasible region (plus a small additive update for technical reasons). This is sometimes referred to as an \emph{information projection} (e.g.~\cite{csiszar1984sanov}). One can interpret many online covering algorithms as information projections \cite{BGMN18}; our contribution is to extend the analysis for packing constraints which has until now remained elusive. Interestingly, in contrast to competitive algorithms for general convex body chasing (\cite{AGTG21,Sellke20}) this algorithm is completely memoryless, in that it only depends on $x^{t-1}$ and the current violated constraint.

To analyze the algorithm, we write an auxiliary linear program whose optimal value is the optimal recourse of an offline  algorithm with knowledge of all future constraints, and then fit a dual to our online algorithm's solution. We construct this dual using the Lagrange multipliers of the convex programs we use to compute our projections. We start with a warmup analysis in \cref{sec:the_algo,sec:the_analysis,sec:warmup}. The proof here is simple and contains the main ideas, but loses an additional aspect ratio term. To obtain the full result of \cref{thm:main}, we need to overcome a significant barrier: one can show that no monotone dual solution can avoid a dependence on the aspect ratio (see e.g. \cite{BN09}). Nevertheless, we show that it is possible to carefully \emph{decrease} some coordinates of the dual to achieve the desired bound. The details are in \cref{sec:removing_delta}.

We turn to rounding our competitive fractional solution for common applications in \cref{sec:rounding}. Our chasing positive bodies framework captures the task of maintaining fractional solutions to set cover (\cref{sec:setcov}), bipartite matching (\cref{sec:matching}), load balancing (\cref{sec:load_balance}), and minimum spanning tree (\cref{sec:mst}) under dynamic updates. We show that we can also round these fractional solutions to true combinatorial solutions with bounded loss in both recourse and approximation. Our rounding algorithms in \cref{sec:matching,sec:mst} reuse a common recipe: we first maintain an object we call a \emph{stabilizer}, which contains a good solution that itself has good competitive recourse (see \cref{app:matching} and \cref{app:th:kt:mst} for details). We then run existing \emph{absolute} recourse algorithms for maintaining a true solution within this stabilizer. Since the number of updates to the stabilizer is competitive with the recourse of the optimal offline solution, and the absolute recourse algorithms are competitive with respect to the number of updates to the stabilizer, overall this strategy yields a solution with good competitive recourse.

\section{Preliminaries}\label{sec:pre}

\textbf{Mathematical Notation.} All logarithms in this paper are taken to be base $e$. For $w \in \R$, we use the notation $w_+ = \max(0, w)$. We use the symbol $\oplus$ to denote symmetric difference of sets. In the following definitions, let $x,y \in \R^n_+$ be vectors. 
    The standard dot product between $x$ and $y$ is denoted
$\langle x, y\rangle = \sum_{i=1}^n x_i y_i$. We use a weighted generalization of KL divergence. Given
        a weight function $w$, define
	\[\wKL{x}{y} : = \sum_{i=1}^n w_i \left[x_i \log \left(\frac{x_i}{y_i}\right) -x_i + y_i\right].\]
It is known that $\wKL{x}{y} \geq 0$ for nonnegative vectors $x,y$ (one can check this is true term by term above).
Throughout this paper, we use the convention that for any $a,b \geq 0$, the expression 
\begin{align}
\left(a + \infty\right) \log \left(\frac{a + \infty}{b + \infty} \right) -a + b = 0. \label{eq:kl_limit}
\end{align}
These definitions agree with the respective limit behaviors.

\textbf{Simplifying Positive Body Chasing Instances.} 
In the positive body chasing problem, since $x^0=0$ and $x^t\geq 0$, then we have

\[\sum_{t \in [T]} \sum_{i \in [n]}  w_i (x_i^t - x^{t-1}_i)_+ \leq \sum_{t \in [T]} \sum_{i \in [n]} w_i \cdot |x^t_i -x^{t-1}_i| \leq 2\cdot \sum_{t \in [T]} \sum_{i \in [n]}  w_i (x_i^t - x^{t-1}_i)_+.\]

Therefore, without loss of generality, we use the \emph{upward recourse} as the measure of the movement of both our algorithm and the optimal algorithm.

Henceforth, we assume without loss of generality that at each time $t$, the positive body $K_t$ is defined by a single covering constraint $\inner{c^{t}}x\ge1$ or packing constraint $\inner{p^{t}}x\le 1$. This reduction from chasing bodies to chasing halfspaces is standard (see e.g. \cite{BBEKU20}), but we spell out the main argument in \Cref{sec:reduction_halfspaces} for completeness. 

Let $c_i^{\max} = \max_{t}c_i^t$, $c_i^{\min} = \min_{t | c_i^t\neq 0}c_i^t$. We define the aspect ratio $\Delta = \max_i c_i^{\max} / c_i^{\min}$. Let $d^t=|\{i \mid c_i^t\neq 0\}|$ be the sparsity of the covering constraint. Let $d=\max_t\{d^t\}$ be the maximal row sparsity of any covering constraint.

\section{An Algorithm for Chasing Positive Bodies}

\label{sec:positive_bodies}

In this section we design a fractional algorithm for chasing positive bodies proving \cref{thm:main}. As discussed in \cref{sec:pre} we may assume without loss of generality that at each time $t$ the positive body $K_t$ is defined by a single covering or packing constraint.

\subsection{The Algorithm:}

\label{sec:the_algo}

The algorithm is given a parameter $\eps\in(0,1]$. Whenever it gets a violated constraint, it projects back to the constraint using the following procedure.
\begin{itemize}
    \item Initially, $x_i^{0}=0$ for all $i$. At any time $t=1, 2, \ldots,T$:
    \item When a violated covering constraint $\inner{c^{t}}x<1$ arrives, set $x^t$ to be the solution to 
\begin{align*}
	\begin{array}{ccc}
			  & {\displaystyle \min_{x}\sum_{i| \, c_i^t\neq 0}w_{i}\cdot\left[ 
				\widehat x_{i}\log\left(\frac{\widehat x_{i}}{\widehat x_{i}^{t-1}}\right)-\widehat x_{i}\right] }\vspace{0.1cm} \\ 
			\text{s.t.} & \inner{c^{t}}x\ge1 & (I).\\
	\end{array}
\end{align*}
 where $\widehat x_i := x_i + \frac{\eps}{4d^t\cdot c_i^t}$, $\widehat x^{t-1}_i := x^{t-1}_i + \frac{\eps}{4d^t\cdot c_i^t}$, and $d^t= |\{i \mid c_i^t\neq 0\}|$ is the sparsity of the covering constraint.\footnote{Note: at time $t$ we redefine $\widehat x_i^{t-1}$, even though it was already defined at $t-1$ as $x_i^{t-1} + \eps / (3d^{t-1}\cdot c_i^{t-1})$. This is a slight abuse of notation but it makes our proofs cleaner, and also makes apparent the connection to KL divergence.} If for some $x_i^t$ the coefficients $c_i^t=0$, then we set $x_i^t=x_i^{t-1}$.
    \item When we are given a violated packing constraint  $\inner{p^{t}}x>1+\eps$, set $x^t$ to be the solution to 
    \begin{align*}
	\begin{array}{ccc}
			& {\displaystyle \min_{x}\sum_{i| \, p_i^t\neq 0}w_i\left(x_i^t\log \left(\frac{x_i}{x_i^{t-1}}\right)-x_i\right)} \vspace{0.1cm}\\
			\text{s.t.} & {\displaystyle \inner{p^{t}}x\leq1+\eps} & (II).\\
	\end{array}
\end{align*}
 If for some $x_i^t$ the coefficients $p_i^t=0$, then we set $x_i^t=x_i^{t-1}$.
\end{itemize}

We assume without loss of generality that we only get violated constraints.
Let $C, P\subseteq\{1, \ldots, T\}$ be the set of times in which covering/packing constraints appear respectively. 

\subsection{Analysis Framework}

\label{sec:the_analysis}

Let $y^{t}$, $z^{t}$ be the Lagrange multipliers of the constraints $(I)$ and $(II)$ above. In our proof we use the KKT conditions at the optimal solution of this program for every time-step $t$: 
\begin{align}
	&\forall t \in C & \inner{c^{t}}{x^{t}} & =1\label{eq:kkt1}\\
	&\forall t \in C, \ \forall i, c_i^t\neq 0 & w_{i}\log\left(\frac{\widehat x_{i}^{t}}{\widehat x_{i}^{t-1}}\right) & = c_{i}^{t}y^{t} \label{eq:kkt2}\\
	& \forall t \in P & \inner{p^{t}}{x^{t}} & =1+\eps\label{eq:kkt3}\\
	&\forall t \in P, \ \forall i & w_{i}\log\left(\frac{x_{i}^{t}}{x_{i}^{t-1}}\right) & = -p_{i}^{t}z^{t} \label{eq:kkt4}
\end{align}

For convenience, we define $y^0 = z^0 = 0$. 

The optimal upward recourse can be computed by following linear program $(\overline{\mathcal{P}})$.
\begin{align*}
	\boxed{\begin{array}{cc}
			& {\displaystyle (\overline{\mathcal{P}}): \min\sum_{t=1}^{T}\sum_{i=1}^{n}w_{i}\overline{\ell}_{i}^{t}}\\
			\text{s.t. }\\
			\\
			\forall t \in C: & {\displaystyle \inner{c^{t}}{\overline{x}^t}\geq1}\\
			\\
			\forall t \in P: & {\displaystyle \inner{p^{t}}{\overline{x}^t}\leq1}\\
			\\
			\forall i,t: & {\displaystyle \overline{x}_i^t-\overline{x}_{i}^{t-1}\leq\overline{\ell}_{i}^{t}}\\
			\\
			 & \overline{x}_i^t, \overline{\ell}_{i}^{t}\geq0
	\end{array}}\qquad\boxed{\begin{array}{cc}
	& {\displaystyle (\overline{\mathcal{D}}): \max\sum_{t \in C}\overline{y}^{t}-\sum_{t \in P} \overline z^{t}}\\
	\text{s.t. }\\
	\\
	\forall i,t\in C: & c_{i}^{t}{\displaystyle \overline{y}^{t}-\overline{r}_{i}^{t}+\overline{r}_{i}^{t+1}\leq 0}\\
	\\
	\forall i,t\in P: & {\displaystyle -p_{i}^{t} z^{t}-\overline{r}_{i}^{t}+\overline{r}_{i}^{t+1}\leq 0}\\
	\\
	& {\displaystyle 0\leq \overline{r}_{i}^{t}\leq w_{i}} \\ \\
	\forall t: & \displaystyle \overline{y}^t, \overline z^t \geq 0
\end{array}}
\end{align*}
 Our goal from now on is to construct a feasible solution to the dual program $(\overline{\mathcal{D}})$ that bounds the total cost of our online algorithm.

We start with a weaker bound than the one that appears in \cref{thm:main} as a warmup.

\subsection{Warmup: an $O\left(\nicefrac{1}{\eps} \cdot\log \left(\nicefrac{d \Delta}{\eps}\right)\right)$ competitive bound}
\label{sec:warmup}
In this section, we show our first bound on the recourse of our algorithm.
\begin{theorem}[Warmup Bound]
\label{thm:warmup}
The total recourse of the algorithm is at most $O\left(\frac{1}{\eps}\cdot \log\left(\frac{d\cdot \Delta}{\eps}\right)\right)$ times the objective of $(\overline{\mathcal{D}})$.
\end{theorem}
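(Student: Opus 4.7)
The plan is to construct a feasible solution to $(\overline{\mathcal{D}})$ whose objective is at least $\Omega(\eps/\log(d\Delta/\eps))$ times the algorithm's upward recourse $U$, from which the theorem follows by weak LP duality. Introduce the uniform additive shift $\mu_i := \eps/(4d\,c_i^{\max})$ and the potential $\phi_i^t := x_i^t + \mu_i$. Since $\mu_i \le \delta_i^t := \eps/(4 d^t c_i^t)$ for every $t$, a short algebraic calculation (essentially, the function $s \mapsto (x_i^t + s)/(x_i^{t-1} + s)$ is monotone in $s$ with sign determined by whether $x_i^t \ge x_i^{t-1}$ or $x_i^t \le x_i^{t-1}$) yields two monotonicity facts: at a covering step with $c_i^t > 0$,
\[\log(\phi_i^t/\phi_i^{t-1}) \;\ge\; \log(\widehat{x}_i^t/\widehat{x}_i^{t-1}) \;=\; \frac{c_i^t y^t}{w_i},\]
whereas at a packing step with $p_i^t > 0$,
\[\log(\phi_i^{t-1}/\phi_i^t) \;\le\; \log(x_i^{t-1}/x_i^t) \;=\; \frac{p_i^t z^t}{w_i},\]
where the rightmost equalities are KKT conditions \eqref{eq:kkt2} and \eqref{eq:kkt4}.

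Next, set $B := \log(8d\Delta/\eps)$ and propose the dual
\[\overline{y}^t := y^t/B,\qquad \overline{z}^t := z^t/B,\qquad \overline{r}_i^t := w_i\left(1 - \tfrac{1}{B}\log(\phi_i^{t-1}/\mu_i)\right).\]
The identity $\overline{r}_i^t - \overline{r}_i^{t+1} = (w_i/B)\log(\phi_i^t/\phi_i^{t-1})$ combined with the two monotonicity facts immediately verifies both classes of dual constraints of $(\overline{\mathcal{D}})$ (if $c_i^t = 0$ at a covering step or $p_i^t = 0$ at a packing step, both sides are zero since $x_i$ does not move). For the box constraint, $\phi_i^{t-1} \ge \mu_i$ gives $\overline{r}_i^t \le w_i$; conversely, $\inner{c^t}{x^t} \le 1$ at every covering touch and $x$ never rising otherwise imply $x_i^t \le 1/c_i^{\min}$, hence $\phi_i^{t-1}/\mu_i \le 2/(c_i^{\min}\mu_i) = 8d\Delta/\eps = e^B$, so $\overline{r}_i^t \ge 0$ with no clipping ever invoked.

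Finally, bound $U$ in terms of this dual value. Applying $a\log(a/b) \ge a - b$ to $\widehat{x}$ with \eqref{eq:kkt2} at a covering step gives
\[\sum_i w_i(x_i^t - x_i^{t-1}) \;=\; \sum_i w_i(\widehat{x}_i^t - \widehat{x}_i^{t-1}) \;\le\; \sum_i c_i^t y^t \widehat{x}_i^t \;=\; (1+\eps/4)\,y^t,\]
since $\inner{c^t}{\widehat{x}^t} = 1 + \eps/4$; summing yields $U \le (1+\eps/4)\sum_{t\in C} y^t$. Symmetrically, the reverse inequality $b - a \ge a\log(b/a)$ with \eqref{eq:kkt4} at a packing step produces $\sum_i w_i(x_i^{t-1} - x_i^t) \ge (1+\eps) z^t$, so the total downward motion satisfies $D \ge (1+\eps)\sum_{t\in P} z^t$; and $x^0 = 0$ together with $x^T \ge 0$ imply $D \le U$. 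Combining,
\[\sum_{t \in C}\overline{y}^t - \sum_{t \in P}\overline{z}^t \;\ge\; \frac{U}{B}\!\left(\frac{1}{1+\eps/4} - \frac{1}{1+\eps}\right) \;=\; \Omega\!\left(\frac{\eps U}{B}\right),\]
so $U \le O(B/\eps) \cdot (\overline{\mathcal{D}}\text{-value}) = O\!\left(\tfrac{1}{\eps}\log(d\Delta/\eps)\right)$ times the dual objective. The main subtlety, and the source of the $1/\eps$ factor, is that the packing multipliers $z^t$ appear with a negative sign in the dual objective; the mismatch between the covering slack $\eps/4$ and the packing slack $\eps$ is precisely what forces $\sum y^t$ to dominate $\sum z^t$ by an $\Omega(\eps)$-fraction of $U$.
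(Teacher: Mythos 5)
Your proposal is correct and follows essentially the same blueprint as the paper: you construct the same dual solution (your $\overline{r}_i^t = w_i\bigl(1 - \tfrac{1}{B}\log(\phi_i^{t-1}/\mu_i)\bigr)$ is algebraically identical to the paper's $w_i\bigl(1 - \tfrac{1}{A}\log(1 + 4dc_i^{\max}x_i^{t-1}/\eps)\bigr)$ up to the harmless constant in the normalizer), verify feasibility by the same monotonicity-in-the-shift argument, and bound upward movement per covering step by $(1+\eps/4)y^t$ via Poor man's Pinsker and KKT, exactly as in \cref{lemmamovement}. The one place you diverge, and it is a genuine (if minor) simplification, is how you show the dual objective is an $\Omega(\eps)$-fraction of the recourse: where the paper invokes nonnegativity of the summed KL divergences (\cref{lemmaKL}) and then takes a carefully weighted combination with \cref{lemmamovement} in \cref{lem:bound-recourse}, you instead note the symmetric per-step lower bound $D_t \ge (1+\eps)z^t$ for packing steps, observe $D \le U$ from $x^0 = 0, x^T \ge 0$, and deduce directly that $\sum y^t - \sum z^t \ge U\bigl(\tfrac{1}{1+\eps/4} - \tfrac{1}{1+\eps}\bigr) = \Omega(\eps U)$. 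This is logically equivalent (the paper's Lemma~\ref{lemmaKL} telescopes to the same chain of inequalities), but your packaging isolates more transparently that the $\Theta(1/\eps)$ loss comes precisely from the gap between the effective covering slack $\eps/4$ and the packing slack $\eps$. The only blemishes are cosmetic: you write ``$\inner{c^t}{x^t}\le 1$ at every covering touch'' where you mean $\inner{c^t}{x^t}=1$ after the projection, and (like the paper's own warmup) you do not explicitly address the boundary $\overline{r}_i^{T+1}$, which should be set to $0$; both are easily patched.
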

Recall that $c_i^{\max} = \max_{t}c_i^t$ and $c_i^{\min} = \min_{t | c_i^t\neq 0}c_i^t$, as well as the aspect ratio 
$\Delta = \max_i c_i^{\max} / c_i^{\min}$.

We construct the following dual solution.

Let $A= \log\left(1+\frac{4 d \cdot \Delta}{\eps}\right)$. We set
\begin{align}
    \bar{y}^{\tau} &= y^{\tau}/A, \nonumber \\
    \bar{z}^{\tau} & = z^{\tau}/A, \nonumber \\
    \overline{r}_{i}^{\tau} &=
    w_i\cdot\left(1- \frac{1}{A}\log\left(1+ \frac{4d c_i^{\max}\cdot x_i^{\tau-1}}{\eps}\right)\right) \nonumber
    \intertext{(which is useful to write as)}
    &= w_i \cdot \left(1-\frac{1}{A}\left(\log\left(x_i^{\tau-1}+ \frac{\eps}{4 d \cdot c_i^{\max}}\right)- \log\left(\frac{\eps}{4 d \cdot c_i^{\max}}\right)\right)\right).
\end{align}

We start by showing that our dual is feasible.
\begin{lemma}
    The variables $(\bar{y}, \bar{z}, \bar{r})$ are a feasible dual solution to  $(\overline{\mathcal{D}})$.
\end{lemma}

\begin{proof}
For $\tau\in C$ we have
\begin{align*}
\overline{r}_{i}^{\tau}-\overline{r}_{i}^{\tau+1} & = \frac{w_i}{A} \cdot \log\left(\frac{x_i^{\tau}+ \frac{\eps}{4 d \cdot c_i^{\max}}}{x_i^{\tau-1}+ \frac{\eps}{4d \cdot c_i^{\max}}}\right)\geq \frac{w_i}{A} \log\left(\frac{x_{i}^{\tau}+ \frac{\eps}{4 d^\tau \cdot c^\tau_i}}{x_{i}^{\tau-1}+ \frac{\eps}{4 d^\tau \cdot c^\tau_i}}\right) = c_i^{
\tau} \bar{y}^{\tau}.
\end{align*}
The inequality follows since for any $i$ and $\tau\in C$, we have $x_i^{\tau} \geq x_i^{\tau-1}$ and therefore, for any $a\geq b$ the inequality $\frac{x_i^{\tau}+ a}{x_i^{\tau-1}+ a}\leq \frac{x_i^{\tau}+ b}{x_i^{\tau-1}+ b}$ holds. Similarly, for $\tau\in P$ we have
\begin{align*}
\overline{r}_{i}^{\tau}-\overline{r}_{i}^{\tau+1} & = \frac{w_i}{A}\cdot \log\left(\frac{x_i^{\tau}+ \frac{\eps}{4 d \cdot c_i^{\max}}}{x_i^{\tau-1}+ \frac{\eps}{4d \cdot c_i^{\max}}}\right)\geq \frac{w_i}{A} \log\left(\frac{x_{i}^{\tau}}{x_{i}^{\tau-1}}\right) = -p_i^{
\tau} \bar{z}^{\tau}.
\end{align*}
The inequality follows since for any $\tau\in P$, we have $x_i^{\tau} \leq x_i^{\tau-1}$ and therefore, for any $a\geq 0$ 
the inequality $\frac{x_i^{\tau}}{x_i^{\tau-1}}\leq \frac{x_i^{\tau}+ a}{x_i^{\tau-1}+ a}$ holds. Finally,
\begin{align*}
    0 & \leq \overline{r}_{i}^{\tau} = w_i\cdot\left(1- \frac{1}{A}\log\left(1+ \frac{4d c_i^{\max}\cdot x_i^{\tau-1}}{\eps}\right)\right) \leq w_i.
\end{align*}
To see this, note that for all $\tau \in C$ and all $i\in[n]$ we have $0\leq x^{\tau}_i \leq \frac{1}{c^{\tau}_i}$, 
otherwise the constraint in round $\tau$ is already satisfied upon arrival. Hence
\[0\leq \log\left(1+ \frac{4d c_i^{\max}\cdot x_i^{\tau-1}}{\eps}\right) \leq \log\left(1+ \frac{4d c_i^{\max}}{\eps c^{\tau-1}_i}\right) \leq \log\left(1+\frac{4 d \cdot \Delta}{\eps}\right)=A. \qedhere\]
\end{proof}

Next, to relate the movement of the algorithm to the dual objective, we need the following pair of crucial lemmas. First we bound the recourse in terms of just the $y$ terms.

\begin{lemma}[Bounding movement in terms of $y$]
\label{lemmamovement}
    The upward movement cost of the algorithm at any time $t\in C$ can be bounded as
 \begin{align*}
	\sum_{i}w_{i}\left(x_{i}^{t}-x_{i}^{t-1}\right)_{+} & \leq \left(1+ \frac{\eps}{4}\right)y^{t}. 
 \end{align*}
\end{lemma}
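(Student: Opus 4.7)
The plan is to show directly that $\sum_{i} w_i (x_i^t - x_i^{t-1})_+ \leq (1 + \eps/4) y^t$ using the two KKT conditions \eqref{eq:kkt1} and \eqref{eq:kkt2} available at a covering step together with one elementary inequality.

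First I would argue that the upward movement only happens on coordinates $i$ with $c_i^t \neq 0$. For $c_i^t = 0$ the algorithm explicitly sets $x_i^t = x_i^{t-1}$, and for $c_i^t \neq 0$, the unconstrained minimizer of the projection objective is $\widehat x_i = \widehat x_i^{t-1}$ (the derivative $w_i \log(\widehat x_i / \widehat x_i^{t-1})$ vanishes there), so adding the covering constraint only pushes coordinates up. Hence
\[
\sum_i w_i (x_i^t - x_i^{t-1})_+ \;=\; \sum_{i:\,c_i^t \neq 0} w_i (\widehat x_i^t - \widehat x_i^{t-1}),
\]
where I used that $\widehat x_i^t - \widehat x_i^{t-1} = x_i^t - x_i^{t-1}$ since the additive shift $\eps/(4 d^t c_i^t)$ is the same at both time steps.

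The key step is then the pointwise bound $a - b \leq a \log(a/b)$ for $a \geq b > 0$, which follows from $\log u \geq 1 - 1/u$ applied to $u = a/b$. Setting $a = \widehat x_i^t$ and $b = \widehat x_i^{t-1}$ and multiplying by $w_i$ gives
\[
w_i (\widehat x_i^t - \widehat x_i^{t-1}) \;\leq\; \widehat x_i^t \cdot w_i \log\!\left(\frac{\widehat x_i^t}{\widehat x_i^{t-1}}\right) \;=\; c_i^t\, \widehat x_i^t\, y^t,
\]
where the equality is KKT condition \eqref{eq:kkt2}.

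To finish, I would sum over the nonzero coordinates, substitute $\widehat x_i^t = x_i^t + \eps/(4 d^t c_i^t)$, and split:
\[
\sum_{i:\,c_i^t \neq 0} c_i^t\, \widehat x_i^t\, y^t \;=\; y^t \Bigl(\langle c^t, x^t\rangle + \tfrac{\eps}{4 d^t}\cdot d^t\Bigr) \;=\; y^t\bigl(1 + \tfrac{\eps}{4}\bigr),
\]
where the last equality uses KKT condition \eqref{eq:kkt1}. I do not anticipate a serious obstacle here; the only subtlety is correctly handling the additive shift (it is defined at time $t$ with parameters $d^t, c_i^t$ at both $\widehat x_i^t$ and $\widehat x_i^{t-1}$, which is exactly the ``slight abuse of notation'' flagged in the footnote) so that the telescoping $\widehat x_i^t - \widehat x_i^{t-1} = x_i^t - x_i^{t-1}$ is valid and the constant shift sums to precisely $\eps/4$.
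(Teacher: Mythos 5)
Your proof is correct and follows essentially the same route as the paper's: the poor man's Pinsker bound $a-b \le a\log(a/b)$, then KKT condition \eqref{eq:kkt2} to rewrite the log, then KKT condition \eqref{eq:kkt1} plus the count of at most $d^t$ nonzero entries to produce the $(1+\eps/4)y^t$. The only difference is cosmetic: you first establish that $x_i^t \ge x_i^{t-1}$ for all $i$ with $c_i^t \neq 0$ (which follows rigorously from $w_i\log(\widehat x_i^t/\widehat x_i^{t-1}) = c_i^t y^t \ge 0$ since both $c_i^t$ and the Lagrange multiplier $y^t$ are nonnegative), and so your final step is an exact equality; the paper instead restricts the sum to the coordinates that strictly increased and then relaxes to an inequality $\sum_{i:x_i^t > x_i^{t-1}} \frac{\eps}{4d^t} \le \frac{\eps}{4}$ and $\sum_{i:x_i^t>x_i^{t-1}} c_i^t x_i^t \le \langle c^t,x^t\rangle$, reaching the same conclusion without needing full monotonicity up front. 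Both are sound; your version has the small bonus of making the monotonicity of covering steps explicit, a fact the paper uses later anyway when proving dual feasibility.
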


\begin{proof}
From $1+x \leq e^x$ with $x = \log(b/a)$, we get the following convenient inequality (sometimes referred to as the ``Poor man's Pinkser'' inequality). For all $a,b \geq 0$,
\[a - b \leq a \log \left(\frac{a}{b}\right).\]

Using this fact, we deduce
\begin{align*}
	\sum_{i}w_{i}\left(x_{i}^{t}-x_{i}^{t-1}\right)_{+} &  = \sum_{i| \, c_i^t\neq 0}w_{i}\left(\widehat x_{i}^{t}-\widehat x_{i}^{t-1}\right)_{+}\\
 &\leq\sum_{i| \, x_i^t>x_i^{t-1}}w_{i}\cdot \widehat x_{i}^{t} \cdot\log\left(\frac{\widehat x_{i}^{t}}{\widehat x_{i}^{t-1}}\right) \\
	& \stackrel{\eqref{eq:kkt2}}{=} \sum_{i| \, x_i^t>x_i^{t-1}}\left(x_{i}^{t}+\frac{\eps}{4d^t\cdot c^t_i}\right)c_{i}^{t}y^{t}   \\
	& \leq \frac{\eps}{4} y^t + y^t \cdot \sum_{i }c_{i}^{t} x_{i}^{t}    \\
 &\stackrel{\eqref{eq:kkt1}}{=}\left(1+ \frac{\eps}{4}\right)y^{t}. \qedhere
\end{align*}

\end{proof}

To bound the recourse as a function of the full dual objective, we also need argue that subtracting the $z$ terms does not lose too much.

\begin{lemma}[Bounding $z$ in terms of $y$]
\label{lemmaKL}
The dual vectors $y$ and $z$ satisfy
\begin{align}
    \left(1+ \frac{\eps}{4}\right)\sum_{t \in C} y^t - (1+\eps)\sum_{t \in P} z^t & \geq 0
\end{align}
\end{lemma}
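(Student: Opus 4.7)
The plan is to recognize both terms on the left-hand side as arising from KL-type identities implied by the KKT conditions, apply the same ``poor man's Pinsker'' inequality $a\log(a/b) \ge a-b$ used in \Cref{lemmamovement} but summed differently, and then telescope.

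First I would multiply \eqref{eq:kkt2} by $\widehat x_i^t$ and sum over $i$ with $c_i^t \neq 0$. Using \eqref{eq:kkt1} together with the identity $\sum_{i:c_i^t\ne 0} c_i^t \widehat x_i^t = \sum_i c_i^t x_i^t + \frac{\eps}{4d^t}\cdot d^t = 1 + \frac{\eps}{4}$, this gives
\begin{align*}
\sum_{i:\, c_i^t\ne 0} w_i\, \widehat x_i^t \log\!\left(\frac{\widehat x_i^t}{\widehat x_i^{t-1}}\right) \;=\; \left(1+\tfrac{\eps}{4}\right) y^t \qquad (t\in C).
\end{align*}
Analogously, multiplying \eqref{eq:kkt4} by $x_i^t$ and invoking \eqref{eq:kkt3} produces
\begin{align*}
\sum_{i:\, p_i^t\ne 0} w_i\, x_i^t \log\!\left(\frac{x_i^t}{x_i^{t-1}}\right) \;=\; -(1+\eps)\, z^t \qquad (t\in P).
\end{align*}

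Next, I apply $a\log(a/b) \ge a-b$ termwise. For $t\in C$, this bounds each summand by $w_i(\widehat x_i^t - \widehat x_i^{t-1}) = w_i(x_i^t - x_i^{t-1})$ because both $\widehat x$ values at time $t$ use the same additive shift. For $t\in P$, the bound is $w_i(x_i^t - x_i^{t-1})$ directly. For the indices omitted from the inner sum at time $t$ (those with $c_i^t=0$ at a covering step, or $p_i^t=0$ at a packing step), the algorithm sets $x_i^t = x_i^{t-1}$, so the difference vanishes and can be added for free. Summing over all $t$ then yields
\begin{align*}
\left(1+\tfrac{\eps}{4}\right)\sum_{t\in C} y^t - (1+\eps)\sum_{t\in P} z^t \;\ge\; \sum_{t=1}^T \sum_{i=1}^n w_i\, (x_i^t - x_i^{t-1}) \;=\; \sum_{i=1}^n w_i\, x_i^T \;\ge\; 0,
\end{align*}
using $x^0 = 0$ and $x^T \ge 0$.

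There is no serious obstacle here; the only trick is picking the ``right'' weights, namely $\widehat x_i^t$ and $x_i^t$, with which to combine the KKT equations so that (a) \eqref{eq:kkt1} and \eqref{eq:kkt3} contract the right-hand sides into exactly the two terms appearing in the lemma, and (b) the one-sided Pinsker inequality converts the log-differences on the left into linear differences that telescope cleanly across time.
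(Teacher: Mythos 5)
Your proof is correct and is essentially the same argument as the paper's, just run in reverse order: you first use the KKT conditions to turn the weighted sums of log-ratios into exactly $\left(1+\tfrac{\eps}{4}\right)y^t$ and $-(1+\eps)z^t$, then apply the termwise Pinsker inequality $a\log(a/b)\ge a-b$ (which is just the non-negativity of KL decomposed termwise) and telescope, whereas the paper starts from $\wKL{\cdot}{\cdot}\ge 0$, telescopes the linear terms, and only then invokes KKT. Both versions rely on the identical ingredients — non-negativity of KL, the observation that omitted coordinates have $x_i^t=x_i^{t-1}$ so they telescope for free, and KKT (1)–(4) — and both implicitly establish the slightly stronger bound $\left(1+\tfrac{\eps}{4}\right)\sum_C y^t-(1+\eps)\sum_P z^t\ge\sum_i w_i x_i^T$.
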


\begin{proof}
By non-negativity of KL we have
\begin{align*}
	&\forall t\in C & 0 & \leq \sum_{i| \, c_i^t\neq 0}w_{i}\left[\widehat x_{i}^{t}\log\left(\frac{\widehat x_{i}^{t}}{\widehat x_{i}^{t-1}}\right) -\widehat x_i^t + \widehat x_i^{t-1}\right] = \sum_{i| \, c_i^t\neq 0}w_{i}\left[\widehat x_{i}^{t}\log\left(\frac{\widehat x_{i}^{t}}{\widehat x_{i}^{t-1}}\right) - x_i^t +  x_i^{t-1}\right]\\
	&\forall t \in P & 0 & \leq \sum_{i| \, p_i^t\neq 0}w_{i}\left[x_i^t\log\left(\frac{x_{i}^{t}}{x_{i}^{t-1}}\right) -x_i^t + x_i^{t-1}\right].
\end{align*}
Summing these over all times $t$ we get:
\begin{align}
	0 & \leq \sum_{t \in C} \sum_{i| \, c_i^t\neq 0} w_i\left[\widehat x_{i}^{t} \cdot\log\left(\frac{\widehat x_{i}^{t}}{\widehat x_{i}^{t-1}}\right) - x_i^t + x_i^{t-1}\right]  + \sum_{t \in P} \sum_{i| \, p_i^t\neq 0} w_i \left[x_i^t\log\left(\frac{x_{i}^{t}}{x_{i}^{t-1}}\right) - x_i^t + x_i^{t-1}\right] \nonumber \\
	&= \sum_{t \in C} \sum_{i| \, c_i^t\neq 0} w_i\widehat x_{i}^{t} \cdot \log\left(\frac{\widehat x_{i}^{t}}{\widehat x_{i}^{t-1}}\right)  + \sum_{t \in P} \sum_{i| \, p_i^t\neq 0} w_i  \cdot x_i^t\log\left(\frac{x_{i}^{t}}{x_{i}^{t-1}}\right)- \sum_{i=1}^n  w_i(x^T_i + x^0_i) \label{eq:use_kl_limit} \\
	&  \stackrel{\eqref{eq:kkt2} \ \& \ \eqref{eq:kkt4}}{=} \quad \sum_{t\in C} y^t\sum_{i| \, c^t_i \neq 0} \left(c^t_i x_i^t + \frac{\eps}{4d^t}\right) - \sum_{t \in P} z^t\sum_{i| \, p_i^t\neq 0}p^t_i x_i^t  - \sum_{i=1}^n w_i x^T_i \nonumber \\
	& \stackrel{\eqref{eq:kkt1} \ \& \ \eqref{eq:kkt3}}{\leq} \quad \left(1+ \frac{\eps}{4}\right)\sum_{t \in C} y^t - (1+\eps)\sum_{t \in P} z^t. \label{eq:ineqyz}
\end{align}
Equation \eqref{eq:use_kl_limit} follows from \eqref{eq:kl_limit}. Inequality \eqref{eq:ineqyz} uses the fact that there are at most $d^t$ indices $i$ for which $c_i^t \neq 0$.   
\end{proof}

By a careful combination of \cref{lemmamovement} and \cref{lemmaKL}, we get the following bound on the recourse in terms of the dual objective. We write the lemma in terms of a more general $\widetilde{y}^{\tau}$ since we will need it later. For now, the reader should set $\alpha=0$, i.e. $\widetilde{y}^{\tau}=y^{\tau}$.

\begin{lemma}\label{lem:bound-recourse}
    Let $\widetilde{y}^{\tau}$ be such that
    $\sum_{\tau \in C}\widetilde{y}^{\tau} \geq \left(1 - \alpha \eps\right)\sum_{\tau \in C}y^{\tau}$ for $\alpha \in[0,\nicefrac{1}{10}]$. Then, 
 \begin{align}
\sum_{t=1}^{T}\sum_{i}w_{i}\left(x_{i}^{t}-x_{i}^{t-1}\right)_{+} & = O\left(\frac{1}{\eps}\right) \left[\sum_{t \in C} \widetilde{y}^t- \sum_{t \in P}z^t \right] \label{main-inequality}
\end{align}
\end{lemma}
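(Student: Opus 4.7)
The plan is to combine \Cref{lemmamovement} with \Cref{lemmaKL} and the hypothesis $\sum_{\tau \in C}\widetilde{y}^\tau \geq (1 - \alpha\eps)\sum_{\tau \in C} y^\tau$. First I would observe that all upward movement is generated at covering steps: at any $t \in P$, the KKT condition \eqref{eq:kkt4} together with $z^t \geq 0$ forces $x_i^t \leq x_i^{t-1}$ whenever $p_i^t \neq 0$, and the algorithm explicitly sets $x_i^t = x_i^{t-1}$ when $p_i^t = 0$. Hence $(x_i^t - x_i^{t-1})_+ = 0$ for every $i$ and every $t \in P$. Combining with \Cref{lemmamovement}, the total upward recourse is bounded by
\[
\sum_{t=1}^T \sum_i w_i(x_i^t - x_i^{t-1})_+ \;\leq\; \left(1 + \tfrac{\eps}{4}\right) \sum_{t \in C} y^t.
\]

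Next, I would reduce everything to bounding $\sum_{t \in C} y^t$ in terms of the quantity $\sum_{t \in C}\widetilde{y}^t - \sum_{t \in P} z^t$. Rearranging \Cref{lemmaKL} gives $\sum_{t \in P} z^t \leq \tfrac{1 + \eps/4}{1 + \eps}\sum_{t \in C} y^t$. Plugging this in alongside the hypothesis, I obtain
\[
\sum_{t \in C}\widetilde{y}^t - \sum_{t \in P} z^t \;\geq\; \Big[(1 - \alpha\eps) - \tfrac{1 + \eps/4}{1 + \eps}\Big]\sum_{t \in C} y^t.
\]
A short calculation shows the bracketed factor equals $\frac{3\eps/4 - \alpha\eps - \alpha\eps^2}{1 + \eps}$, which is $\Omega(\eps)$ whenever $\alpha \leq 1/10$ and $\eps \leq 1$ (this is precisely why the hypothesis allows $\alpha$ up to $1/10$: the $3/4$ slack coming from the gap between $1+\eps/4$ and $1+\eps$ in \Cref{lemmaKL} must dominate the loss $\alpha$ from the $\widetilde{y}$ hypothesis). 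Thus $\sum_{t \in C} y^t \leq O(1/\eps)\left[\sum_{t \in C}\widetilde{y}^t - \sum_{t \in P} z^t\right]$, and chaining with the previous display gives the claimed bound.

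The only subtle point, and what I would expect to be the main obstacle, is verifying that the constants line up so that the right-hand side is provably nonnegative (and in fact $\Omega(\eps)\sum y^t$). Concretely, one must ensure that the $3\eps/4$ margin from \Cref{lemmaKL} is not entirely eaten up by $\alpha\eps$ plus lower-order $\eps^2$ terms; the bound $\alpha \leq 1/10$ leaves a comfortable constant fraction of slack, and this slack is exactly what supplies the $\Omega(\eps)$ denominator in the final inequality. Everything else is routine algebra, so no separate sub-lemma is needed.
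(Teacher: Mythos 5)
Your proof is correct and follows essentially the same approach as the paper, combining \cref{lemmamovement}, \cref{lemmaKL}, and the hypothesis on $\widetilde y$ in an algebraically equivalent way. The paper organizes the computation by adding a $\frac{2+\eps}{\eps}$-multiple of \cref{lemmaKL} to the summed \cref{lemmamovement} and then substituting via the auxiliary bound $(1+\nicefrac{\eps}{4})\sum_{t\in C}y^t \le (1+\nicefrac{\eps}{2})\sum_{t\in C}\widetilde y^t$, whereas you directly isolate the lower bound $\sum_{t\in C}\widetilde y^t - \sum_{t\in P}z^t \ge \Omega(\eps)\sum_{t\in C}y^t$ and divide; your version has the minor merit of explicitly justifying (via the KKT conditions) that upward movement vanishes at packing steps, and of exhibiting exactly where the constraint $\alpha\le\nicefrac{1}{10}$ is consumed.
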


\begin{proof}
Observe that
\begin{align}
\left(1+ \frac{\eps}{4}\right)\sum_{t \in C} y^t &\leq \frac{1+ \frac{\eps}{4}}{1 - \alpha \eps}\cdot \sum_{t \in C} \widetilde{y}^t \nonumber \\
&\leq \left(1+\frac{\eps}{4}\right)\cdot \left(1+2\alpha \eps\right) \cdot
\sum_{t \in C} \widetilde{y}^t \nonumber\\
&\leq \left(1+\frac{\eps}{4} + 2\alpha \eps + \frac{2\alpha \eps}{4}\right) \cdot \sum_{t \in C} \widetilde{y}^t \nonumber\\
&\leq \left(1+\frac{\eps}{2}\right)\cdot \sum_{t \in C} \widetilde{y}^t, \label{dualbound_obs}
\end{align}
where the first inequality follows from our assumption, and the second since 
$1/(1-x) \leq 1+2x$ for $x\in [0,\frac{1}{2}]$ and $\eps\leq 1$.

Summing \cref{lemmamovement} over all times $t$ and adding $\frac{2+\eps}{\eps}$ of \cref{lemmaKL}, we get:
\begin{align*}
\sum_{i}w_{i}\left(x_{i}^{t}-x_{i}^{t-1}\right)_{+} 
& \leq \sum_{t \in C} \left(1+ \frac{\eps}{4}\right)y^t +  \frac{2+\eps}{\eps} \left[\sum_{t \in C}\left(1+\frac{\eps}{4}\right)y^t- \sum_{t \in P}(1+\eps) z^t \right]  \\
& \leq \sum_{t \in C} \left(1+ \frac{\eps}{2}\right)\widetilde{y}^t +  \frac{2+\eps}{\eps} \left[\sum_{t \in C}\left(1+\frac{\eps}{2}\right)\widetilde{y}^t- \sum_{t \in P}(1+\eps) z^t \right]  \\
& = \frac{(2+\eps)(1+\eps)}{\eps}\left[\sum_{t \in C} \widetilde{y}^t- \sum_{t \in P}z^t \right]\\
&= O\left(\frac{1}{\eps}\right) \left[\sum_{t \in C} \widetilde{y}^t- \sum_{t \in P}z^t \right],
	\end{align*}
 where the second inequality is from \eqref{dualbound_obs} above.
\end{proof}

Putting things together, we can conclude with the theorem of this section.

\begin{proof}[Proof of \cref{thm:warmup}]
    By \cref{lem:bound-recourse} we can bound the total movement of the algorithm by
 \begin{align*}
	\sum_{t=1}^{T}\sum_{i}w_{i}\left(x_{i}^{t}-x_{i}^{t-1}\right)_{+} & = O\left(\frac{1}{\eps}\right) \cdot \left[\sum_{t \in C} y^t - \sum_{t \in P} z^t \right] =  O\left(\frac{\log\left(\frac{d \cdot \Delta}{\eps}\right)}{\eps}\right) \cdot \left[\sum_{t \in C} \bar{y}^t - \sum_{t \in P} \bar{z}^t \right].
 \end{align*}
 Since $(\overline y, \overline z, \overline r)$ is a feasible dual solution, the theorem follows from weak duality.
\end{proof}

\subsection{Removing the $\Delta$: An $O\left(\nicefrac{1}{\eps}\log \left(\nicefrac{d}{\eps}\right)\right)$ competitive bound}
\label{sec:removing_delta}
We move to proving the more refined bound. 
The idea is once again to fit a dual to $\overline{\mathcal{D}}$, but this time we need to construct our dual solution more delicately. For one, we need to overcome the following significant barrier: the dual $y$ can be computed monotonically online, however one can show that no monotone dual solution can avoid a dependence on $\Delta$, see \cite{BN09}. Nevertheless, we show that it is possible to \emph{decrease} some coordinates of the dual carefully, to achieve the following bound.

\begin{theorem}[$\Delta$-free Bound]
\label{thm:sparse}
The total recourse of the algorithm is at most $O\left(\nicefrac{1}{\eps}\cdot \log\left(\nicefrac{d}{\eps}\right)\right)$ times the objective of $(\overline{\mathcal{D}})$.
\end{theorem}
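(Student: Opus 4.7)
The plan is to construct a feasible dual solution $(\bar{y}, \bar{z}, \bar{r})$ to $(\overline{\mathcal{D}})$ along the same lines as in the warmup (\cref{sec:warmup}), but using the smaller normalization $A'=\log(1+4d/\eps)$ in place of $A=\log(1+4d\Delta/\eps)$. The main difficulty is that if one simply replaces $A$ by $A'$ in the warmup construction, the nonnegativity $\bar{r}_i^\tau\geq 0$ need no longer hold, because it requires $\log(1+4d\,c_i^{\max} x_i^{\tau-1}/\eps)\leq A'$, i.e.\ $c_i^{\max} x_i^{\tau-1}\leq 1$, which can fail whenever coordinate $i$ has seen a much larger coefficient at some earlier covering constraint. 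To sidestep this, I will replace the global quantity $c_i^{\max}$ in the definition of $\bar{r}_i^\tau$ by a time-varying effective coefficient $\tilde c_i^\tau$ that tracks the coefficient of the most recent nonzero covering constraint on coordinate $i$ (set to some default value until the first such constraint appears). The key observation is that between consecutive covering constraints involving coordinate $i$, the value of $x_i$ can only decrease, since packing steps decrease it and intervening covering steps leave it unchanged. Together with \eqref{eq:kkt1} this yields $\tilde c_i^\tau x_i^{\tau-1}\leq 1$, which restores $\bar{r}_i^\tau\in[0,w_i]$ under the smaller normalization $A'$.

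Next I will verify the feasibility constraints of $(\overline{\mathcal{D}})$. Packing steps $\tau\in P$ are handled exactly as in the warmup, since $\tilde c_i$ is unchanged across such steps and $x_i$ is monotone there. Covering steps $\tau\in C$ at which the effective coefficient does not change (i.e.\ $\tilde c_i^{\tau+1}=\tilde c_i^\tau=c_i^\tau$ for all relevant $i$) follow directly from the KKT identity \eqref{eq:kkt2}. The delicate case is a covering step at which $\tilde c_i^\tau$ changes: in particular, when $c_i^\tau<\tilde c_i^\tau$, the quantity $\bar{r}_i^{\tau+1}$ can rebound upward and the feasibility inequality $\bar{r}_i^\tau-\bar{r}_i^{\tau+1}\geq c_i^\tau \bar{y}^\tau$ can fail for $\bar{y}^\tau=y^\tau/A'$.

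To repair this, I invoke \cref{lem:bound-recourse} rather than setting $\bar{y}^\tau=y^\tau/A'$ directly. Concretely, I define
\begin{align*}
\widetilde{y}^\tau=\min\!\Bigl(y^\tau,\; \min_{i:\,c_i^\tau\neq 0}\tfrac{A'}{c_i^\tau}\bigl(\bar{r}_i^\tau-\bar{r}_i^{\tau+1}\bigr)\Bigr)\quad\text{and}\quad \bar{y}^\tau=\widetilde{y}^\tau/A',
\end{align*}
which by construction makes $(\bar{y},\bar{z},\bar{r})$ dual feasible for $(\overline{\mathcal{D}})$. It then suffices to show $\sum_{\tau\in C}\widetilde{y}^\tau\geq(1-\alpha\eps)\sum_{\tau\in C}y^\tau$ for some $\alpha\in[0,\nicefrac{1}{10}]$; combining this with \cref{lem:bound-recourse} and weak duality then yields total recourse at most $O((1/\eps)\log(d/\eps))$ times the optimum of $(\overline{\mathcal{D}})$, which is the conclusion of \cref{thm:sparse}.

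The main obstacle is the charging argument establishing the slack bound $\sum_\tau(y^\tau-\widetilde{y}^\tau)=O(\eps)\sum_\tau y^\tau$. My plan is to telescope the KKT identity \eqref{eq:kkt2} along the sequence of covering constraints touching each coordinate $i$ and to relate the per-step loss to the additive perturbation $\eps/(4d^\tau c_i^\tau)$ that is built into both the information projection and the KKT identity. Intuitively, the perturbation is the only place where $\eps$ appears multiplicatively in the KKT condition, and hence it is exactly the slack available to absorb the mismatch between the old effective coefficient $\tilde c_i^\tau$ and the new $c_i^\tau$. The reason this succeeds without a $\log\Delta$ factor is that the telescoping accumulates only the per-step $\eps$-perturbations, whereas the aspect ratio $\Delta$ only controls the pointwise single-step ratio — a distinction that is invisible to the monotone dual of the warmup but is exactly what the decrease-trick enabled by \cref{lem:bound-recourse} is designed to exploit.
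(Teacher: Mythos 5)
Your plan correctly identifies the obstruction (the warmup's $\bar r$ can go negative under the smaller normalization $A'=\log(1+4d/\eps)$), and your fix for nonnegativity is sound: replacing the global $c_i^{\max}$ by the coefficient $\tilde c_i^\tau$ of the most recent nonzero covering constraint on coordinate $i$ does give $\tilde c_i^\tau x_i^{\tau-1}\leq 1$ (since $x_i$ only decreases between covering steps touching $i$, and $x_i\leq 1/c_i$ right after such a step by \eqref{eq:kkt1}), and hence $0\leq\bar r_i^\tau\leq w_i$. The capping definition $\widetilde y^\tau=\min\bigl(y^\tau,\min_i (A'/c_i^\tau)(\bar r_i^\tau-\bar r_i^{\tau+1})\bigr)$ also produces a feasible dual by construction. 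The gap is entirely in the slack bound $\sum_\tau\widetilde y^\tau\geq(1-\eps/10)\sum_\tau y^\tau$, which you leave as a plan, and I believe the plan fails.

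The problem is that your cap attenuates the \emph{current} $y^\tau$, and this can be catastrophic. Take $n=d=1$, $w=1$: at time $1$ a covering constraint $c^1=1$ sets $x^1=1$ and $y^1=\log(1+4/\eps)=A'$; a packing step with $p^2=2$ then drives $x^2=(1+\eps)/2\approx 1/2$; at time $3$ a covering constraint $c^3=1/M$ with $M\gg 1$ forces $x^3=M$ and $y^3\approx MA'$. Here $\tilde c^3=1$, so $\bar r^3\approx 1-(1/A')\log(1+2/\eps)\approx(\log 2)/A'$, while $\bar r^4=1-(1/A')\log(1+4/\eps)=0$. Your cap yields $\widetilde y^3\leq(A'/c^3)(\bar r^3-\bar r^4)\approx M\log 2$, which is a $(\log 2)/A'=o(1)$ fraction of $y^3$. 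Hence $y^3-\widetilde y^3=\Omega(1)\cdot\sum_\tau y^\tau$, far above $\eps/10$. No telescoping along the KKT identity repairs this: the excess you must shave off carries a $\log(\tilde c_i^\tau/c_i^\tau)$ factor, and the ratio $\tilde c_i^\tau/c_i^\tau$ can be as large as $\Delta$ with no $\eps$ anywhere in sight.

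The paper resolves the difficulty with a mechanism you did not hit on: Algorithm~\ref{alg:sp_fixy} keeps $\widetilde y^\ell=y^\ell$ intact and instead \emph{retroactively decreases earlier entries} $\widetilde y^{\tau'}$, and only those $\tau'$ for which $c_i^{\tau'}\geq 10d^\ell c_i^\ell/\eps$. This thresholding yields a favorable exchange rate: decreasing a past $\widetilde y^{\tau'}$ by $\Delta\widetilde y$ consumes $c_i^{\tau'}\Delta\widetilde y\geq(10d^\ell/\eps)c_i^\ell\Delta\widetilde y$ of the budget $c_i^\ell y^\ell$ while costing only $\Delta\widetilde y$ of dual mass, so the total loss at step $\ell$ is at most $\eps y^\ell/(10d^\ell)$ per nonzero coordinate, hence $\eps y^\ell/10$ overall (inequality \eqref{delta_small}); this is precisely \eqref{ineq2}. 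Because the resulting $\widetilde y$ no longer obeys a clean KKT identity, the paper also abandons the warmup-style telescoping $\bar r$ and instead sets $\bar r_i^s$ to a running maximum of the partial sums, bounded via \cref{lem:subsetlem} and the case analysis in the proof of \cref{lem:sp_dualok}. The decrease-past-duals idea with its budget accounting is the real content of the proof and is missing from your proposal.
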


We will construct a vector $\widetilde y$ with the following properties.

\begin{lemma}\label{lem:sp_dualok}
Given $y$, there is a vector $\widetilde{y}$ such that
\begin{align}
    &\forall 1 \leq s \leq t \leq T, \ \forall i  &  \sum_{\substack{\tau \in C: \\ \tau \in [s, t]}}c_i^{\tau} \widetilde{y}^{\tau} - \sum_{\substack{\tau \in P: \\ \tau \in[s, t]}}p^{\tau}_i z^{\tau} &\leq w_i \cdot \log\left(1 + \frac{40 d^2}{\eps^2}\right) \label{ineq1}\\
    & \text{and} & \sum_{\tau \in C}\widetilde{y}^{\tau} &\geq \left(1 - \frac{\eps}{10}\right)\sum_{\tau \in C}y^{\tau}. \label{ineq2}
\end{align}
\end{lemma}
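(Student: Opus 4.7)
\textbf{Proof plan for Lemma~\ref{lem:sp_dualok}.} My plan is to construct $\widetilde y$ greedily via a coordinate-wise reserve process. Let $L := \log\!\bigl(1 + 40 d^2/\eps^2\bigr)$. I would maintain reserves $\tilde r_i^\tau$ for each coordinate $i$, initialized at $\tilde r_i^1 = w_i L$. At each covering round $\tau \in C$, set
\[ \widetilde y^\tau := \min\!\Bigl\{y^\tau,\; \min_{i \,:\, c_i^\tau > 0} \tilde r_i^\tau / c_i^\tau\Bigr\}, \qquad \tilde r_i^{\tau+1} := \tilde r_i^\tau - c_i^\tau \widetilde y^\tau. \]
At each packing round $\tau \in P$, update with capping $\tilde r_i^{\tau+1} := \min\!\bigl(w_i L,\; \tilde r_i^\tau + p_i^\tau z^\tau\bigr)$. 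By construction the invariant $\tilde r_i^\tau \in [0, w_i L]$ holds throughout.

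Property~\eqref{ineq1} then follows immediately by telescoping: for any interval $[s,t]$ and any $i$,
\[ \sum_{\tau \in C \cap [s,t]} c_i^\tau \widetilde y^\tau - \sum_{\tau \in P \cap [s,t]} p_i^\tau z^\tau \;=\; \tilde r_i^s - \tilde r_i^{t+1} - (\text{cap losses at packing steps in } [s,t]) \;\leq\; w_i L, \]
since $\tilde r_i^s \leq w_i L$, $\tilde r_i^{t+1} \geq 0$, and the cap losses are nonnegative.

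The main obstacle is property~\eqref{ineq2}, bounding the total reduction $\sum_\tau (y^\tau - \widetilde y^\tau)$ by $(\eps/10) \sum_\tau y^\tau$. A key structural fact is that KKT forces $c_i^\tau y^\tau = w_i \log(\widehat x_i^\tau / \widehat x_i^{\tau-1}) \leq w_i \log(1 + 4d/\eps)$ in any single round, since $\widehat x_i^\tau \leq (1 + \eps/(4d))/c_i^\tau$ while $\widehat x_i^{\tau-1} \geq \eps/(4 d c_i^\tau)$. Crucially, this per-round perturbation is much smaller than the reserve capacity $w_i L \approx 2 w_i \log(d/\eps)$. A reduction ($\widetilde y^\tau < y^\tau$) happens only when some bottleneck coordinate $i^*$ has $\tilde r_{i^*}^\tau$ near zero, meaning that sufficient covering activity on $i^*$ has drained a previously (nearly) full reserve. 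My plan is to amortize each unit of reduction against the packing contributions $p_{i^*}^\tau z^\tau$ that caused the preceding refills on $i^*$; aggregating across coordinates and invoking Lemma~\ref{lemmaKL} (which gives $(1+\eps)\sum_{\tau \in P} z^\tau \leq (1+\eps/4)\sum_{\tau \in C} y^\tau$) then bounds the total reduction by $O(\eps) \sum_\tau y^\tau$, with the constant $40$ in the definition of $L$ tuned to yield exactly the factor $\eps/10$. The hardest part is making this refill-versus-reduction charging tight across all coordinates, since reserves can oscillate many times and bottleneck events at different rounds can involve different coordinates whose packing contributions are unevenly distributed.
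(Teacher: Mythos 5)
Your telescoping argument for \eqref{ineq1} is fine, but the approach to \eqref{ineq2} has a fundamental gap, and the counterexample lives entirely in a \emph{covering-only} instance, where your amortization against packing refills has nothing to charge to.

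\textbf{Counterexample.} Take a single active coordinate $i$ with $w_i=1$, $d^\tau=1$ for all $\tau$, no packing rounds, and covering coefficients $c_i^\tau=\alpha^{1-\tau}$ for a fixed $\alpha>1$ (so each new constraint is violated: $c_i^\tau x_i^{\tau-1}=\alpha^{-1}<1$). The algorithm sets $x_i^\tau=\alpha^{\tau-1}$, and by \eqref{eq:kkt2} the reserve drop is
\[
c_i^\tau y^\tau \;=\; \log\!\frac{\widehat x_i^\tau}{\widehat x_i^{\tau-1}} \;=\; \log\!\frac{1+\eps/4}{1/\alpha+\eps/4}\;=:\;D\quad(\tau\geq 2),\qquad c_i^1 y^1=\log(1+4/\eps)=:L'.
\]
Hence $y^\tau=D\alpha^{\tau-1}$ grows \emph{geometrically}. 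Your reserve $\tilde r_i^\tau$ starts at $L=\log(1+40d^2/\eps^2)$, never gets refilled (there is no packing), and drops by $D$ each round, so it reaches $0$ after $k\approx 1+(L-L')/D$ covering rounds, a number that depends only on $\eps$ and $\alpha$, not on the horizon $m$. From round $k$ onward your rule forces $\widetilde y^\tau=0$. Then $\sum_{\tau\leq k}\widetilde y^\tau$ is of order $D\alpha^k$, while $\sum_{\tau\leq m}y^\tau$ is of order $D\alpha^m$, so $\sum\widetilde y / \sum y\to 0$ as $m\to\infty$. This violates \eqref{ineq2} arbitrarily badly, and enlarging the constant $40$ (and hence $L$) only increases $k$ by a constant while $m$ remains unbounded. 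Your intuition that the per-round drop $w_i\log(1+4d/\eps)$ is ``much smaller'' than $w_i L$ is also off: it is about $L/2$, so the reserve survives only $O(1)$ rounds of worst-case drops.

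\textbf{Why the paper's construction avoids this.} The paper never cuts a round based on a running reserve; instead, when processing a new covering round $\ell$, it only reduces $\widetilde y^\tau$ at \emph{earlier} rounds whose coefficient is large relative to the current one, $c_i^\tau\geq 10\,d^\ell c_i^\ell/\eps$. Because those candidates have coefficients $\Omega(d/\eps)$ times bigger, the total reduction charged to round $\ell$ is at most $(\eps/(10d^\ell))\,y^\ell$ (this is exactly \eqref{delta_small}), which yields \eqref{ih2} \emph{per round}; there is nothing left to amortize. In the geometric example above, the paper's rule essentially zeroes $\widetilde y^{\ell-1}$ at step $\ell$ while keeping the dominant last term $y^m$, and since the series is geometric the kept mass is a $(1-1/\alpha)=(1-\eps/10)$ fraction. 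Then \eqref{ineq1} follows not from a reserve invariant but from Lemma~\ref{lem:subsetlem} applied to the surviving indices $S$, for which $c_i^{\max}(S)\leq 10dc_i^\ell/\eps$ by construction. Your scheme is blind to the \emph{sizes} of $c_i^\tau$ when deciding which round to cut, and that is precisely what the $\Delta$-free bound needs. If you want to salvage a reserve-style argument, you would have to make the reserve coefficient-aware, e.g.\ let the reserve track $\log(1+4dc_i^{\max}(\cdot)x_i/\eps)$-type quantities and reset it when $c_i$ drops substantially, which essentially reconstructs the paper's proof.
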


Once we have this, we can define our dual solution as follows. Let $A= \log\left(1 + \frac{40 d^2}{\eps^2}\right)$. We set,
\begin{align*}
    \forall s\in C && \bar{y}^{s} &= \widetilde{y}^{s}/A, \\
    \forall s\in P && \bar{z}^{s} & = z^{s}/A, \\
    && \overline{r}_{i}^{T+1} &= 0,\\
    \forall s \in [T] && \overline{r}_{i}^{s} &= \frac{1}{A} \cdot \max\left(0, \ \max_{t \geq s}\left\{\sum_{\substack{\tau \in C: \\ \tau \in [s, t]}}c_i^{\tau} \widetilde{y}^{\tau} - \sum_{\substack{\tau \in P: \\ \tau \in [s, t]}}p^{\tau}_i z^{\tau}\right\}\right).
\end{align*}
Before proving \cref{lem:sp_dualok}, let us understand why this dual solution is feasible and implies \cref{thm:sparse}.

\begin{lemma}
    The variables $(\bar{y}, \bar{z}, \bar{r})$ are a feasible dual solution to  $(\overline{\mathcal{D}})$.
\end{lemma}

\begin{proof}
We start by proving the feasibility of the dual solution.
The constraints $\overline{r}_{i}^s \geq 0$ hold by construction. For a sequence $a_1, a_2, \ldots, a_T$ of positive and negative numbers is holds that for any $s \in [T]$:
\begin{align*}
\max\left(0, \ \max_{t | \, s\leq t \leq T}\left\{\sum_{\tau \in [s, t]}a_\tau\right\}\right)  
& \geq \max\left(a_s, \ a_s+\max_{t | \, s+1\leq t \leq T}\left\{\sum_{\tau \in [s+1, t]}a_\tau\right\}\right) \\
&= a_s + \max\left(0, \ \max_{t | \, s+1\leq t \leq T}\left\{\sum_{\tau \in [s+1, t]}a_\tau\right\}\right) 
\end{align*}
By this observation and plugging $a_t=c_i^{\tau} \overline{y}^{\tau}$, if $t\in C$ or $a_t= -p^{\tau}_i \bar{z}^{\tau}$ for $t\in P$, we immediately get that $\overline{r}_{i}^{s} \geq c_i^{\tau} \overline{y}^{\tau} + \overline{r}_{i}^{s+1}$, or similarly $\overline{r}_{i}^{s} \geq -p^{\tau}_i \bar{z}^{\tau} + \overline{r}_{i}^{s+1}$.

Finally, by inequality \eqref{ineq1} of \cref{lem:sp_dualok} we get:
\begin{align*}
\overline{r}_{i}^{s} &= \frac{1}{A} \cdot \max\left(0, \ \max_{t \geq s}\left\{\sum_{\substack{\tau \in C: \\ \tau \in [s, t]}}c_i^{\tau} \widetilde{y}^{\tau} - \sum_{\substack{\tau \in P: \\ \tau \in [s, t]}}p^{\tau}_i z^{\tau}\right\}\right) \leq \frac{1}{A} \cdot w_i \cdot \log\left(1 + \frac{40 d^2}{\eps^2}\right) \leq w_i,
\end{align*}
and the dual solution is feasible.
\end{proof}
   Our main theorem now follows by putting everything together.
 \begin{proof}[Proof of \cref{thm:sparse}]
By \eqref{ineq2} of \cref{lem:sp_dualok} we have $\sum_{\tau \in C}\widetilde{y}^{\tau} \geq \left(1 - \frac{\eps}{10}\right)\sum_{\tau \in C}y^{\tau}$, so we can reuse \cref{lem:bound-recourse} to bound the total recourse as
 \begin{align*}
	\sum_{t=1}^{T}\sum_{i}w_{i}\left(x_{i}^{t}-x_{i}^{t-1}\right)_{+} & = O\left(\frac{1}{\eps}\right) \cdot \left[\sum_{t \in C} \widetilde{y}^t - \sum_{t \in P} z^t \right] =  O\left(\frac{\log\left(\frac{d}{\eps}\right)}{\eps}\right) \cdot \left[\sum_{t \in C} \bar{y}^t - \sum_{t \in P} \bar{z}^t \right]. 
 \end{align*}
 Since $(\overline y, \overline z, \overline r)$ is a feasible dual solution, the theorem follows from weak duality.
 \end{proof}

The remainder of the section is devoted to proving \cref{lem:sp_dualok}. First we need an auxiliary lemma.

\begin{lemma}
    \label{lem:subsetlem}
    For any $s$ and $t$ such that $1\leq s\leq t\leq T$, any $i\in [n]$, and any $S \subseteq C \cap [s,t]$, define $c_i^{\max}(S) = \max_{\tau \in S} \{c_i^\tau\}$. Then
	\begin{align*}
		\frac{1}{w_i}\left(\sum_{\substack{\tau \in S: \\ \tau \in[s,t]}}c^\tau_iy^\tau - \sum_{\substack{\tau \in P: \\ \tau \in [s,t]}} p_i^\tau z^\tau\right) & \leq \log\left(1+\frac{4 d \cdot c_i^{\max}(S)}{\eps} \cdot  x_i^{t}\right).
	\end{align*}
\end{lemma}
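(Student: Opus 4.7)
The high-level plan is to rewrite every dual term appearing in the sum as a log ratio using the KKT conditions \eqref{eq:kkt2} and \eqref{eq:kkt4}, replace the per-step additive shift by a single common shift $b := \frac{\eps}{4d \cdot c_i^{\max}(S)}$, extend the sum to run over \emph{all} $\tau\in[s,t]$ (not just those in $S\cup P$), and then telescope.

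More concretely, first I would apply \eqref{eq:kkt2} to write, for each $\tau\in S\subseteq C$, $c_i^\tau y^\tau = w_i \log\bigl(\frac{x_i^\tau + a_\tau}{x_i^{\tau-1}+a_\tau}\bigr)$ with $a_\tau := \frac{\eps}{4 d^\tau c_i^\tau}$, and apply \eqref{eq:kkt4} to write, for each $\tau\in P\cap[s,t]$, $-p_i^\tau z^\tau = w_i\log\bigl(\frac{x_i^\tau}{x_i^{\tau-1}}\bigr)$. Next, use the elementary monotonicity fact that $f(a)=\frac{u+a}{v+a}$ is decreasing in $a$ when $u\geq v$ and increasing in $a$ when $u\leq v$. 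Combined with the observation that $x_i^\tau\geq x_i^{\tau-1}$ at covering steps (immediate from the KKT conditions, since $y^\tau\geq 0$) and $x_i^\tau\leq x_i^{\tau-1}$ at packing steps, this lets me replace each $a_\tau$ (for $\tau\in S$) and each $0$ (for $\tau\in P$) by the smaller common constant $b$, since $a_\tau\geq b$ holds from $d^\tau\leq d$ and $c_i^\tau\leq c_i^{\max}(S)$. The net effect is
\begin{align*}
c_i^\tau y^\tau &\leq w_i\log\!\left(\tfrac{x_i^\tau+b}{x_i^{\tau-1}+b}\right) \quad(\tau\in S), &
-p_i^\tau z^\tau &\leq w_i\log\!\left(\tfrac{x_i^\tau+b}{x_i^{\tau-1}+b}\right)\quad(\tau\in P\cap[s,t]).
\end{align*}

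The final step is to notice that for every $\tau\in(C\setminus S)\cap[s,t]$ the quantity $w_i\log\bigl(\frac{x_i^\tau+b}{x_i^{\tau-1}+b}\bigr)$ is non-negative (again by $x_i^\tau\geq x_i^{\tau-1}$ at covering steps; and equal to $0$ when $c_i^\tau=0$, since then $x_i^\tau=x_i^{\tau-1}$). Therefore I can freely add these terms to the right-hand side without loss, after which the sum telescopes:
\begin{align*}
\sum_{\tau\in S}c_i^\tau y^\tau -\sum_{\tau\in P\cap[s,t]}p_i^\tau z^\tau \;\leq\; w_i\sum_{\tau=s}^{t}\log\!\left(\tfrac{x_i^\tau+b}{x_i^{\tau-1}+b}\right) \;=\; w_i\log\!\left(\tfrac{x_i^t+b}{x_i^{s-1}+b}\right).
\end{align*}
Using $x_i^{s-1}\geq 0$ to lower-bound the denominator by $b$ and substituting $b=\frac{\eps}{4d\cdot c_i^{\max}(S)}$ yields the claimed inequality.

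The only genuinely delicate point is checking that the monotonicity of $f(a)=\frac{u+a}{v+a}$ works in the right direction simultaneously for covering and packing rounds: for covering we have $u\geq v$ and we \emph{shrink} the shift, while for packing we have $u\leq v$ and we \emph{grow} the shift. Both operations increase the log ratio, which is precisely what is needed for the inequality to go the right way. Once this is verified, the rest is bookkeeping and telescoping.
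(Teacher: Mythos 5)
Your proposal is correct and follows essentially the same route as the paper's proof: apply the KKT conditions to convert each dual term into a log ratio, use the monotonicity of $a\mapsto\frac{u+a}{v+a}$ together with the direction of $x_i$'s movement at covering versus packing steps to pass to the common shift $b = \frac{\eps}{4d\cdot c_i^{\max}(S)}$, add the nonnegative summands for $\tau\in(C\cap[s,t])\setminus S$, telescope, and lower-bound the denominator by $b$. The one loose phrase is calling $b$ ``the smaller common constant'' — for packing steps you replace a shift of $0$ by $b>0$, so there $b$ is larger — but you correctly identify and resolve exactly this point in your final paragraph, so the argument is sound.
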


\begin{proof}
We have that
\begin{align}
\frac{1}{w_i}\left(\sum_{\substack{\tau \in S: \\ \tau \in[s, t]}}c^\tau_i y^\tau- \sum_{\substack{\tau\in P: \\ \tau \in[s, t]}} p_i^\tau z^\tau\right) &= \sum_{\substack{\tau\in S: \\ \tau \in[s, t], \, c^\tau_i\neq 0}} \log\left(\frac{x_{i}^{\tau}+ \frac{\eps}{4 d^t \cdot c^t_i}}{x_{i}^{\tau-1}+ \frac{\eps}{4 d^t \cdot c^t_i}}\right) + \sum_{\substack{\tau\in P: \\ \tau \in[s, t]}} \log\left(\frac{x_{i}^{\tau}}{x_{i}^{\tau-1}}\right) \label{eq:sp_usekkt} \\
& \leq \sum_{\tau=s}^{t} \log\left(\frac{x_i^\tau + \frac{\eps}{4 d \cdot c_i^{\max}(S)}}{x_i^{\tau-1}+ \frac{\eps}{4d \cdot c_i^{\max}(S)}}\right) \label{eq:sp_usekkt2}\\
&= \log\left(\frac{x_i^{t}+ \frac{\eps}{4 d \cdot c_i^{\max}(S)}}{x_i^{s-1}+ \frac{\eps}{4d \cdot c_i^{\max}(S)}}\right) \nonumber \\
&\leq \log\left(1+\frac{4 d \cdot c_i^{\max}(S)}{\eps} \cdot  x_i^{t}\right).\nonumber
\end{align}

Equality \eqref{eq:sp_usekkt} follows from \eqref{eq:kkt2} and \eqref{eq:kkt4}.
To see why inequality \eqref{eq:sp_usekkt2} follows, note that for any $i$ and $\tau\in C$, it holds that $x_i^{\tau} \geq x_i^{\tau-1}$ and therefore, for any $a\geq b$ we have $\frac{x_i^{\tau}+ a}{x_i^{\tau-1}+ a}\leq \frac{x_i^{\tau}+ b}{x_i^{\tau-1}+ b}$. For any $\tau\in P$, $x_i^{\tau} \leq x_i^{\tau-1}$ and therefore, for any $a\geq 0$, we get $\frac{x_i^{\tau}}{x_i^{\tau-1}}\leq \frac{x_i^{\tau}+ a}{x_i^{\tau-1}+ a}$. Finally, for any $\tau \in (C \cap [s,t]) \backslash S$, we have that $x_i^{\tau} \geq x_i^{\tau-1}$ and therefore the corresponding summand is nonnegative.
\end{proof}

With this, we are finally ready to prove the lemma.

\begin{proof}[Proof of \cref{lem:sp_dualok}]
We obtain $\widetilde{y}$ by inductively \emph{decreasing} coordinates of the solution $y$ produced by the algorithm. We prove by induction on $\ell$ that inequalities \eqref{ineq1} and \eqref{ineq2} hold until time $\ell$, i.e.,
\begin{align}
    &\forall 1 \leq s \leq t \leq \ell, \ \forall i  &  \sum_{\substack{\tau \in C: \\ \tau \in [s, t]}}c_i^{\tau} \widetilde{y}^{\tau} - \sum_{\substack{\tau \in P: \\ \tau \in[s, t]}}p^{\tau}_i z^{\tau} &\leq w_i \cdot \log\left(1 + \frac{40d^2}{\eps^2}\right) \label{ih1} \\
    & \text{and} & \sum_{\substack{\tau \in C \\ \tau \leq \ell}}\widetilde{y}^{\tau} &\geq \left(1 - \frac{\eps}{10}\right)\sum_{\substack{\tau \in C \\ \tau \leq \ell}}y^{\tau}. \label{ih2}
\end{align}

The base of the induction is $\ell=0$ which holds trivially. Now suppose by induction that
we have an assignment of $\widetilde{y}^\tau$ for all $\tau\leq \ell -1$ such that inequalities \eqref{ih1} and \eqref{ih2} hold until time $\ell-1$ (and $\widetilde{y}^\tau =0$ for $\tau \geq \ell$). We show how to create an updated assignment $\widetilde{y}^\tau_{new}$ such that these inequalities also hold for time $\ell$.  

If time $\ell \in P$, the inequalities hold already and there is no need to change $\widetilde{y}^\tau$. Therefore, assume that  $\ell\in C$. We construct $\widetilde{y}_{new}$ as follows.

\begin{algorithm}[H]
\caption{Constructing $\widetilde{y}_{new}$}
          \label{alg:sp_fixy}
	\begin{algorithmic}[1]
	\For{$i = 1, \ldots, n$ \label{line:iforloopstart}}
        \State Initialize \begin{spacing}{0.3}
        \begin{flalign*}
        R_i &\gets \left\{\tau \in [\ell-1] \ \middle| \ c_i^{\tau}\geq \frac{10 \cdot d^\ell \cdot c_i^\ell}{\eps}, \ \widetilde{y}^{\tau} > 0 \right\},&& \text{// candidates for decrease} \\
        B &\gets c_i^\ell \cdot y^\ell. && \text{// budget} 
        \end{flalign*} 
        \end{spacing}
        \While{$B > 0$ and $R_i \neq \emptyset$}
        \State Let $\tau^*$ be the latest time in $R_i$.
        
        \State Update\begin{spacing}{0.3}
        \begin{flalign*}
            B &\gets B - c_i^{\tau^*} \Delta \widetilde{y}^{\tau^*}_i. &
        \end{flalign*} 
        \end{spacing}
        \For{every $\tau \in [\ell - 1]$}
        \State Update \begin{spacing}{0.3}
        \begin{flalign*}
            \Delta\widetilde{y}^\tau_i &\gets \min\{\widetilde{y}^\tau, B / c_i^{\tau*}\}. &
        \end{flalign*} 
        \end{spacing}
        \State If  $\Delta\widetilde{y}^\tau_i= \widetilde{y}^\tau$, then also update $R_i \gets R_i \setminus\{\tau^*\}$.
        \EndFor
        \EndWhile
        \EndFor 
        \State Finally, set \vspace{-0.1in}\begin{spacing}{0.5}
            \begin{flalign*}
                \widetilde{y}^\ell_{new} &  \gets y^\ell,
                \intertext{and for all $t \in [\ell-1]$}
                \widetilde{y}^\tau_{new} & \gets \widetilde{y}^{\tau} - \max_{i | \, c_i^t\neq 0}\{\Delta \widetilde{y}^{\tau}_i\}. &
            \end{flalign*}
            \end{spacing}
	\end{algorithmic}
\end{algorithm}

By construction, at the end of while loop which always terminates, either $R_i=\emptyset$, or the total budget spent is $\Delta B=c_i^\ell y_i^\ell= \sum_{\tau \leq \ell-1}c_i^{\tau}\Delta \widetilde{y}^\tau_i$. Throughout the execution, we always maintain that 
\begin{align*}
    c_i^\ell y_i^\ell- \sum_{\tau\leq \ell-1}c_i^{\tau}\Delta \widetilde{y}^\tau\geq 0
\end{align*}
and for all $\tau \in [\ell-1]$, that $\Delta \widetilde{y}^{\tau}_i\leq \widetilde{y}^{\tau}$.

Since $c_i^\ell y_i^\ell - \sum_{\tau}c_i^{\tau}\Delta \widetilde{y}_i^\tau \geq 0$ and for each $\tau\in R_i$, by definition $c_i^{\tau}\geq 10 \cdot d^\ell c_i^\ell / \eps$, we get that 
\begin{align}\sum_{\tau\leq \ell-1}\Delta \widetilde{y}^{\tau}_i \leq   \frac{\epsilon}{10 \cdot d^t}\cdot \sum_{\tau\leq \ell-1}\frac{c_i^{\tau} \cdot \Delta \widetilde{y}^{\tau}_i}{c_i^t} \leq \frac{\epsilon}{10 \cdot d^t} \cdot y^\ell. \label{delta_small} \end{align}
Thus, even after the decreases of \cref{alg:sp_fixy}, the total mass we add to $\widetilde y$ is at least
\begin{align*}
(\widetilde{y}^\ell_{new}- \widetilde{y}^\ell) + \sum_{\tau\leq t-1}(\widetilde{y}^\tau_{new}- \widetilde{y}^\tau)  & = 
 y^\ell - \sum_{\tau\leq \ell-1}\max_{i | \, c_i^t\neq 0}\{\Delta \widetilde{y}^{\tau}_i\} \\
 & \geq y^\ell - \sum_{i| \, c_i^{t}\neq 0}\sum_{\tau\leq \ell-1}\Delta \widetilde{y}^{\tau}_i \\
 &\geq y^\ell - \frac{\eps}{10 \cdot d^\ell}\sum_{i| \, c_i^{\ell}\neq 0}y^\ell \\
 &= \left(1-\frac{\eps}{10}\right)y^\ell. 
\end{align*}
Above, the first equality follows from the fact that we defined $\widetilde{y}^\ell$ to be $0$. The second inequality came from \eqref{delta_small}, and the third from the fact that at most $d^\ell$ coordinates of $c^\ell$ are nonzero. Combining this with the inductive hypothesis \eqref{ih2},
\begin{align*}
\sum_{\tau \leq \ell} \widetilde{y}_{new}^\tau = (\widetilde{y}^\ell_{new}- \widetilde{y}^\ell) + \sum_{\tau\leq \ell-1}(\widetilde{y}^\tau_{new}- \widetilde{y}^\tau) + \sum_{\tau \leq \ell-1} \widetilde{y}^\tau \geq \left(1-\frac{\eps}{10}\right) \sum_{\substack{\tau \in C \\ \tau \leq \ell}} y^\ell,
\end{align*}
and hence inequality \eqref{ih2} is preserved.

It remains to prove that for any $i\in [n]$ and $s\leq t = \ell$, inequality \eqref{ih1} holds. Fix an index $i$ and let $\tau_i^{\min}$ be the earliest time $\tau$ for which $\Delta\widetilde{y}_i^\tau >0$. We split into cases.

\textbf{Case 1: $\mathbf{c_i^\ell=0}$}. In this case
\[\sum_{\substack{\tau \in C: \\ \tau \in [s, \ell]}}c_i^{\tau} \widetilde{y}^{\tau}_{new} - \sum_{\substack{\tau \in P: \\ \tau \in[s, \ell]}}p^{\tau}_i z^{\tau} = \sum_{\substack{\tau \in C: \\ \tau \in [s, \ell-1]}}c_i^{\tau} \widetilde{y}^{\tau}_{new} - \sum_{\substack{\tau \in P: \\ \tau \in[s, \ell-1]}}p^{\tau}_i z^{\tau} \leq \sum_{\substack{\tau \in C: \\ \tau \in [s, \ell-1]}}c_i^{\tau} \widetilde{y}^{\tau} - \sum_{\substack{\tau \in P: \\ \tau \in[s, \ell-1]}}p^{\tau}_i z^{\tau}.\]
and inequality \eqref{ih1} holds by the induction hypothesis. Assume in the remaining cases that $c_i^\ell > 0$.

\textbf{Case 2: $\mathbf{R_i\neq \emptyset}$ at the end of the loop and $\mathbf{s \leq \tau_i^{\min}}$.} In this case, the budget $B$ was completely spent, i.e.
\[c_i^\ell y^\ell= \sum_{\substack{\tau \in C \\ \tau \in [\tau_i^{\min}, \, \ell-1]}}c_i^{\tau}\Delta \widetilde{y}^\tau_i,\] 
which means that
\begin{align*}
  \sum_{\substack{\tau \in C: \\ \tau \in [\tau_i^{\min}, \, \ell]}}c_i^{\tau} \widetilde{y}^{\tau}_{new} - \sum_{\substack{\tau \in P: \\ \tau \in[\tau_i^{\min}, \, \ell]}}p^{\tau}_i z^{\tau} \leq \sum_{\substack{\tau \in C: \\ \tau \in [\tau_i^{\min}, \, \ell-1]}}c_i^{\tau} \widetilde{y}^{\tau} - \sum_{\substack{\tau \in P: \\ \tau \in[\tau_i^{\min}, \, \ell-1]}}p^{\tau}_i z^{\tau}.
\end{align*}
Then, for any $s\leq \tau_i^{\min}$, 
\begin{align*}
\sum_{\substack{\tau \in C: \\ \tau \in [s, \ell]}}c_i^{\tau} \widetilde{y}^{\tau}_{new} - \sum_{\substack{\tau \in P: \\ \tau \in[s, \ell]}}p^{\tau}_i z^{\tau}
& \leq \sum_{\substack{\tau \in C: \\ \tau \in [s, \ell-1]}}c_i^{\tau} \widetilde{y}^{\tau} - \sum_{\substack{\tau \in P: \\ \tau \in[s, \ell-1]}}p^{\tau}_i z^{\tau},
\end{align*}
and again, inequality \eqref{ih1} holds by the induction hypothesis.

\textbf{Case 3: $\mathbf{R_i = \emptyset}$ at the end of the loop or $\mathbf{s > \tau_i^{\min}}$.} In this case $[s,\ell] \cap R_i = \emptyset$ and therefore, for each $\tau\in [s,\ell]$ either $\widetilde{y}^{\tau}_{new}=0$, or $c_i^\tau \leq 10 \cdot d^\ell c_i^\ell/\eps$. Define
\begin{align*}
    S&= C \cap \{\tau\in[s,\ell], \widetilde{y}^\tau>0\}, \\
    c_i^{\max}(S) &= \max_{\tau \in S} \{c_i^\tau\} \leq 10 \cdot d^\ell c_i^\ell / \eps.
\end{align*}
Applying \cref{lem:subsetlem} with the set $S$ above, we get that:
\begin{align*}
  \frac{1}{w_i}\left(\sum_{\substack{\tau \in C \\ \tau  \in[s,t]}}c^\tau_i \widetilde{y}^\tau - \sum_{\substack{\tau \in P: \\ \tau \in [s,t]}} p_i^\tau z^\tau\right) &=  \frac{1}{w_i}\left(\sum_{\substack{\tau \in S}}c^\tau_i y^\tau - \sum_{\substack{\tau \in P: \\ \tau \in [s,t]}} p_i^\tau z^\tau\right)\\
  & \leq  \log\left(1+\frac{4 d \cdot c_i^{\max}(S)}{\eps} \cdot  x_i^{t}\right)\\
  & \leq \log\left(1+\frac{4 d}{\eps} \cdot \frac{10 \cdot d^\ell c_i^\ell}{\eps} \cdot  \frac{1}{c_i^\ell}\right) \\
  &\leq \log\left(1+\frac{40 d^2}{\eps^2}\right).
\end{align*}
The second inequality follows from the definition of $c_i^{\max}(S)$. The last inequality follows from the fact that $x_i^\ell\leq \nicefrac{1}{c_i^\ell}$, since otherwise the constraint at time $\ell$ is already satisfied at time $\ell-1$. This concludes the proof.
\end{proof}

\section{Rounding}

\label{sec:rounding}

In this section we study several applications of the chasing positive bodies problem. For each of these problems we show that a suitable fractional version of it fits into our framework. We then show how to round the fractional solution to an integral solution. 

\subsection{Dynamic Set Cover}

\label{sec:setcov}

In the set cover problem we are given a universe of $n$ elements $U$ and a collection of $m$ sets $S_1, \ldots, S_m \subseteq U$, where each set $S_i$ has a cost $c(S_i) \geq 0$. The goal is to find a minimum cost collection of sets that covers all elements. 

In the dynamic setting, at each time-step $t\in[T]$, an element gets inserted into/deleted from the universe $U$. Let $U^t$ denote the state of $U$ at time $t$,\footnote{We assume that each element in $U^t$ is covered by at least one set in $\{S_1, \ldots, S_m\}$. This ensures the existence of a valid set cover.} and let $\opt^t$ denote the cost of the optimal {\em fractional} set cover for $U^t$.  We wish to maintain a set cover $\mathcal{S}^t$ of $U^t$ whose cost $c(\mathcal{S}^t) := \sum_{S \in \mathcal{S}^t} c(S)$ is always within a factor $\beta \geq 1$ of $\opt^t$, such that the total recourse, $\sum_{t \in [T]} |\mathcal{S}^t \oplus \mathcal{S}^{t-1}|$, is as small as possible. The fractional version of the dynamic set cover problem fits naturally into our framework 
\begin{align*}
    K_t & := \left\{x^t\in \mathbb{R}^m_+ \ \middle| \begin{array}{cc}
         \sum_{\substack{i \in [m] \, : \\ u \in S_i}} x^t_i \geq 1 & \forall u\in U^t, \\ \\
         \sum_{i \in [m]} c(S_i) \cdot x^t_i  \leq  \beta \cdot \opt^t & 
    \end{array}\ \right\}.
\end{align*}
\begin{restatable}[Set Cover Rounding]{theorem}{setcovround}
\label{setcover_rounding}
    For any fractional solution $x^t$ to dynamic set cover, we have:
\begin{enumerate}[label={(\arabic*)}]
    \item A deterministic dynamic rounding scheme such that
\begin{align}
    \sum_{t \in [T]}|\mathcal{S}^t \oplus \mathcal{S}^{t-1}| & \leq O(f)\cdot \sum_{t \in [T]}\|x^t- x^{t-1}\|_1, \nonumber \\
    \sum_{S\in \mathcal{S}^{t}} c(S)  & \leq O(f)\cdot \sum_{i \in [m]} c(S_i) \cdot x^t_i & \forall t \in [T]. \nonumber
\end{align}   
Here, $f$ is an upper bound on the maximum frequency of any element at any time, i.e., 
\[f \geq \max_{u\in U^t} |\{i \in [m]  :  u\in S_i\}|   \ \  \forall t \in [T].\]
\item For every  $\alpha \geq 1$, a  
randomized dynamic rounding scheme such that
\begin{align}
    \expect*{\sum_{t \in [T]}\left\vert\mathcal{S}^t \oplus \mathcal{S}^{t-1}\right\vert} & \leq (\alpha \log n)\cdot \sum_{t \in [T]}\|x^t- x^{t-1}\|_1 + O\left(\frac{T}{n^\alpha}\right),
    \label{line:sc_rounding_recourse}\\
    \expect*{\sum_{S\in \mathcal{S}^{t}} c(S)} & \leq O(\alpha \log n)\cdot \sum_{i \in [m]} c(S_i) \cdot x^t_i  & \forall t \in [T].
    \label{line:sc_rounding_apx}
\end{align}   
Here,  $n \geq \max_t |U^t|$ is an upper bound on the  number of elements in the universe at any time. 
\end{enumerate}
\end{restatable}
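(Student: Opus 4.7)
The plan is to describe two rounding schemes, one deterministic and one randomized, and to relate the cost and recourse of each to $\sum_i c(S_i)\cdot x_i^t$ and $\sum_t \|x^t - x^{t-1}\|_1$ respectively.

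For part (1), the deterministic rounding uses a hysteresis rule: include $S_i \in \mathcal{S}^t$ once $x_i^t$ crosses $1/f$ from below, and remove $S_i$ only when $x_i^t$ drops below $1/(2f)$. Coverage follows by pigeonhole: every $u \in U^t$ has $\sum_{i:\, u \in S_i} x_i^t \geq 1$ with at most $f$ nonzero terms, so some $x_i^t \geq 1/f$ and the corresponding $S_i$ is in $\mathcal{S}^t$. Since every included set has $x_i^t \geq 1/(2f)$, we immediately get $c(\mathcal{S}^t) \leq 2f \sum_i c(S_i)\cdot x_i^t$. For recourse, each complete add-then-remove cycle for $S_i$ forces $x_i$ to cross the band $[1/(2f),\,1/f]$ twice, contributing at least $1/f$ to $\sum_t |x_i^t - x_i^{t-1}|$; charging toggles to movement yields $O(f)\cdot \sum_t \|x^t - x^{t-1}\|_1$ toggles in total.

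For part (2), I would sample $k = \Theta(\log(\alpha n))$ independent uniform thresholds $\theta_{i,1}, \dots, \theta_{i,k} \in [0,1]$ per set $S_i$ once at initialization, and define the base rounded solution $\mathcal{S}^t_0 := \{S_i : \min_j \theta_{i,j} \leq x_i^t\}$. Since random thresholds cannot guarantee coverage deterministically, I would augment $\mathcal{S}^t_0$ with a repair step: for every $u \in U^t$ not covered by $\mathcal{S}^t_0$, add the cheapest set containing $u$, and remove these ``backup'' sets as soon as they are no longer needed. For any $u$, $\mathbb{P}[u \text{ uncovered}] = \prod_{i:\, u \in S_i}(1-x_i^t)^k \leq e^{-k \sum_{u \in S_i} x_i^t} \leq e^{-k} \leq 1/(\alpha n)$, so the expected number of backup additions per step is at most $|U^t|/(\alpha n) \leq 1/\alpha$. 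The base cost is $\mathbb{E}[c(\mathcal{S}^t_0)] = \sum_i c(S_i)(1-(1-x_i^t)^k) \leq k \sum_i c(S_i) x_i^t$, and each backup set for $u$ has cost at most $\min_{u \in S_i} c(S_i) \leq \sum_{i:\, u \in S_i} c(S_i)\cdot x_i^t \leq \sum_i c(S_i)\cdot x_i^t$ (using $\sum_{u \in S_i} x_i^t \geq 1$), so the expected backup cost per step is at most $(1/\alpha)\sum_i c(S_i)\cdot x_i^t$, which is subsumed by the base $O(\log(\alpha n))$ factor.

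The main technical step is the recourse analysis for the base solution. A toggle of $S_i$ at time $t$ occurs exactly when $\min_j \theta_{i,j}$ lies in the open interval between $x_i^{t-1}$ and $x_i^t$; since the minimum of $k$ i.i.d.\ uniforms on $[0,1]$ has density $k(1-\theta)^{k-1} \leq k$, this has probability at most $k\cdot |x_i^t - x_i^{t-1}|$. Summing over coordinates and time gives expected base toggles of $O(\log(\alpha n)) \cdot \sum_t \|x^t - x^{t-1}\|_1$. The backup mechanism contributes at most two toggles per uncovered element (one addition and one later removal), and the expected number of uncovered elements per step is $\leq 1/\alpha$, producing the additive $O(T/\alpha)$ term. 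The main obstacle I anticipate is calibrating $k$: it must be large enough that the per-element failure probability $e^{-k}$ is $\leq 1/(\alpha n)$ (so total coverage failure per step is $\leq 1/\alpha$), while small enough that the base cost and recourse multipliers remain $O(\log(\alpha n))$; the choice $k = \Theta(\log(\alpha n))$ achieves both simultaneously, and critically the thresholds are fixed across time (using the oblivious adversary assumption) so that changes in $\mathcal{S}^t_0$ can be charged to fractional movement rather than to fresh randomness.
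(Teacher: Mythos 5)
Your proposal is correct and follows essentially the same approach as the paper: the same hysteresis rule with thresholds $1/f$ and $1/(2f)$ for part (1), and the same ``persistent random threshold plus cheap backup set'' strategy for part (2). The only cosmetic difference is that in part (2) the paper draws a single exponential threshold $\chi_i$ with rate $\log(\alpha n)$ per set, whereas you take the minimum of $k=\Theta(\log(\alpha n))$ i.i.d.\ uniforms; both have density bounded by $O(\log(\alpha n))$ and tail $\Pr[\text{threshold}>x]\leq e^{-\Omega(\log(\alpha n))\,x}$, which are the two properties the argument actually uses, so the two choices are interchangeable.
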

The proof, which we defer to \cref{appendix_a}, is an adaptation of standard offline rounding techniques to the dynamic setting. Note that in this problem the row sparsity of the covering constraints of $K_t$ is at most $f \geq \max_{u\in U^t} |\{i \in [m]  :  u\in S_i\}|$.  Hence, by setting $\eps=1$ in  \cref{thm:main},  we can maintain a fractional solution $x^t$ that satisfies the covering constraints, at any time $t$ has cost at most $O(\beta) \cdot \opt^t$, and has a   $O(\log f)$-competitive total recourse with respect to any offline algorithm maintaining a fractional solution of cost  $\leq \beta \cdot \opt^t$. Also note that if we set $\alpha$ to be a constant, then the additive term in \eqref{line:sc_rounding_recourse} becomes $1/\poly(n)$ per time-step. Combining this with \cref{setcover_rounding} we get:

\begin{corollary}\label{cor:set cover}
For any $\beta \geq 1$ and any update-sequence $\sigma$ consisting of $T$ updates, there exist:
\begin{enumerate}[label={(\arabic*)}]
    \item A deterministic dynamic algorithm that maintains a set cover $\mathcal{S}^t$ with cost $O(\beta \cdot f \cdot \opt^t)$ at all times, whose total recourse is at most $O(f\log f)\cdot \optr^\beta(\sigma)$.
    \item A randomized dynamic algorithm that maintains a set cover $\mathcal{S}^t$ with expected cost $O(\beta \cdot \log n \cdot \opt^t)$ at all times, whose expected total recourse is at most $O(\log f \cdot \log n)\cdot \optr^\beta(\sigma) + \frac{T}{\poly(n)}$.
\end{enumerate}
Here, $\optr^\beta(\sigma)$ is the optimal total recourse for any {\em offline} algorithm which maintains a {\em fractional} set cover of cost at most $\beta \cdot \opt^t$ throughout the update-sequence $\sigma$.
\end{corollary}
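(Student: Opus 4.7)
The plan is to simply chain Theorem~\ref{thm:main} with the rounding scheme of Theorem~\ref{setcover_rounding}, treating the corollary as the end-to-end guarantee one gets by combining fractional competitive recourse with online rounding. First, I would instantiate the positive body chasing framework with the sequence of bodies $K_t$ specified at the start of Section~\ref{sec:setcov}: each covering constraint corresponds to an element $u \in U^t$ and has row sparsity at most $f$, while the single packing constraint encodes the cost budget $\sum_i c(S_i) x_i^t \leq \beta \cdot \opt^t$. Element insertions/deletions and changes in $\opt^t$ produce a genuine online sequence of positive bodies to which Theorem~\ref{thm:main} applies.

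Next, I would invoke Theorem~\ref{thm:main} with a constant $\eps$ (say $\eps = 1$). This yields a fractional vector $x^t$ satisfying all covering constraints and violating the cost constraint by at most a $(1+\eps)$ factor, hence of cost $O(\beta) \cdot \opt^t$, and with total $\ell_1$-recourse at most $O(\log(d/\eps)) = O(\log f)$ times the recourse of any offline fractional algorithm that stays feasible in $K_t$ for all $t$. In particular, any offline algorithm that maintains a fractional set cover of cost at most $\beta \cdot \opt^t$ is a valid competitor, so the fractional recourse is bounded by $O(\log f) \cdot \optr^\beta(\sigma)$.

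Finally I would apply the two parts of Theorem~\ref{setcover_rounding}. For part~(1), the deterministic rounding multiplies both cost and recourse by $O(f)$, giving an integral cover of cost $O(\beta \cdot f) \cdot \opt^t$ and total recourse $O(f) \cdot O(\log f) \cdot \optr^\beta(\sigma) = O(f \log f) \cdot \optr^\beta(\sigma)$. For part~(2), I would set $\alpha = \poly(n)$ with a sufficiently large exponent so that $O(\log(\alpha n)) = O(\log n)$ and the additive $O(T/\alpha)$ error in \eqref{line:sc_rounding_recourse} fits into $T/\poly(n)$. Substituting the fractional cost and recourse bounds into \eqref{line:sc_rounding_recourse} and \eqref{line:sc_rounding_apx} then yields expected cost $O(\beta \log n) \cdot \opt^t$ and expected total recourse $O(\log f \cdot \log n) \cdot \optr^\beta(\sigma) + T/\poly(n)$, as required.

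I do not anticipate any real obstacle here; the entire argument is bookkeeping. The only point worth flagging is that Theorem~\ref{thm:main}'s competitive guarantee is against \emph{any} offline fractional trajectory feasible for the body sequence, which is precisely what matches the definition of $\optr^\beta(\sigma)$ (cost-at-most-$\beta\cdot\opt^t$ fractional covers), so the comparison with $\optr^\beta(\sigma)$ passes through cleanly.
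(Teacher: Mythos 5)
Your proposal is correct and matches the paper's own derivation: both set $\eps=1$ in Theorem~\ref{thm:main} to get a fractional solution with cost $O(\beta)\cdot\opt^t$ and $O(\log f)$-competitive recourse (using row sparsity $\leq f$), then compose with the two parts of Theorem~\ref{setcover_rounding}, choosing $\alpha=\poly(n)$ for part~(2). There is no substantive difference; the paper just states the combination more tersely.
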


\medskip

We observe that we can easily extend our analysis to allow for other forms of updates, such as changing the cost of a set.

\subsection{Dynamic Bipartite Matching}

\label{sec:matching}

In the maximum  bipartite matching problem, we are given a bipartite graph $G = (V, E)$ with $|V|=n$ nodes. 
A matching $M\subseteq E$ is subset of edges that do not share any common endpoint,  and the goal is to find a matching $M$ of maximum size.
In the dynamic setting, at each time-step $t\in [T]$ an edge gets inserted into/deleted from the set $E$.  Let $E^t$ denote the state of $E$ at time $t$, and let $\opt^t$ denote the value of the maximum fractional matching at time $t$. We wish to maintain a matching $M^t$ in $G^t = (V, E^t)$ whose size is always within a constant factor $\beta \leq 1$ of $\opt^t$, such that the total recourse, $\sum_{t \in [T]} |M^t \oplus M^{t-1}|$, is as small as possible. The fractional version of the dynamic maximum matching problem fits naturally into our framework with the following 
\begin{align}
\label{eq:kt:matching}
    K_t & := \left\{x^t\in \mathbb{R}^{\binom{|V|}{2}}_+ \ \middle| \ \begin{array}{cc}
        \sum_{e \in \partial(v)} x^t_{e}  \leq 1 & \forall v \in V, \\
        x_e^t = 0 & \forall e \in \binom{V}{2} \setminus E^t, \\
        \sum_{e \in E^t}  x_e^t  \geq  \beta \cdot \textsc{Opt}^t. &
    \end{array} \ \right\}.
\end{align}
Note that the vector $x^t$ has an entry for every {\em potential edge} (i.e., unordered pair of nodes) $e \in \binom{V}{2}$. If a potential edge $e$ is currently not present in the graph, then we set $x^t_e = 0$. The notation $\partial(v)$ denotes the set of all potential edges with one endpoint in $v \in V$.

The main result in this section is summarized  below. The proof of \cref{th:matching:rounding}, which we defer to \cref{app:matching}, follows from an adaptation of a standard rounding technique~\cite{ArarCCSW18} to our setting.

\begin{restatable}[Maximum Matching Rounding]{theorem}{maxmatchround}
\label{th:matching:rounding}
For any fractional solution $x^t$ to dynamic bipartite matching, and any $\delta \in (0,1)$, $\alpha\geq 1$, there exists a randomized dynamic 
 rounding scheme such that:
\begin{align}
    \expect*{\sum_{t \in [T]} \left\vert M^t \oplus M^{t-1} \right\vert} & \leq O\left(\frac{\alpha \log n}{\delta^3}\right) \cdot \sum_{t \in [T]}\|x^t- x^{t-1}\|_1 +  O\left(\frac{T}{\delta \cdot n^{\alpha}}\right). \label{eq:matching:rounding:1}\\
    \expect*{\left| M^{t} \right|} & \geq (1-\delta)\cdot \sum_{e\in E^t} x^t_e  & \forall t \in [T].\label{eq:matching:rounding:2}
\end{align}    
\end{restatable}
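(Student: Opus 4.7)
The plan is the stabilizer recipe sketched in the technical overview. I would first maintain a random subgraph $H^t\subseteq E^t$ (the \emph{stabilizer}) that (a) with high probability at every time step admits an integral matching of size at least $(1-\delta)\sum_e x_e^t$, and (b) has expected edge-level recourse $O(\alpha\log n/\delta^2)\cdot\sum_t\|x^t-x^{t-1}\|_1$. I would then run an off-the-shelf dynamic bipartite matching algorithm with $O(1/\delta)$ \emph{absolute} recourse per input-edge update on top of $H^t$, and compose the two bounds multiplicatively.

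For $H^t$ I would adapt the weight-bucketing scheme of~\cite{ArarCCSW18}, with a coupling across time to control recourse. Use $L=O(\alpha\log n/\delta)$ geometric weight levels with thresholds $\theta_k=(1-\delta)^k$ down to $\theta_L=n^{-\Theta(\alpha)}$; edge weights below $\theta_L$ contribute at most $\delta\sum_e x_e^t$ in total and can be ignored. For each potential edge $e$ and each level $k$, draw a single uniform $U_e^k\in[0,1]$ once and declare $e\in H_k^t$ iff $U_e^k\le x_e^t/\theta_k$; set $H^t:=\bigcup_k H_k^t$. The coupling has two crucial consequences. First, the indicator $\mathbf{1}[e\in H_k^t]$ flips exactly when $x_e^t$ crosses the fixed threshold $U_e^k\theta_k$, so a change of magnitude $|\Delta|$ in $x_e^t$ produces a flip with probability at most $|\Delta|/\theta_k$; summing across the $O(L)$ relevant levels and over all edges yields expected edge-recourse for $H^t$ of $O(L/\delta)\cdot\sum_t\|x^t-x^{t-1}\|_1$. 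Second, at every fixed $t$ consider the level $k(e)$ with $\theta_{k(e)}\le x_e^t<\theta_{k(e)-1}$: by a Chernoff bound the degree of every vertex in $H_{k(e)}^t$ is within a $(1\pm\delta)$ factor of its restricted fractional degree, and a union bound over the $O(nLT)$ (vertex, level, time) triples fails with probability $O(T/n^\alpha)$ for a suitable choice of $\alpha$. Conditioning on success, bipartite integrality produces an integral matching of value $(1-O(\delta))\sum_e x_e^t$ inside $H^t$; on the failure event, rebuild from scratch at $O(n)$ recourse, contributing $O(T/n^{\alpha-1})$ in expectation, which is absorbed into the additive term of~\eqref{eq:matching:rounding:1} after rescaling $\alpha$.

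To convert $H^t$ into the integral matching $M^t$, I would run any $O(1/\delta)$-absolute-recourse, $(1-\delta)$-approximate dynamic bipartite matching algorithm on $H^t$, treating each flip of $\mathbf{1}[e\in H^t]$ as a single edge update. Composing the recourse bounds yields the $O(\alpha\log n/\delta^3)$ factor in~\eqref{eq:matching:rounding:1}, and combining the two $(1\pm\delta)$ losses (one in the size of the fractional matching captured by $H^t$, one in the approximation of the matching inside $H^t$) gives~\eqref{eq:matching:rounding:2} after rescaling $\delta$ by a constant. The principal obstacle is reconciling (i) high-probability near-optimality of the matching inside $H^t$ at \emph{every} time step with (ii) the truly competitive expected recourse bound on $H^t$: fresh resampling per step would immediately destroy recourse, while a monotone threshold scheme would introduce an aspect-ratio term exactly like the one we remove in \cref{sec:removing_delta}. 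The frozen-coupling construction above, together with the geometric cutoff $\theta_L=n^{-\Theta(\alpha)}$, is precisely what avoids both pitfalls.
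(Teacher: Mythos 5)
The high-level structure — a ``stabilizer'' subgraph $H^t$ coupled to frozen randomness, followed by an $O(1/\delta)$-absolute-recourse matching algorithm run on $H^t$ — matches the paper's two-step recipe exactly. The difference is in the construction of the stabilizer, and there your recourse analysis has a genuine gap.

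You sample each edge $e$ at $L$ geometric levels with thresholds $\theta_k=(1-\delta)^k$, $\theta_L = n^{-\Theta(\alpha)}$, including $e$ in $H^t_k$ iff $U^k_e \le x^t_e/\theta_k$. You then claim the per-level flip probability $|\Delta|/\theta_k$, summed over the $O(L)$ levels, yields expected stabilizer recourse $O(L/\delta)\cdot\sum_t\|x^t-x^{t-1}\|_1$. That sum does not behave as claimed: $\sum_k |\Delta|/\theta_k = |\Delta|\sum_k(1-\delta)^{-k}$ is a geometric series dominated by its largest term $|\Delta|/\theta_L = |\Delta|\cdot n^{\Theta(\alpha)}$, not $L|\Delta|$. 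Working out the exact flip probabilities (accounting for the truncation $\min(1,x_e^t/\theta_k)$), an edge that appears with value $x_e$ produces expected flip count $\Theta\bigl(1/\delta + \log(x_e/\theta_L)/\delta\bigr)$; when $x_e$ is close to $\theta_L$ this is $\Theta(1/\delta)$ while the $\ell_1$ movement charged is only $x_e \approx n^{-\Theta(\alpha)}$, a ratio of order $n^{\Theta(\alpha)}/\delta$. So the stabilizer's recourse cannot be bounded linearly in $\sum_t\|x^t-x^{t-1}\|_1$ — exactly the aspect-ratio pathology you say you are trying to avoid, but re-introduced at the level of small edge values rather than small $c^t_i$.

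The paper's construction sidesteps this by using a single uniform threshold per copy: draw $\chi_{e,i}\sim U[0,1]$ for $i\in[\kappa]$ with $\kappa=\Theta(\alpha\log n/\delta^2)$, put $e_i$ in the (multi)graph $F^t$ iff $x^t_e>\chi_{e,i}$, and set every sampled multi-edge's fractional value to $(1+\delta)^{-1}/\kappa$. There the flip probability per copy is exactly $|x_e^t-x_e^{t-1}|$, so total expected stabilizer recourse is $\kappa\cdot\sum_t\|x^t-x^{t-1}\|_1$ with no dependence on how small $x_e$ is. The concentration step also becomes one-sided — you only need $\deg_F(v)\le(1+\delta)\kappa$ w.h.p.~(easy from an upper-tail Chernoff with mean $\le\kappa$), not the two-sided per-level degree concentration your bucketing scheme requires. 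If you want to salvage the bucketing approach, you would need a different charging argument (e.g.\ charging flips at level $k$ to movement accumulated since $x_e$ last crossed $\theta_k$), but as written the step from the per-level bound to the stated $O(L/\delta)$ factor is incorrect.
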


Note that in this problem the row-sparsity of the covering constraint in $K_t$ is at most $\binom{n}{2} = \Theta(n^2)$. Thus, using the online algorithm from \cref{thm:main}, we can maintain a fractional matching $x^t$ that satisfies $\sum_{e \in E^t}  x^t_e \geq (1-\epsilon) \beta \cdot \opt^t$ at all times $t$, and has a $O((\nicefrac{1}{\epsilon}) \cdot \log (\nicefrac{n}{\epsilon}))$-competitive total recourse with respect to any offline algorithm which maintains a fractional matching of value $\geq \beta \cdot \opt^t$. Setting $\delta = \epsilon$ and $\alpha$ to be some large constant, and combining this with \cref{th:matching:rounding}, we now get:
\begin{corollary}
\label{cor:matching:rounding} For any $\beta, \epsilon \in (0, 1]$, and any update-sequence $\sigma$ with $T$ updates, there exists a randomized dynamic algorithm that maintains a matching $M^t$ of expected size at least $(1-\epsilon) \beta \cdot \opt^t$ at all times, with expected total recourse $O((\nicefrac{1}{\epsilon^4}) \cdot \log n \cdot \log (\nicefrac{n}{\epsilon})) \cdot  \optr^{\beta}(\sigma) +  \frac{T}{\epsilon \cdot \poly(n)}$. Here, $\optr^{\beta}(\sigma)$ is the optimal total recourse for any {\em offline} algorithm which maintains a fractional matching of value at least $\beta \cdot \opt^t$ throughout the update-sequence $\sigma$.
\end{corollary}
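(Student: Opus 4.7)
\textbf{Proof Proposal for \Cref{cor:matching:rounding}.} The plan is to obtain \Cref{cor:matching:rounding} as the composition of two black-boxes already available at this point in the paper: the fractional chasing algorithm of \Cref{thm:main} applied to the convex body $K_t$ defined in~\eqref{eq:kt:matching}, followed by the randomized rounding of \Cref{th:matching:rounding}. The conceptual content is essentially bookkeeping of the approximation and recourse factors; there is no new analysis to do, but the small issue of resource augmentation (packing-vs-covering) must be handled with a scaling trick.

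First I would verify that $K_t$ as written is a positive body in our sense. The matching constraints $\sum_{e \in \partial(v)} x_e \le 1$ are nonnegative packing constraints, the value constraint $\sum_{e \in E^t} x_e \ge \beta \cdot \opt^t$ is a single nonnegative covering constraint (after rescaling by $\beta \cdot \opt^t$), and the forced zeros $x_e = 0$ for $e \notin E^t$ are implemented by simply not introducing those variables at time $t$. The only covering constraint has at most $\binom{n}{2} = O(n^2)$ nonzero entries, so the parameter $d$ of \Cref{thm:main} satisfies $\log(d/\eps) = O(\log(n/\eps))$. Invoking \Cref{thm:main} with parameter $\eps$ gives a sequence $x^t \in K_t^{1+\eps}$ with
\[
\sum_{t \in [T]} \|x^t - x^{t-1}\|_1 \;\le\; O\!\left(\tfrac{1}{\eps}\log\tfrac{n}{\eps}\right) \cdot \optr^{\beta}(\sigma),
\]
since $\optr^{\beta}(\sigma)$, by its definition, is exactly the offline optimum for chasing the bodies $K_t$. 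Now $x^t$ satisfies the covering constraint exactly but may violate packing up to $1+\eps$; let $\tilde x^t := x^t/(1+\eps)$. Then $\tilde x^t$ is a genuine fractional matching, $\sum_e \tilde x^t_e \ge \beta \cdot \opt^t/(1+\eps) \ge (1 - \eps)\beta \cdot \opt^t$, and its $\ell_1$ movement is bounded by that of $x^t$.

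Next I would feed $\tilde x^t$ into \Cref{th:matching:rounding} with rounding parameter $\delta := \eps/3$ and with $\alpha$ chosen to be a sufficiently large constant so that $1/(\delta \cdot n^{\alpha}) \le 1/(\eps \cdot \poly(n))$. The theorem outputs an integral matching $M^t$ with
\[
\mathbb{E}[|M^t|] \;\ge\; (1-\delta)\sum_{e} \tilde x^t_e \;\ge\; (1-\delta)(1-\eps)\beta \cdot \opt^t \;\ge\; (1-\eps)\beta\cdot\opt^t,
\]
after rescaling $\eps$ by a constant factor, and expected total recourse
\[
\mathbb{E}\!\left[\sum_{t}|M^t \oplus M^{t-1}|\right] \;\le\; O\!\left(\tfrac{\log n}{\eps^3}\right)\sum_t \|\tilde x^t - \tilde x^{t-1}\|_1 + O\!\left(\tfrac{T}{\eps \cdot \poly(n)}\right).
\]
Plugging in the bound on $\sum_t \|\tilde x^t - \tilde x^{t-1}\|_1$ from the previous step yields
\[
\mathbb{E}\!\left[\sum_{t}|M^t \oplus M^{t-1}|\right] \;\le\; O\!\left(\tfrac{1}{\eps^4}\log n \log\tfrac{n}{\eps}\right)\cdot \optr^{\beta}(\sigma) + \tfrac{T}{\eps \cdot \poly(n)},
\]
which is exactly the claimed bound.

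There is no substantial obstacle; the only points that require a moment of care are (i) the scaling by $1/(1+\eps)$ converting covering-slack into packing-feasibility while losing only a $(1-\eps)$ factor in the matching value, and (ii) confirming that the offline optimum for the chasing instance $\{K_t\}$ coincides with the quantity $\optr^{\beta}(\sigma)$ defined in the statement, which is immediate from the definition. Everything else is a direct composition of \Cref{thm:main} and \Cref{th:matching:rounding}.
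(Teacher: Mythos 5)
Your proposal is correct and follows essentially the same route as the paper: apply \Cref{thm:main} to the body $K_t$ of~\eqref{eq:kt:matching} with row sparsity $d = O(n^2)$, rescale by $1/(1+\eps)$ to recover a genuine fractional matching at the cost of a $(1-\eps)$ factor in value, then invoke \Cref{th:matching:rounding} with $\delta = \Theta(\eps)$ and $\alpha = O(1)$, and multiply the two recourse factors. The only cosmetic difference is that you make the scaling step and the choice $\delta = \eps/3$ explicit, whereas the paper absorbs both into a brief remark and an implicit constant-factor rescaling of $\eps$.
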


We remark that it is straightforward to extend \cref{th:matching:rounding} and \cref{cor:matching:rounding} to weighted bipartite graphs, at the cost of incurring an exponential in $(1/\epsilon)$ factor overhead in recourse, using a well-known reduction due to Bernstein et al. (see Theorem 3.1 in~\cite{BernsteinDL21}). We can also easily extend our analysis to also allow for vertex updates as well.

\subsection{Dynamic Load Balancing on Unrelated Machines}

\label{sec:load_balance}

In the load balancing problem, we are given a set $M$ of $m$ {\em machines}, and a set $J$ of  {\em jobs}. Each job $j \in J$ can only be processed by a nonempty subset $M(j) \subseteq M$ of machines. If we assign job $j$ to machine $i \in M(j)$, then the machine incurs a {\em load} of $p_{ij}$. Our goal is to come up with a valid {\em assignment} $\psi : J \rightarrow M$ of jobs to machines so as  to minimize the ``makespan'', which is defined to be the maximum load on any machine and is given by $\texttt{Obj}(\psi) := \max_{i \in M} \left( \sum_{j \in J : \psi(j) = i} p_{ij} \right)$.

In the dynamic setting, at each time-step $t$ a job gets inserted into/deleted from the set $J$. (Our analysis also works for other types of updates, such as changing the set $M(j)$ for a job $j$.) Let $J^t$ denote the set of jobs at time $t$. Let $\opt^t_{int}$ denote the optimal {\em integral} makespan at time $t$.  We wish to maintain a valid assignment $\psi^t : J^t \rightarrow M$ such that $\texttt{Obj}(\psi^t) \leq \beta 
\cdot \opt^t_{int}$  for some $\beta \geq 1$. We incur a ``recourse'' of one whenever  a job gets reassigned from one machine to another, or whenever a job gets inserted into/deleted from the set $J$. Overloading the notation $\oplus$, which denotes the symmetric difference between two sets, we define the recourse incurred  at time $t$ to be $\psi^{t-1} \oplus \psi^t := |J^t \oplus J^{t-1}| + | \{ j \in J^t \cap J^{t-1} : \psi^t(j) \neq \psi^{t-1}(j)\}|$. We wish to ensure that the total recourse, given by $\sum_{t \in [T]} \psi^t \oplus \psi^{t-1}$, is as small as possible. Let $\mathcal{J}$ denote the collection of all possible jobs that can ever be present, so that we have $J^t \subseteq \mathcal{J}$ for all $t$, and let $n := |\mathcal{J}|$. The fractional version of this dynamic  problem fits into our framework, where the set $K_t$ consists of all $x^t \in \mathbb{R}_+^{|\mathcal{J}| \times |M|}$ that satisfy the following constraints:
\begin{eqnarray}
\label{eq:kt:makespan:1}
\sum_{i \in M} x^t_{ij} & \geq & 1 \qquad \qquad \ \ \ \forall j \in J^t. \\
x^t_{ij} & = & 0 \qquad \qquad \ \ \ \forall i \in M, j \in J^t \text{ with }  p_{ij} > \opt^t_{int}. \label{eq:kt:makespan:2} \\
x^t_{ij} & = & 0 \qquad \qquad \ \ \ \forall i \in M, j \in \mathcal{J} \setminus J^t. \label{eq:kt:makespan:3} \\
x_{ij}^t & = & 0 \qquad \qquad \ \ \  \forall j \in \mathcal{J}, i \in M \setminus M(j). \label{eq:kt:makespan:new} \\
\sum_{j \in J^t} p_{ij} \cdot x^t_{ij} & \leq & \beta \cdot \opt^t_{int} \ \ \ \,   \forall i \in M. \label{eq:kt:makespan:4}
\end{eqnarray}
 The main result in this section is summarized below. The proof of \cref{th:kt:makespan} follows almost immediately from a recent result by Krishnaswamy et al.~\cite{KLS23}.
 Parts 1 and 2 of the theorem below follow from Appendix B and Section 4 of \cite{KLS23}, respectively.\footnote{The final result on load balancing of \cite{KLS23} handles only job arrivals because of the limitation of their online fractional algorithm. However, their rounding schemes work in the fully dynamic setting.}

\begin{restatable}[Load Balancing Rounding by \cite{KLS23}]{theorem}{loadbalanceround}
\label{th:kt:makespan}
Consider any fractional solution $x^t$ to the dynamic load balancing problem on unrelated machines. Then we have:
\begin{enumerate}[label={(\arabic*)}]
\item
For any $\delta > 0$,  a deterministic dynamic rounding scheme such that:
\begin{eqnarray}
\label{eq:th:kt:makespan:1}
\sum_{t \in [T]}  \psi^t \oplus \psi^{t-1}& \leq &  O\left(\frac{1}{\delta^4} \log \frac{1}{\delta} \cdot \log n \right) \cdot \sum_{t \in [T]} \norm{x^t - x^{t-1}}_1, \\
\label{eq:th:kt:makespan:2}
{\tt Obj}(\psi^t) & \leq & (2+\delta)  \beta \cdot \opt^t_{int} \qquad  \forall t \in [T].
\end{eqnarray} 
\item A randomized dynamic rounding scheme such that:
\begin{eqnarray}
\label{eq:th:kt:makespan:20}
\expect*{\sum_{t \in [T]}  \psi^t \oplus \psi^{t-1} }  & \leq & O(1) \cdot \sum_{t \in [T]} \norm{x^t - x^{t-1}}_1 +  \frac{T}{\poly(n)}, \\
\label{eq:th:kt:makepsan:21}
\expect*{{\tt Obj}(\psi^t)} & = & O\left( \frac{\log \log n}{\log \log \log n} \right) \cdot \opt^t_{int} \qquad \forall t \in [T].
\end{eqnarray}
\end{enumerate}
\end{restatable}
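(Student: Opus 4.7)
The plan is to verify that the feasible region $K_t$ defined by constraints \eqref{eq:kt:makespan:1}--\eqref{eq:kt:makespan:4} is exactly the natural fractional assignment polytope for the Lenstra--Shmoys--Tardos LP on unrelated machines with target makespan $\beta \cdot \opt^t_{int}$, and then directly invoke the two rounding schemes of Krishnaswamy, Levin, and Saranurak~\cite{KLS23}. Constraint \eqref{eq:kt:makespan:2} is the critical ingredient: by forcing $x^t_{ij}=0$ whenever $p_{ij} > \opt^t_{int}$, we prevent any single job from dominating the makespan once we integrally round, which is what enables a $(2+\delta)$-type guarantee in the first place. Together with the assignment constraints \eqref{eq:kt:makespan:1}, the domain restrictions \eqref{eq:kt:makespan:3}--\eqref{eq:kt:makespan:new}, and the machine load constraints \eqref{eq:kt:makespan:4}, this reproduces exactly the input format handled by \cite{KLS23}.

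For Part~(1), I would invoke the deterministic rounding from Appendix~B of \cite{KLS23}. Their scheme maintains an integral assignment $\psi^t$ with $\texttt{Obj}(\psi^t) \leq (2+\delta)\beta \cdot \opt^t_{int}$, from which the approximation bound~\eqref{eq:th:kt:makespan:2} is immediate. The heart of the argument to re-examine is the recourse bound \eqref{eq:th:kt:makespan:1}: each unit of $\ell_1$ movement in $x^t$ should trigger at most $O((1/\delta^4)\log(1/\delta)\log n)$ integral reassignments. This is typically established via a charging argument in which one maintains a structural certificate (e.g.\ a thresholded flow decomposition together with a balanced tree-like bookkeeping structure over jobs) whose local updates absorb fractional movement without cascading across the instance.

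For Part~(2), I would invoke the randomized rounding of Section~4 of \cite{KLS23}, which combines a dependent rounding scheme with a careful clustering over machines to reach the $O(\log\log n/\log\log\log n)$-approximation for makespan, giving \eqref{eq:th:kt:makepsan:21}. The $O(1)$ expected recourse per unit of $\ell_1$ movement in~\eqref{eq:th:kt:makespan:20} follows from the marginal-preservation property of dependent rounding applied incrementally across updates, and the additive $T/\poly(n)$ term collects the tiny failure probabilities across all $T$ time steps under an oblivious adversary.

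The main obstacle---and the reason the statement is cited as prior work rather than reproved in full---is the deterministic analysis in Part~(1), where one must simultaneously track both the approximation guarantee and the stability of the integral solution under adversarial fractional perturbations. A subtle point worth emphasizing, and one that justifies the black-box reuse, is that while the final load-balancing result of \cite{KLS23} was stated only under job arrivals (because their fractional subroutine could not handle departures), both rounding schemes themselves are oblivious to the source of the fractional trajectory $\{x^t\}$; composing them with our fully dynamic fractional chaser from \Cref{thm:main} therefore yields fully dynamic integral guarantees, as claimed.
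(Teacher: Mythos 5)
Your proposal matches the paper's argument: the paper proves this theorem by directly citing Appendix~B and Section~4 of \cite{KLS23} for the deterministic and randomized rounding schemes respectively, and notes in a footnote exactly the point you emphasize---that the arrival-only restriction in \cite{KLS23} stems from their fractional subroutine while the rounding schemes themselves are agnostic to the source of the fractional trajectory and hence work fully dynamically. The extra speculation about the internal mechanics of the \cite{KLS23} schemes (thresholded flow decomposition, dependent rounding, clustering) is not verified against that paper here, but since both you and the authors treat these as black boxes, it does not affect the correctness of the argument.
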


The row-sparsity of the covering constraints in $K_t$ is  $r := \max_{j \in \mathcal{J}} |M(j)|$. Thus, by \cref{thm:main}, we can maintain a fractional solution $x^t$ that satisfies the four constraints \eqref{eq:kt:makespan:1}-\eqref{eq:kt:makespan:new} as well as the constraint $\sum_{j \in J^t} p_{ij} \cdot x_{ij}^t \leq (1+\epsilon)  \beta \cdot \opt^t_{int}$ at all times $t$. Furthermore this solution has total recourse that is $O((\nicefrac{1}{\epsilon}) \log (\nicefrac{r}{\epsilon}))$-competitive  with respect to any offline algorithm which maintains a feasible fractional solution to all five constraints~\eqref{eq:kt:makespan:1}-\eqref{eq:kt:makespan:4}. Setting $\delta = \epsilon$ and combining this with  \cref{th:kt:makespan}, we get:

\begin{corollary}
\label{cor:kt:makespan} For any $\beta \geq 1$, $\epsilon \in (0, 1]$, and any update-sequence $\sigma$ with $T$ updates, there exist:
\begin{enumerate}[label={(\arabic*)}]
\item A deterministic dynamic algorithm that maintains an assignment $\psi^t$ with makespan at most $(2+\epsilon)  \beta \cdot \opt^t_{int}$ at all times, with total recourse  $O((\nicefrac{1}{\epsilon^5}) \log (\nicefrac{1}{\epsilon})\log n \log (\nicefrac{r}{\epsilon})) \cdot \optr^{\beta}(\sigma)$. 
\item A randomized dynamic algorithm that maintains an assignment $\psi^t$ with expected makespan  $O(\frac{\log \log n}{\log \log \log n}) \cdot \opt^t_{int}$ for all $t$, with expected total recourse  $O( \log (r)) \cdot \optr^{\beta}(\sigma) + \frac{T}{\poly(n)}$.
\end{enumerate}
Here, $\optr^{\beta}(\sigma)$ is the optimal total recourse for any {\em offline} algorithm which maintains a feasible {\em fractional} solution to all the five constraints~\eqref{eq:kt:makespan:1}-\eqref{eq:kt:makespan:4} throughout the update-sequence $\sigma$, and $r := \max_{j \in \mathcal{J}} |M(j)|$ is  the maximum number of machines that are willing to process any given job.
\end{corollary}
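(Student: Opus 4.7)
The proof is essentially a composition of \cref{thm:main} (online fractional chasing) with \cref{th:kt:makespan} (dynamic rounding), and the preceding discussion already does most of the setup. My plan is as follows.

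First, I would instantiate \cref{thm:main} on the positive body instance defined by \eqref{eq:kt:makespan:1}--\eqref{eq:kt:makespan:4}. The key observation is that the row sparsity of the covering constraints \eqref{eq:kt:makespan:1} is at most $r$: although the constraint $\sum_{i\in M} x^t_{ij}\ge 1$ is nominally a sum over all machines, the zero-fixing constraint \eqref{eq:kt:makespan:new} reduces it to a sum over $M(j)$, which has size at most $r$. The remaining constraints \eqref{eq:kt:makespan:2}--\eqref{eq:kt:makespan:new} simply fix coordinates to zero and are incorporated by restricting the variable space (equivalently, by pre- and post-composing with a projection). Applying \cref{thm:main} with parameter $\eps$ thus yields a fractional assignment $x^t$ satisfying \eqref{eq:kt:makespan:1}--\eqref{eq:kt:makespan:new} exactly and the packing constraint \eqref{eq:kt:makespan:4} up to a $(1+\eps)$ factor, whose total $\ell_1$-recourse is at most $O\!\left(\nicefrac{1}{\eps}\log\nicefrac{r}{\eps}\right)\cdot \optr^{\beta}(\sigma)$, since any offline algorithm feasible for all five constraints is in particular feasible for the chasing instance.

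Second, for part (1), I would feed this fractional solution into the deterministic rounding of \cref{th:kt:makespan}(1) with tolerance $\delta = \eps / c$ for a small absolute constant $c$. Treating the effective packing parameter as $(1+\eps)\beta$, the rounding produces an integral assignment $\psi^t$ with makespan $(2+\delta)(1+\eps)\beta\cdot\opt^t_{\mathrm{int}} \le (2+\eps)\beta\cdot\opt^t_{\mathrm{int}}$ after choosing $c$ appropriately. Its total integral recourse is at most $O\!\left(\nicefrac{1}{\delta^4}\log\nicefrac{1}{\delta}\log n\right)$ times the $\ell_1$-recourse of $x^t$, so composing with the fractional bound gives total recourse
\[
O\!\left(\tfrac{1}{\eps^{4}}\log\tfrac{1}{\eps}\log n\right)\cdot O\!\left(\tfrac{1}{\eps}\log\tfrac{r}{\eps}\right)\cdot \optr^{\beta}(\sigma) \;=\; O\!\left(\tfrac{1}{\eps^{5}}\log\tfrac{1}{\eps}\log n \log\tfrac{r}{\eps}\right)\cdot \optr^{\beta}(\sigma),
\]
as claimed.

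For part (2), I would simply invoke \cref{thm:main} with a fixed constant parameter (say $\eps = 1$), yielding a fractional solution with recourse $O(\log r)\cdot\optr^{\beta}(\sigma)$, and then apply the randomized rounding of \cref{th:kt:makespan}(2), which blows up recourse by only a constant factor plus the stated additive $T/\poly(n)$, and produces expected makespan $O(\log\log n/\log\log\log n)\cdot\opt^t_{\mathrm{int}}$.

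I do not anticipate any substantive obstacle; the only delicate point is the compatibility between the $(1+\eps)$-augmented solution output by \cref{thm:main} and the rounding scheme of \cref{th:kt:makespan}, whose stated guarantee is in terms of a feasible fractional solution. This is handled either by absorbing the $(1+\eps)$ into the effective $\beta$ that the rounding sees (as above), or alternatively by pre-scaling $x^t$ by $1/(1+\eps)$ to restore feasibility of the packing constraints while losing a $(1+\eps)$ factor on the covering side; the two approaches are equivalent up to constants and give the stated bounds.
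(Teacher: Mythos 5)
Your proposal follows exactly the paper's route: instantiate \cref{thm:main} on the positive body defined by \eqref{eq:kt:makespan:1}--\eqref{eq:kt:makespan:4}, observe the covering row-sparsity is at most $r$, and compose with \cref{th:kt:makespan}. However, there is a small arithmetic slip in how you close the makespan bound for part (1). You claim that choosing the rounding tolerance $\delta = \eps/c$ alone yields $(2+\delta)(1+\eps)\beta\cdot\opt^t_{\text{int}} \le (2+\eps)\beta\cdot\opt^t_{\text{int}}$, but this inequality is false for any positive $\delta$: indeed $(2+\delta)(1+\eps) \ge 2(1+\eps) = 2 + 2\eps > 2+\eps$. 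The fix is standard and mild --- one must also run \cref{thm:main} with a \emph{rescaled} parameter $\eps' = \eps/c'$ so that the augmented packing bound is $(1+\eps')\beta\cdot\opt^t_{\text{int}}$, giving $(2+\delta)(1+\eps')\beta \le (2+\eps)\beta$ for appropriate absolute constants $c,c'$, and this rescaling only changes the recourse bound by constants. The paper handles this implicitly (it ``sets $\delta = \eps$'' and absorbs the resulting $(2+\eps)(1+\eps)$ into $(2+\eps)$ by the usual $\eps$-rescaling convention); you made the rescaling explicit but applied it to only one of the two factors. The rest of the argument, including part (2) with a constant chasing parameter and the randomized rounding, is correct and matches the paper.
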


\subsection{Dynamic Minimum Spanning Tree}

\label{sec:mst}

In the minimum spanning tree (MST) problem, we are given a connected graph $G = (V, E)$ with $|V| = n$ nodes, and a cost $c_e \geq 0$ associated with each edge.  Our goal is to compute a spanning tree $\mathcal{T} = (V, E_\mathcal{T})$ of $G$ of minimum total cost $\sum_{e \in E_\mathcal{T}} c_e$. 

In the dynamic setting, at each time-step $t\in [T]$ an edge gets inserted into/deleted from the set $E$, subject to the condition that the input graph $G = (V, E)$ remains connected. Let $E^t$ denote the state of $E$ at time $t$, and let $\opt^t$ denote the value of the minimum cost spanning tree of $G^t = (V, E^t)$. We wish to maintain a spanning tree $\mathcal{T}^t = (V, E_\mathcal{T}^t)$ in $G^t$ whose cost is always within a factor $\beta \geq 1$ of $\opt^t$, such that the total recourse, $\sum_{t \in [T]} \left| E_\mathcal{T}^t \oplus E_\mathcal{T}^{t-1}\right|$ is as small as possible. The fractional version of the dynamic MST problem fits naturally into our framework with the following
\begin{align}
\label{eq:kt:mst}
    K_t & := \left\{x^t\in \mathbb{R}^{\binom{|V|}{2}}_+ \ \middle| \  
    \begin{array}{cc}
         \sum_{e \in \partial^t(S)} x^t_{e}  \geq 1 & \forall \ \emptyset \subset S \subset V, \\
          x_e^t = 0 & \forall e \in \binom{V}{2} \setminus E^t, \\
          \sum_{e \in E^t}  c_e \cdot x_e^t  \leq  \beta \cdot \textsc{Opt}^t. & 
    \end{array} \ \right\}.
\end{align}
Note that the vector $x^t$ has an entry for every {\em potential edge}  $e \in \binom{V}{2}$. If a potential edge $e$ is currently not present in the graph, then we set $x_e^t = 0$. The notation $\partial^t(S)$ denotes the set of all edges in $G^t$ crossing the cut $S$.  We say that an $x^t \in \mathbb{R}^{\binom{V}{2}}_+$ is a {\em fractional MST} in $G^t$, with cost at most $\beta \cdot \opt^t$, iff $x^t \in K_t$ as per~\eqref{eq:kt:mst} above. The main result in this section is summarized in the theorem below. We defer the proof of \cref{th:kt:mst} to \cref{app:th:kt:mst}.

\begin{restatable}[MST Rounding]{theorem}{mstround}
\label{th:kt:mst}
Consider any fractional solution $x^t$ to the dynamic MST problem, and any $\delta > 0, \alpha \geq 1$. Then there exists a randomized dynamic rounding scheme which maintains a spanning tree $\mathcal{T}^t = (V, E_\mathcal{T}^t)$ of $G^t$ such that:
\begin{eqnarray}
\label{eq:th:kt:mst:1}
\expect*{\sum_{t \in [T]} \left\vert E_\mathcal{T}^t \oplus E_\mathcal{T}^{t-1}\right\vert} & \leq &  O\left( \frac{\alpha \log n}{\delta^2}\right) \cdot \sum_{t \in [T]} \norm{x^t - x^{t-1}}_1 +  \frac{T}{O(n^{\alpha})}, \\
\label{eq:th:kt:mst:2}
\sum_{e \in E_\mathcal{T}^t} c_e & \leq & (2+\delta)   \cdot \sum_{e \in E^t} c_e \cdot x_e^t \qquad  \forall t \in [T].
\end{eqnarray} 
\end{restatable}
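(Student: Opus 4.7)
My plan is to follow the \emph{stabilizer} recipe outlined in the overview. I will maintain a random subgraph $F^t \subseteq E^t$, the stabilizer, and output $H^t := \operatorname{MST}(F^t)$ recomputed by any standard dynamic MST routine such as Holm--Lichtenberg--Thorup. Two properties must be established: (a) the expected edge recourse of $F^t$ is at most $O(\alpha \log n / \delta^2) \cdot \|x^t - x^{t-1}\|_1$ at each step; and (b) with probability at least $1 - n^{-\alpha}$, the stabilizer is connected and satisfies $\operatorname{MST}(F^t) \leq (2+\delta)\sum_e c_e x_e^t$. Because the MST of any graph changes by at most two edges on any single edge insertion or deletion (a swap with the lightest replacement suffices), (a) transfers directly to a bound on $|E_H^t \oplus E_H^{t-1}|$, while failures of (b) only contribute the additive $T/n^{\alpha}$ term after defaulting to an arbitrary spanning tree when they occur.

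The construction is based on a single table of thresholds. At initialization, for each potential edge $e \in \binom{V}{2}$ draw an independent $\tau_e \sim \operatorname{Unif}[0,1]$ and fix it; at time $t$ include $e$ in $F^t$ iff $e \in E^t$ and $\tau_e \leq \min\{1, \kappa \, x_e^t\}$, where $\kappa = \Theta(\alpha \log n / \delta^2)$. Since $\tau_e$ is time-invariant, edge $e$ flips its membership in $F^t$ at step $t$ only when $\min\{1,\kappa x_e^t\}$ crosses $\tau_e$, an event of probability at most $\kappa |x_e^t - x_e^{t-1}|$; summing by linearity gives (a). Edges with $x_e^t = 0$ stay out of $F^t$, so edge deletions from $G^t$ incur no sampling-induced recourse by themselves.

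For (b), connectivity of $F^t$ follows from a Karger-style cut sparsifier argument: every cut $\partial^t(S)$ of $G^t$ has fractional mass $\sum_{e \in \partial^t(S)} x_e^t \geq 1$, so the expected number of sampled edges in that cut is at least $\kappa$; combining Chernoff with Karger's polynomial bound on nearly minimum cuts gives a union bound and establishes connectivity with probability at least $1 - n^{-\alpha}$. For the cost bound I plan to use the Kruskal integral representation
\[
\operatorname{MST}(F^t) \;=\; \int_0^\infty \!\bigl(\operatorname{comp}(F^t_{\leq \theta}) - 1\bigr)\, d\theta ,
\]
where $F^t_{\leq \theta} := \{e \in F^t : c_e \leq \theta\}$, combined with the ``integrality-gap-$2$'' bound $\operatorname{comp}(G^t_{\leq \theta}) - 1 \leq 2\sum_{e: c_e > \theta} x_e^t$, which follows by summing the cut-covering constraint over the components of $G^t_{\leq \theta}$ and double-counting each edge. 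It then suffices to union-bound over the at most $\binom{n}{2}$ distinct edge weights and show $\operatorname{comp}(F^t_{\leq \theta}) \leq (1+\delta)\operatorname{comp}(G^t_{\leq \theta})$ with high probability for every $\theta$.

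The main obstacle lies precisely in that last step: each component $C_i$ of $G^t_{\leq \theta}$ may have weak internal fractional connectivity because some mass of $\sum_{e \in \partial^t(S)} x_e^t \geq 1$ can ``leak'' outside $C_i$ via edges of cost $> \theta$, so sampling at rate $\kappa$ does not directly guarantee internal connectivity of $F^t_{\leq \theta} \cap E(C_i)$. To close this gap, I anticipate needing to sample at rate $(2+\delta)\kappa$ rather than $\kappa$, so that the portion of fractional cut mass remaining inside $C_i$ still exceeds the $\Omega(\log n / \delta^2)$ rate required by the Bencz\'ur--Karger cut-sparsifier bound, while the extra $(1+\delta)$ factor absorbs the sampling slack. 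The careful bookkeeping that transfers this per-component concentration into a clean $(2+\delta)$-factor bound via the Kruskal integral is where most of the analysis resides.
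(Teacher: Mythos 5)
Your sampling construction and the recourse bound in part (a) match the paper's proof exactly: time-invariant uniform thresholds $\chi_e$, inclusion when $\kappa x_e^t$ crosses the threshold, and linearity giving the $O(\alpha\log n/\delta^2)\cdot\|x^t-x^{t-1}\|_1$ recourse for the stabilizer. The overall architecture (stabilizer plus a dynamic MST maintainer with $O(1)$ absolute recourse inside it, defaulting to an arbitrary spanning tree on the bad event to get the $T/n^\alpha$ additive term) is also the same.

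The gap is in part (b), and you have in fact put your finger on exactly the obstacle that makes this step nontrivial, but your proposed fix does not work. You note that a component $C_i$ of $G^t_{\leq\theta}$ may have weak internal fractional connectivity because the cut mass $\sum_{e\in\partial^t(S)}x^t_e\geq 1$ can leak across edges of cost $>\theta$, and you propose to compensate by boosting the uniform sampling rate from $\kappa$ to $(2+\delta)\kappa$. This cannot close the gap: the leakage can be arbitrarily severe. Concretely, a cut $S\subseteq C_i$ may have $\sum_{e\in\partial^t(S),\,c_e\leq\theta}x^t_e$ as small as $1/\mathrm{poly}(n)$ with the rest of the unit mass carried by expensive edges, and no constant-factor boost to $\kappa$ yields the $\Omega(\log n/\delta^2)$ expected sampled edges per cut that the Bencz\'ur--Karger concentration argument needs. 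The fractional values $x^t_e$ alone simply do not encode enough information about the cost structure to support a cost-preservation bound at any uniform rate proportional to $x_e^t$.

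The paper's resolution is a decoupling trick you are missing. It defines a second, cost-aware sampling probability $q_e^t \propto c_e x_e^t / (\sum_{e'}c_{e'}x^t_{e'})$, sets $r_e^t=\max\{p_e^t,q_e^t\}$, and considers the (hypothetical) subgraph $Z^t\supseteq F^t$ obtained by sampling with $r_e^t$ against the \emph{same} thresholds $\chi_e$. A known sparsification lemma (Lemma 3.8 of the cited CK18 paper) shows $Z^t$ contains a cheap fractional spanning tree with high probability, hence an integral one of cost $\leq 2(1+\delta)\sum_e c_e x_e^t$. Then, crucially, any edge $e^*\in Z^t\setminus F^t$ has $q^t_{e^*}>p^t_{e^*}$, which forces $c_{e^*}\geq 100\sum_e c_e x_e^t$; such an edge is too expensive to lie on that cheap spanning tree, so the tree in fact lives entirely in $F^t$. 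This way the recourse analysis sees only the cost-blind rates $p_e^t$ while the cost analysis gets to use the cost-aware rates $r_e^t$. Without this (or an equivalent) mechanism, your Kruskal-integral plan has no valid way to certify $\operatorname{comp}(F^t_{\leq\theta})\leq(1+\delta)\operatorname{comp}(G^t_{\leq\theta})$ for all $\theta$.
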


In this problem, the row-sparsity of the covering constraints is at most $\binom{n}{2} = \Theta(n^2)$. Thus, by \cref{thm:main}, we can maintain a fractional MST $x^t$ in $G^t$ that has cost  $\sum_{e \in E^t} c_e \cdot x_e^t \leq (1+\epsilon) \beta \cdot \opt^t$, and has a $O((\nicefrac{1}{\epsilon}) \log (\nicefrac{n}{\epsilon}))$-competitive total recourse with respect to any offline algorithm which maintains a fractional MST of cost at most  $\beta \cdot \opt^t$ at all times. Setting $\delta = \epsilon$ and $\alpha$ to be some large constant, and combining this with \cref{th:kt:mst}, we get:

\begin{corollary}
\label{cor:kt:mst} For any $\beta \geq 1$, $\epsilon \in (0, 1]$, and any update-sequence $\sigma$ with $T$ updates, there exists a randomized dynamic algorithm that maintains a spanning tree of cost at most $(2+\epsilon)\beta \cdot \opt^t$ at all times, with expected total recourse $O((\nicefrac{1}{\epsilon^3}) \cdot \log n \cdot \log (\nicefrac{n}{\epsilon})) \cdot \optr^{\beta}(\sigma) + \frac{T}{\poly(n)}$. Here, $\optr^{\beta}(\sigma)$ is the optimal total recourse for any {\em offline} algorithm which maintains a fractional MST of cost at most $\beta \cdot \opt^t$ throughout the update-sequence $\sigma$.
\end{corollary}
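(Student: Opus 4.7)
The plan is to combine \cref{thm:main} applied to the bodies $K_t$ of~\eqref{eq:kt:mst} with the rounding scheme of \cref{th:kt:mst}, which is essentially the strategy sketched informally in the paragraph preceding the corollary. First I would instantiate \cref{thm:main} with parameter $\epsilon' = \Theta(\epsilon)$ on the online sequence $\{K_t\}$. Every covering constraint of $K_t$ is a cut constraint $\sum_{e \in \partial^t(S)} x_e^t \geq 1$, so the maximum row sparsity $d$ is at most $\binom{n}{2} = O(n^2)$, giving a competitive ratio of $O((\nicefrac{1}{\epsilon})\log(\nicefrac{n}{\epsilon}))$. Since the resulting $x^t$ lies in $K_t^{1+\epsilon'}$, it fully satisfies all covering cut constraints and violates only the single packing constraint by at most a factor $(1+\epsilon')$; hence it is a fractional MST with $\sum_{e \in E^t} c_e x_e^t \leq (1+\epsilon')\beta \cdot \opt^t$ at every time $t$, and the total $\ell_1$-movement of the sequence is at most $O((\nicefrac{1}{\epsilon})\log(\nicefrac{n}{\epsilon})) \cdot \optr^{\beta}(\sigma)$.

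Next I would feed this fractional solution directly into \cref{th:kt:mst} with $\delta = \Theta(\epsilon)$ and $\alpha$ chosen as a sufficiently large constant so that the additive term $T/O(n^\alpha)$ collapses to $T/\poly(n)$. The cost guarantee~\eqref{eq:th:kt:mst:2} yields
\[
c(H^t) \;\leq\; (2+\delta)\sum_{e \in E^t} c_e x_e^t \;\leq\; (2+\delta)(1+\epsilon')\beta\cdot \opt^t \;\leq\; (2+\epsilon)\beta\cdot \opt^t,
\]
after tuning the hidden constants in $\delta$ and $\epsilon'$. The recourse guarantee~\eqref{eq:th:kt:mst:1} amplifies each unit of fractional movement to $O(\alpha \log n / \delta^2) = O(\log n / \epsilon^2)$ expected integral recourse, so chaining this with the fractional movement bound from the previous paragraph gives
\[
\mathbb{E}\!\left[\sum_{t \in [T]} |E_H^t \oplus E_H^{t-1}|\right] \;\leq\; O\!\left(\tfrac{\log n}{\epsilon^2}\right) \cdot O\!\left(\tfrac{1}{\epsilon}\log\tfrac{n}{\epsilon}\right) \cdot \optr^{\beta}(\sigma) \;+\; \tfrac{T}{\poly(n)},
\]
which is exactly the claimed bound.

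The only point that merits careful verification is that the two benchmarks are compatible. The competitive guarantee of \cref{thm:main} is stated against any offline sequence of points $\bar{x}^t \in K_t$, and by definition $\optr^{\beta}(\sigma)$ here is the minimum total recourse over precisely such fractional sequences (any feasible $\beta$-approximate fractional MST at time $t$ is a point in $K_t$ by~\eqref{eq:kt:mst}). Hence the fractional competitive bound from \cref{thm:main} may be invoked against $\optr^{\beta}(\sigma)$ directly, with no intermediate reduction, and the rest of the argument is routine bookkeeping of the two constant-factor blowups.
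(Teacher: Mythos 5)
Your proposal is correct and follows essentially the same route as the paper: instantiate \cref{thm:main} on the bodies of~\eqref{eq:kt:mst} (noting $d=O(n^2)$), then compose with the rounding scheme of \cref{th:kt:mst} at $\delta=\Theta(\epsilon)$ and constant $\alpha$. You are in fact slightly more careful than the paper's informal sketch in tuning $\epsilon'$ and $\delta$ so that the product $(2+\delta)(1+\epsilon')$ stays below $2+\epsilon$, and your final paragraph correctly verifies that the competitive benchmark of \cref{thm:main} coincides with the definition of $\optr^{\beta}(\sigma)$.
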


 We remark that we can easily extend our analysis to allow for other forms of updates, such as  changing the cost of an edge that is being inserted, or vertex insertions/deletions.

\appendix

\section{Reducing Positive Bodies to Positive Halfspaces}\label{sec:reduction_halfspaces}

In this section we show that it is enough to assume that at each time $t$, each positive body $K_t$ is defined by a single covering or packing constraint.  

Suppose we have an algorithm $\mathcal{A}$ for chasing \emph{halfspaces} that satisfies covering constraints fully and satisfies packing constraints up to $\eps$: the algorithm produces $x^t$ such that for all $t \in C$ we have $\inner{c^t}{x} \geq 1$ and for all $t \in P$ we have $\inner{p^t}{x} \leq 1+\eps$.

The following is an algorithm $\mathcal{A}'$ for the that satisfies packing constraints up to violation $\delta$. Given a body $K_t\neq \emptyset$, while there a constraint of $K_t$ violated by at least $\delta / 10$, feed that constraint to algorithm $\mathcal{A}$ run with parameter $\eps = \delta / 20$. With every such fix, algorithm $\mathcal{A}$ moves at least $\Omega(\nicefrac{\delta}{w})$, where $w$ is the maximal non-zero coefficient in the constraint. Since there exists a finite-movement optimal solution that satisfies all the constraints, and the competitive ratio of algorithm $\mathcal{A}$ that chases half-spaces is also finite, we get that the number of such rounds is finite.

Hence, after finitely many rounds, all the constraints in $K_t$ are satisfied up to $\delta/10$. By scaling up the solution by $(1-\delta/10)^{-1}$ (in all times $t$), we may ensure that all covering constraints are fully satisfied, and packing constraints are violated by up to $\delta$. Furthermore, if $\mathcal{A}$ is $c$-competitive for positive halfspace chasing, then the solution output by $\mathcal{A}$ is at most $c$ that of the optimal algorithm that chases the bodies $K_t$. Hence algorithm $\mathcal{A}'$ is also $O(c)$ competitive for positive body chasing.

\section{Proof of \cref{setcover_rounding}}
\label{appendix_a}

\setcovround*

\begin{proof} We prove each part separately.

\textbf{Proof of (1):}  Initialize  $\mathcal{S}^0 = \emptyset$. Subsequently, while handling the update at time $t$, initialize $\mathcal{S}^t = \mathcal{S}^{t-1}$. Next, whenever $x^t_i \geq 1/f$ for some $S_i \notin \mathcal{S}^{t}$, add $S_i$ to $\mathcal{S}^t$. In contrast, whenever the LP-value $x^t_i \leq \nicefrac{1}{2f}$ for some  $S_i \in \mathcal{S}^t$, remove $S_i$ from $\mathcal{S}^t$.

Thus, we always have $\{S_i : x_i \geq \nicefrac{1}{2f} \} \supseteq \mathcal{S} \supseteq \{ S_i : x_i \geq \nicefrac{1}{f}\}$. Since $x$ is a valid fractional set cover, this ensures that $\mathcal{S}$ is a valid integral set cover with cost at most $2f \cdot \sum_{i=1}^m c(S_i) \cdot x^t_i$. 

To bound the recourse, note that we insert/delete a set $S_i$  in $\mathcal{S}$ (thereby incurring a recourse of one) only after its LP-value has changed by at least $\nicefrac{1}{2f}$. In other words, the recourse of our rounding scheme is at most $2f$ times the recourse incurred by the underlying dynamic algorithm which maintains the fractional solution $x$.

\textbf{Proof of (2):}
We maintain two collections of sets $\mathcal{R}^t$ and $\mathcal{B}^t$, and output as our solution $\mathcal{S}^t := \mathcal{R}^t \cup \mathcal{B}^t$. (The reader may think of $\mathcal{R}$ as being the {\em randomly sampled} sets, and $\mathcal{B}$ as the {\em backup} sets.) For each $i \in [m]$, the set $S_i$ has an exponential random variable $\chi_i$ with parameter $\lambda :=   \alpha \log n$ that is sampled once and fixed at the outset of the dynamic process. Throughout the sequence of updates, any every time-step we define $\mathcal{R}^t := \{ S_i \in \mathcal{S} : x^t_i \geq \chi_i \}$. For each element $u \in U$, let $S(u)$ denote a minimum cost set in $\{S_1, \ldots, S_m\}$ which contains $u$. Define $\mathcal{B}^t = \{S(u) \ | \ u \text{ uncovered by } \mathcal{R}^t\}$. 

Clearly $\mathcal{R}^t \cup \mathcal{B}^t$ is a feasible set cover; it remains to analyze the approximation ratio and the recourse. We do so separately for $\mathcal{R}^t$ and $\mathcal{B}^t$.

To analyze $\mathcal{R}^t$, note that for all $i \in [m]$, we have
\begin{align*}
    \Pr[S_i \in \mathcal{R}^t] &= \Pr[x_i^t \geq \chi_i] = 1 - \exp\left( -x_i^t \cdot   \alpha \log n \right) \leq x_i^t \cdot   \alpha \log n.
\end{align*}
Summing over $i\in [m]$, we bound the expected cost of $\mathcal{R}^t$ as $\expect{c(\mathcal{R}^t)} \leq \sum_{i \in [m]} c(S_i) \cdot x_i^t \cdot  \alpha \log n$.
   
For the recourse of $\mathcal{R}$, observe that due to the $t^{th}$ update, the set $S_i$ contributes a recourse of one iff $\chi_i$ lies in an interval of length $\left| x^{t}_{i} -  x^{t-1}_i \right|$. As the exponential distribution is memoryless, this event occurs with probability at most
$1 - \exp\left(- \left|x_i^t - x_i^{t-1} \right| \cdot    \alpha \log n\right) \leq \left|x_i^t - x_i^{t-1} \right| \cdot   \alpha \log n$.  
Summing over all $i \in [m]$, the expected  recourse of $\mathcal{R}$ due to the $t^{th}$ update is given by:
\begin{align*}
\mathbb{E}\left[ \left| \mathcal{R}^t \oplus \mathcal{R}^{t-1}  \right| \right] \leq  \alpha \log n \cdot \norm{x^t - x^{t-1}}_1.
\end{align*}

Moving on to $\mathcal{B}^t$, consider any element $u \in U^t$.  We have: 
\begin{align*}\Pr\left[u \text{ not covered by } \mathcal{R}^t \right]  &=  \prod_{i \in [m] : \, u \in S_i} \Pr[S_i \notin \mathcal{R}^t] = \prod_{i \in [m] : \, u \in S_i} e^{- x_i^t \cdot   \alpha \log n}  \\ & =  \exp\left\{-  \alpha \log n \cdot  \left(\sum_{i \in [m] : \, u \in S_i} x_i^t\right)\right\} \leq \frac{1}{n^\alpha},
\end{align*}
where the last inequality holds since $x_i^t$ is a feasible fractional set cover. Since $|U^t| \leq n$, it follows that the expected number of uncovered (with respect to $\mathcal{R}^t)$ elements in $U^t$ is at most $n \cdot \nicefrac{1}{n^\alpha} \leq \nicefrac{1}{n^{\alpha - 1}}$. Each of these uncovered elements contributes one set to $\mathcal{B}^t$. Accordingly, the expected recourse of $\mathcal{B}$ for the $t^{th}$  update is at most  $\mathbb{E}\left[ \mathcal{B}^t \oplus \mathcal{B}^{t-1} \right] \leq \mathbb{E}\left[ |\mathcal{B}^t| + |\mathcal{B}^{t-1}| \right] = O(\nicefrac{1}{n^{\alpha-1}})$.

To bound the expected cost of $\mathcal{B}^t$, note that each set $B \in \mathcal{B}^t$ is added because of an element $u \in U^t$ that is uncovered by $\mathcal{R}^t$. Since $B = S(u)$ is a cheapest set covering $u$ and  $x^t_i$ is a feasible fractional set cover for $U^t$, we have $c(B) \leq \sum_{i \in [m]} c(S_i) \cdot x_i^t$. By the argument above, we now derive that:
\[\mathbb{E}\left[ c(\mathcal{}B^t) \right] \leq \mathbb{E} \left[ |\mathcal{B}^t| \cdot \sum_{i \in [m]} c(S_i) \cdot x^t_i\right] \leq \frac{1}{n^{\alpha-1}} \cdot \sum_{i \in [m]} c(S_i) \cdot x^t_i = O(\alpha \log n) \sum_{i \in [m]} c(S_i) \cdot x^t_i.\]

To summarize, the total expected cost of $\mathcal{S}^t = \mathcal{R}^t \cup \mathcal{B}^t$ is at most $O(\alpha \log n) \cdot \sum_{i \in [m]} c(S_i) \cdot x^t_i$, and the expected recourse per update is at most $\alpha \log n \cdot  \|x^t - x^{t-1}\|_1 + O(1/\alpha)$.
\end{proof}
\section{Proof of \cref{th:matching:rounding}}
\label{app:matching}

\maxmatchround*

Our rounding scheme consists of two steps. In the first, we  maintain a subgraph $H^t$ of $G^t$, which we call a {\em stabilizer}, such that  the recourse of $H^t$ is bounded with respect to the $\ell_1$ movement of  $x^t$, and   $H^t$  contains a matching whose size is very close to ${\tt Obj}(x^t) := \sum_{e \in E}  x_e^t$.  In the second step, we  maintain a $(1-\delta)$-approximate maximum matching in $H^t$ whose recourse is small with respect to the number of updates to $H^t$.

In the first step, we show:
\begin{lemma}
\label{lm:matching:sparsifier}
We can maintain a subgraph $H = (V, E_H)$ of $G^t$ such that:
\begin{enumerate}[label={(\arabic*)}]
\item    $\mathbb{E}[\mu(H^t)] \geq (1-\delta) \cdot {\tt Obj}(x^t)$ at all times $t$, where $\mu(H^t)$  is the  maximum matching size in $H^t$.
\item  $\displaystyle \expect*{\sum_t \left\vert E_H^{t-1} \oplus E_H^t\right\vert }\leq O\left(\frac{\alpha \log n}{\delta^2} \right) \cdot \sum_t \norm{x^t - x^{t-1}}_1 +  O\left(\frac{T}{n^{\alpha}}\right)$. 
\end{enumerate}
\end{lemma}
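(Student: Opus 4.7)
I would construct $H^t$ by a correlated edge-sampling scheme that boosts every fractional mass $x_e^t$ by a factor $\lambda := C\alpha \log n / \delta^2$ for a sufficiently large absolute constant $C$. At the outset, independently sample $\tau_e \sim \mathrm{Unif}[0,1]$ for every potential edge $e \in \binom{V}{2}$, and at each time $t$ define
\[
    H^t \ := \ \bigl\{ e \in E^t \ : \ \tau_e \le \min(1,\lambda x_e^t) \bigr\}.
\]
This gives each edge marginal inclusion probability $\min(1, \lambda x_e^t)$ while sharing the randomness across time so that $H$ only changes when $x$ does.

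\textbf{Recourse.} The membership of $e$ in $H$ flips between times $t-1$ and $t$ precisely when $\tau_e$ falls strictly between $\min(1,\lambda x_e^{t-1})$ and $\min(1,\lambda x_e^t)$. Since the map $x \mapsto \min(1,\lambda x)$ is $\lambda$-Lipschitz and $\tau_e$ is uniform on $[0,1]$, the flip probability is at most $\lambda |x_e^t - x_e^{t-1}|$. Summing over all edges and time steps delivers the multiplicative term in bound $(2)$; the additive $O(T/n^{\alpha})$ slack is left to absorb the low-probability failure events from the concentration argument below. Insertions and deletions of edges in $E$ are captured automatically, since they force $x_e$ to move.

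\textbf{Matching size.} I would exhibit a random fractional matching $y$ supported on $H^t$ of expected total value $\mathrm{Obj}(x^t)$, and then invoke the integrality of the bipartite matching polytope to transfer this to $\mu(H^t)$. Split the edges into \emph{big} ones with $x_e^t > 1/\lambda$ (always in $H^t$) and \emph{small} ones with $x_e^t \le 1/\lambda$ (in $H^t$ with probability exactly $\lambda x_e^t$), and set
\[
    y_e \ := \ \begin{cases} x_e^t & \text{if $e$ is big and $e \in H^t$,} \\ 1/\lambda & \text{if $e$ is small and $e \in H^t$,} \\ 0 & \text{otherwise.} \end{cases}
\]
Then $\mathbb{E}[y_e] = x_e^t$ for every $e$, so $\mathbb{E}\bigl[\sum_e y_e\bigr] = \mathrm{Obj}(x^t)$. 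For each vertex $v$, the big-edge contribution to $\sum_{e \in \partial(v)} y_e$ is deterministic and at most $\sum_{e \in \partial(v),\,\text{big}} x_e^t$, whereas the small-edge contribution is $1/\lambda$ times a sum of independent Bernoullis with mean $\lambda \sum_{e \in \partial(v),\,\text{small}} x_e^t$. A multiplicative Chernoff bound combined with the choice $\lambda \ge C\alpha\log n / \delta^2$ yields $\sum_{e \in \partial(v)} y_e \le (1+\delta)\sum_{e \in \partial(v)} x_e^t \le 1+\delta$ simultaneously over all $v$ and all $t$ with probability $1 - n^{-\Omega(\alpha)}$ per step. On this good event, $y/(1+\delta)$ is a feasible fractional matching in the bipartite graph $H^t$, so $\mu(H^t) \ge \sum_e y_e / (1+\delta)$. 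Taking expectations, bounding $\sum_e y_e \le n$ on the rare bad event, and relabeling $\delta$ then yields bound $(1)$.

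\textbf{Main obstacle.} The delicate step is arranging the per-vertex concentration so that Chernoff applies cleanly: a single big edge can carry $y_e$ up to $1$ and would destroy any bounded-summand Chernoff bound if bundled into the random sum, so big edges must be treated deterministically and the small-edge Bernoullis isolated for concentration. A secondary subtlety is that the union-bound failure probability over $v$ and $t$ must sit below the allowed additive $O(T/n^{\alpha})$ slack, which is exactly why the boost factor $\lambda$ needs the $\alpha$ in the numerator and the $\delta^{-2}$ in the denominator.
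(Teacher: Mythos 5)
Your construction is correct and is essentially the same argument the paper uses, just bookkept differently. The paper creates a multi-graph $F$ by drawing $\kappa = \Theta(\alpha \log n / \delta^2)$ independent thresholds $\chi_{e,i}$ per edge and including the $i$-th copy of $e$ iff $x_e^t > \chi_{e,i}$; you instead draw a single threshold $\tau_e$ and include $e$ iff $\tau_e \le \min(1, \lambda x_e^t)$ with $\lambda = \Theta(\alpha \log n / \delta^2)$. These are two standard ways to realize boosted independent sampling with shared randomness across time, and both deliver the same $\Theta(\lambda) |x_e^t - x_e^{t-1}|$ flip probability and the same per-vertex Chernoff concentration. One structural difference: on the low-probability event where the per-vertex bound fails, the paper augments $H^t$ with a backup maximum matching $Q^t$ of $G^t$ (this is what generates the additive $O(T/n^{\alpha})$ term in their recourse bound); you omit the backup and instead simply drop the bad event from the expectation, paying an additive loss in the matching-size guarantee rather than in recourse. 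Both routes work, and yours is marginally simpler on the recourse side.

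Two small points you gloss over. First, your worst-case bound "$\sum_e y_e \le n$" on the bad event is not quite right: small edges each carry $y_e = 1/\lambda$ and there can be $\Theta(n^2)$ of them, so the correct bound is $O(n^2/\lambda)$ (the paper similarly uses $n^2$); this only shifts the constant in the exponent of $\lambda$. Second, and more substantively, absorbing the resulting additive error $O(n^{2-\Omega(\alpha)})$ into the multiplicative $(1-\delta)$ factor in part (1) requires knowing that $\mathrm{Obj}(x^t) = \Omega(1)$ whenever $E^t \ne \emptyset$. This is true because the body $K_t$ forces $\sum_e x_e^t \ge \beta \cdot \mathrm{Opt}^t \ge \beta$, and the paper makes this explicit at the end of its proof; your "relabeling $\delta$" step implicitly needs this same observation, and without it the argument would fail for instances where the fractional optimum is vanishingly small.
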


We defer the proof of \cref{lm:matching:sparsifier} for a moment, while we show how to complete the proof of the theorem.

\begin{lemma}
\label{lm:matching:recourse}
We can maintain a $(1-\delta)$-approximate maximum  matching $M^t$ in $H^t$, with  $O(\nicefrac{1}{\delta})$ absolute recourse per update in $H^t$.
\end{lemma}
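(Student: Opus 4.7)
} The plan is to maintain $M \subseteq E_H$ subject to the classical invariant $(\star)$: \emph{the graph $H$ contains no $M$-augmenting path of length at most $2k-1$}, where $k := \lceil 1/\delta \rceil$. By the theorem of Hopcroft and Karp, $(\star)$ implies $|M| \geq (1 - 1/k)\,\mu(H) \geq (1-\delta)\,\mu(H)$, which yields the required approximation ratio.

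To maintain $(\star)$ under updates, I would handle each edge change in $E_H$ as follows. Upon insertion of an edge $e$, or deletion of an edge $e$ (in the latter case, one or both of the endpoints of $e$ may become exposed if $e\in M$), run a truncated BFS to depth $2k-1$ from the endpoints affected by the update, searching for an $M$-augmenting path $P$ of length at most $2k-1$. If such a $P$ is found, update $M \leftarrow M \triangle E(P)$; otherwise leave $M$ unchanged. Since each such augmentation flips the matched/unmatched status of at most $2k-1 = O(1/\delta)$ edges, each augmentation contributes $O(1/\delta)$ to the recourse.

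The crux of the argument is to show that $O(1)$ augmentations per update to $H$ suffice to restore $(\star)$, which then yields the claimed absolute recourse bound of $O(1/\delta)$ per update in $H$. This relies on a vertex-disjointness argument. Before the update, no $M$-augmenting path of length $\leq 2k-1$ existed in $H$ by $(\star)$. After the update, any newly created short augmenting path must involve the updated edge or a newly exposed vertex. After we augment along a shortest such path, any hypothetical remaining short $M'$-augmenting path would be vertex-disjoint from the augmenting path we used (else the shortest one would have been shorter still, or would not have traversed the updated edge, contradicting minimality or $(\star)$), and so would have existed before the update as an $M$-augmenting path, again contradicting $(\star)$.

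The main obstacle is the formal verification of this ``$O(1)$ augmentations suffice'' claim, since one must carefully handle the various cases (insertion vs.\ deletion, whether $e \in M$, how the alternation structure of augmenting paths interacts with the modified edge) and exploit the bipartiteness of $G$. Fortunately, this is a well-known folklore result in the dynamic matching literature built on the classical short-augmenting-path technique of Hopcroft--Karp, which has been adapted in the dynamic setting by Gupta--Peng and subsequent works; we may invoke it as a black box or verify it by the direct case analysis sketched above.
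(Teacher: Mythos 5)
Your proposal takes essentially the same approach as the paper's proof: maintain the Hopcroft--Karp invariant that $H$ has no short $M$-augmenting path, so that $|M| \geq (1-\delta)\,\mu(H)$, and restore the invariant after each update by applying a short augmenting path; each such augmentation touches $O(1/\delta)$ edges. The paper's own proof is even more terse than yours---it simply asserts ``apply that augmenting path''---whereas you correctly flag the real crux, namely that $O(1)$ applications of short augmenting paths suffice to restore the invariant after a single edge update in $H$. Your sketched vertex-disjointness justification is not quite airtight as stated (the Hopcroft--Karp disjointness guarantee only applies to augmenting paths of \emph{equal} shortest length; a subsequent shortest augmenting path of strictly greater but still small length need not be vertex-disjoint from the one just applied), but the paper itself offers no more detail on this point, and both you and the paper ultimately appeal to the same folklore fact from the dynamic matching literature. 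In short: same algorithm, same invariant, same reliance on the folklore bound; you are more explicit about where the hard work lies.
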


\begin{proof}
We maintain a matching $M^t$ in $H^t$ that does not admit any augmenting path of length less than $1+\Theta(1/\delta)$.  Such a matching $M^t$ is a $(1-\delta)$-approximate maximum matching in $H^t$. If an update in $H^t$ leads to the creation of an augmenting path of length less than $1+\Theta(1/\delta)$ with respect to $M^t$, then we modify $M^t$ by augmenting along that path. This implies an absolute recourse of $O(1/\delta)$ per update in $H^t$. 
\end{proof}

\cref{th:matching:rounding} now follows by combining \cref{lm:matching:recourse} and \cref{lm:matching:sparsifier}.

\begin{proof}[Proof of \cref{lm:matching:sparsifier}]
 We construct the stabilizer $H^t$ as follows. Fix the parameter
\begin{equation}
\label{eq:kappa}
 \kappa := \lceil (100 (\alpha+4) \log n)/\delta^2 \rceil.
\end{equation}
At preprocessing, for each  
$e = (u, v) \in \binom{V}{2}$ and each $i \in [\kappa]$, we draw a value $\chi_{e, i} \in [0, 1]$ uniformly and independently at random. Define the collection of multi-edges \[F^t := \left\{ e_i : e \in \binom{V}{2}, \ i \in [\kappa],  \text{ and } x_e > \chi_{e, i} \right\},\](note that $F^t$ is always a subset of the edge set of $G^t$)
and let $z$ be the weighted multigraph that assigns each edge in $F^t$ the value $((1+\delta)\kappa)^{-1}$. Define $\Gamma^t$ to be the event that $z^t$ represents a valid fractional matching in $F^t$. Finally, set the edge set of $H^t$ to be $F^t$ if $\Gamma^t$ holds, otherwise  set $H^t$ to be an arbitrary maximum matching in $G^t$.

Before proving \cref{lm:matching:sparsifier}, we first show that $z^t$ is a high-value fractional matching with good probability. For convenience, let 
\[{\tt Obj}(z^t) := \sum_{e_i \in F^t}  z^t_{e_i}.\]
Also, for each  edge $e \in E^t$ and $i \in [\kappa]$, let $X_{e, i} \in \{0, 1\}$ be the indicator for the event that $e_i \in E_F^t$. Note that
\begin{equation}
\label{eq:indicator}
\expect{X_{e, i}} = \prob{x_e > \chi_{e,i}} = x_e^t.
\end{equation}

\begin{claim}
\label{cl:matching:validfracmatch} For all time-steps $t$, event $\Gamma^t$ holds with probability at least $1-1/n^{\alpha+3}$.
\end{claim}

\begin{proof}
    Fix any node $v \in V$. Let $\text{deg}^t(v, F)$ denote the number of multi-edges incident on $v$ in $F$.  
Since $z_{e_i} = (1+\delta)^{-1} \cdot (1/\kappa)$ for all $e_i \in E_F^t$, it suffices to show that $\text{deg}^t(v, F) \leq (1+\delta) \kappa$ with sufficiently high probability. Observe that \eqref{eq:indicator} implies
\begin{equation}
\label{eq:indicator:2}
\expect{ \text{deg}^t(v, F)}  = \sum_{e \in E^t(v), \, i \in [\kappa]} \expect{ X_{e, i}} = \sum_{e \in E^{t}(v)} \kappa \cdot x^t_e \leq \kappa.
\end{equation}
Recall the value of $\kappa$ from~\eqref{eq:kappa}.  Applying a Chernoff bound,
\begin{equation}
\label{eq:indicator:3}
\prob{\text{deg}^t(v, F) \leq (1+\delta) \cdot \kappa} \geq 1 - 1/n^{\alpha+4}.
\end{equation}
The claim follows by a union bound over all $n$ nodes in $G^t$.
\end{proof}

\begin{claim}
\label{cl:matching:condval}
    For all time-steps $t$, we have 
    $\expect{{\tt Obj}(z^t) \, | \, \Gamma^t} \geq (1-\Theta(\delta)) \cdot {\tt Obj}(x^t) - 1/n^{\alpha+1}.$
\end{claim}

\begin{proof}
    From~\eqref{eq:indicator} and the fact that $z^t_{e_i} = ((1+\delta) \kappa)^{-1}$,
\begin{align}
\expect{{\tt Obj}(z^t)}
&= \sum_{e \in E^t} \expect*{\sum_{i \in [\kappa] : \, e_i \in E_F^t}  z^t_{e_i}} = \sum_{e \in E^t} (1+\delta)^{-1}  \cdot x_e^t. \geq (1-\Theta(\delta)) \cdot {\tt Obj}(x^t) 
\label{eq:new:121}
\end{align}
Next, since $|E_F^t| \leq \kappa n^2$,  it follows that with probability 1
\begin{equation}
{\tt Obj}(z^t) = |E_F^t| \cdot (1+\delta)^{-1}(1/\kappa) \leq n^2.
\label{eq:new:151}
\end{equation}
By the law of total probability
\begin{align*}
&\expect{{\tt Obj}(z^t) \, | \, \Gamma^t} \geq 
\expect{{\tt Obj}(z^t) \, | \, \Gamma^t} \cdot \prob{\Gamma^t} = \expect{{\tt Obj}(z^t)} - \expect*{{\tt Obj}(z^t) \, | \, \overline{\Gamma^t}} \cdot \prob{\overline{\Gamma^t}}. 
\intertext{Using \eqref{eq:new:121}, \eqref{eq:new:151}, and \cref{cl:matching:validfracmatch}, we conclude}
&\expect{{\tt Obj}(z^t) \, | \, \Gamma^t} \geq (1-\Theta(\delta)) \cdot {\tt Obj}(x^t) - \frac{n^2}{n^{\alpha + 3}} = (1-\Theta(\delta)) \cdot {\tt Obj}(x^t) - 1/n^{\alpha+1}. \qedhere
\end{align*}
\end{proof}

We are now ready to prove part (1) of \cref{lm:matching:sparsifier}. The claim is trivially true if $G^t$ is the empty graph. Otherwise, by the law of total probability,
\begin{align}
\expect{\mu(H^t)} &= \expect{\mu(F^t) | \Gamma^t} \cdot \prob{\Gamma^t} +  \expect{\mu(G^t) | \overline{\Gamma^t}} \cdot \prob{\overline{\Gamma^t}} \nonumber \\
&\geq \expect{{\tt Obj}(z^t) | \Gamma^t} \cdot \prob{\Gamma^t} +  {\texttt{Obj}}(x^t) \cdot \prob{\overline{\Gamma^t}} \label{line:intgap} \\
&\geq (1-\Theta(\delta)) \cdot {\tt Obj}(x^t) - 1/n^{\alpha+1} \label{line:usingcondval} \\
&\geq (1-\Theta(\delta)) \cdot {\tt Obj}(x^t). \label{line:gnonempty}
\end{align}
Step \eqref{line:intgap} follows from the fact that the bipartite matching polytope has no integrality gap, step \eqref{line:usingcondval} from \cref{cl:matching:condval}, and step \eqref{line:gnonempty} since $G^t$ is nonempty and hence $1/n^{\alpha+1} \leq \delta \cdot \texttt{Obj}(x^t)$.

We conclude with the proof of part (2). The recourse of $H^t$ is at most the recourse of $F^t$, plus $n^2$ for every time-step that $\Gamma^t$ does not hold (because in this case we need to replace the graph with a maximum matching in $G^t$ and back which requires at most $2 \cdot {n \choose 2} \leq n^2$ edge insertions). Since $\Gamma^t$ holds with probability $1 - 1/n^{\alpha + 3}$, the expected contribution of switching to $G^t$ and back is at most $1/n^{\alpha}$ per time-step.

To bound the recourse of $F^t$, note that each copy of an edge $e_i \in E_F^{t} \oplus E_F^{t-1}$ iff the random variable $\chi_{e, i}$ (which is uniform in $[0,1]$) lies in an interval of length $\left| x_e^t - x_e^{t-1}\right|$, and hence
\[\expect{\left| E_F^{t} \oplus E_F^{t-1} \right|} = \kappa \cdot \sum_{e \in \binom{V}{2}} \left| x^t_e - x^{t-1}_e\right| = \kappa \cdot \norm{x^t - x^{t-1}}_1.\] 
Summing over all time-steps $t$ and putting everything together,  the total expected recourse of $H^t$ is at most
\begin{align*}
 &\expect*{\sum_t \left\vert E^t_H \oplus E^{t-1}_H \right\vert} \leq \kappa \cdot \sum_{t} \norm{x^t - x^{t-1}}_1 + \sum_t 1/n^{\alpha}. \qedhere
\end{align*}
\end{proof}

\section{Proof of \cref{th:kt:mst}}
\label{app:th:kt:mst}

\mstround*

We follow the same paradigm as in \cref{app:matching} of maintaining a stabilizer, which in this case is a graph that contains a low-cost spanning tree whose recourse is bounded with respect to the $\ell_1$ movement of $x$, and then giving an absolute recourse algorithm with respect to the updates to the stabilizer.

In the first step, we show:
\begin{lemma}
\label{lm:mst:sparsifier}
We can maintain a subgraph $H = (V, E_H)$ of $G^t$ such that:
\begin{enumerate}[label={(\arabic*)}]
\item $\expect*{\mu(H^t)} \leq (2+\delta) \sum_{e \in E^t} c_e \cdot x^t_e$, where $\mu(H^t)$ is the expected cost of the minimum spanning tree of $H^t$.
\item  $\displaystyle \expect*{\sum_t \left\vert E_H^{t-1} \oplus E_H^t\right\vert }\leq O\left(\frac{\alpha \log n}{\delta^2} \right) \cdot \sum_t \norm{x^t - x^{t-1}}_1 +  O\left(\frac{T}{n^{\alpha}}\right)$. 
\end{enumerate}
\end{lemma}

Once again,  we show to complete the proof before giving the stabilizer construction.
\begin{lemma}
\label{lm:mst:recourse} We can maintain a minimum spanning tree $\mathcal{T}$ of $F$ with $O(1)$ absolute recourse per edge insertion/deletion in $F$.
\end{lemma}

\begin{proof}
When edge $e$ gets deleted from $F$, if $e \notin E_\mathcal{T}$,  we don't make any changes to $\mathcal{T}$, otherwise replace $e$ in $E_\mathcal{T}$, with the cheapest edge crossing its fundamental cut. When an edge $e$ is added to $F$, add $e$ to $E_\mathcal{T}$ then delete from $E_\mathcal{T}$ the cheapest edge along the fundamental cycle associated with $e$. In both cases $\mathcal{T}$ remains the MST after the update, and incurs absolute recourse $O(1)$.
\end{proof}

Combining \cref{lm:mst:sparsifier} and \cref{lm:mst:recourse} yields \cref{th:kt:mst}. We turn to showing how to build the stabilizer.

\begin{proof}[Proof of \cref{lm:mst:sparsifier}]
    
Let $\gamma > 0$ be a sufficiently large constant. For every edge ${e \in \binom{V}{2}}$, define
\begin{equation}
\label{eq:mst:kappa}
 p^t_e := \min\left\{1, \frac{100 \gamma (\alpha+3) \log n\cdot x^t_e}{\delta^2} \right\}
\end{equation}
At preprocessing, for each edge $e \in \binom{V}{2}$ we draw a  value $\chi_e \in [0, 1]$ uniformly and independently at random, and define the edge set $F^t := \left\{ e \in \binom{V}{2} : p_e^t > \chi_e \right\}$ (note that $F^t$ is always a subset of the live edges $E^t$). Let $\Gamma^t$ be the event that $F^t$ contains a spanning tree of cost at most $(2+\delta) \cdot \sum_{e \in E^t} c_e \cdot x_e^t$. Finally, define the edge set of $H^t$ to be $F^t$ if $\Gamma^t$ holds, otherwise set $H^t$ to be an arbitrary minimum spanning tree of $G^t$.

Part (1) of \cref{lm:mst:sparsifier} is immediate by construction (since the integrality gap of the cut LP formulation of minimum spanning is at most 2), and it remains to prove part (2). To this end, we show that $H^t$ contains a cheap spanning tree with good probability.
\begin{claim}
\label{lm:mst:key}
For all time-steps $t$, event $\Gamma^t$ holds with probability at least $1-1/n^{\alpha+3}$.
\end{claim}
\begin{proof}
    We follow the proof of Lemma~3.8 in~\cite{CK18arxiv}. Let $(G^t, x^t)$ be the weighted graph with edge weights defined by $x^t$. Define the \emph{strength} of an edge $e$, denoted by $\kappa^t_e$, to be the maximum
    value of $k$ such that a maximal $k$-connected vertex-induced subgraph of $G^t$ contains $e$.

    The mincut in $(G^t, x^t)$ has value $\geq 1$, since $x^t$ is feasible to \eqref{eq:kt:mst}, and therefore $\kappa^t_e \geq 1$ for all  $e \in E^t$. For each edge $e \in E^t$, we now define
\begin{equation}
\label{eq:mst:parameter}
q_e^t := \min\left\{1, \frac{\gamma (\alpha+3) \log n \cdot c_e \cdot x_e^t}{\delta^2 \cdot \sum_{e' \in E^t} c_{e'} \cdot x^t_{e'}} \right\} \text{ and } r_e^t := \max\{p_e^t, q_e^t\}.
\end{equation}

Now, consider the weighted graph $(Z^t, z^t)$ with edge set $E_Z^t := \left\{ e \in \binom{V}{2} : r_e^t > \chi_e \right\}$, where the weight of edge $e \in E_Z^t$ is $z^t_e := (1+\delta) \cdot x^t_e / r^t_e$. Let $\mathcal{E}^t$ be the event that the weighted graph $(Z^t, z^t)$ has mincut at least $1$ and has $\sum_{e \in E_Z^t} c_e \cdot z_e^t \leq (1+\delta) \cdot \sum_{e \in E^t} c_e \cdot x_e^t$. Since the cut LP formulation of minimum spanning tree has integrality gap at most $2$, event $\mathcal{E}^t$ in turn means that $Z^t$ contains an integral spanning tree $\mathcal{T}'$ of cost at most $2 (1+\delta) \cdot \sum_{e \in E^t} c_e \cdot x_e^t$.

Following the analysis in the proof of Lemma 3.8 in~\cite{CK18arxiv}, we derive that $\mathcal{E}^t$ holds with probability at least $1-1/n^{\alpha+3}$. Conditioned on $\mathcal{E}^t$ holding, consider any edge $e^* \in E_Z^t \setminus E_F^t$. We have $p^t_{e^*} \leq \chi_{e^*} <  r^t_{e^*}$. Since $\chi_{e^*} \in [0,1]$, recalling the values of $p_{e^*}^t$ and $r_{e^*}^t$ from~(\ref{eq:mst:kappa}) and~(\ref{eq:mst:parameter}), we infer that
$$\frac{100 \gamma (\alpha+3) \log n \cdot x^t_{e^*}}{\delta^2} \leq \frac{\gamma (\alpha+3)\log n \cdot c_{e^*} \cdot x_{e^*}^t}{\delta^2 \cdot \sum_{e \in E^t} c_{e} \cdot x^t_{e}}.$$
Rearranging the terms in the above inequality, we get $c_{e^*} \geq 100 \cdot \sum_{e \in E^t} c_{e} \cdot x_{e}^t$, which means that $e^*$ is not in $\mathcal{T}'$, the spanning tree of cost $2 (1+\delta) \cdot \sum_{e \in E^t} c_e \cdot x_e^t$ in $Z^t$. In other words, the edges of $\mathcal{T}$ are also contained in $F^t$, which was the claim to be proven. 
\end{proof} 

We conclude with the proof of part (2) of \cref{lm:mst:sparsifier}. The recourse of $H^t$ is at most the recourse of $F^t$, plus $n^2$ for every time-step that $\Gamma^t$ does not hold (because in this case we need to replace the graph with a spanning tree in $G^t$ and back which requires at most $2 \cdot {n \choose 2} \leq n^2$ edge insertions). Since $\Gamma^t$ holds with probability $1 - 1/n^{\alpha + 3}$, the expected contribution of switching to $G^t$ and back is at most $1/n^{\alpha}$ per time-step.

To bound the recourse of $F^t$, note that each edge $e \in E_F^{t} \oplus E_F^{t-1}$ iff the random variable $\chi_{e}$ (which is uniform in $[0,1]$) lies in an interval of length $\left| p_e^t - p_e^{t-1}\right| = O(\alpha \log n / \delta^2) \left| x_e^t - x_e^{t-1}\right|$, and hence
\[\expect{\left| E_F^{t} \oplus E_F^{t-1} \right|} = O\left(\frac{\alpha \log n}{\delta^2}\right) \cdot \sum_{e \in \binom{V}{2}} \left| x^t_e - x^{t-1}_e\right| = O\left(\frac{\alpha \log n}{\delta^2}\right) \cdot \norm{x^t - x^{t-1}}_1.\] 
Summing over all time-steps $t$ and putting everything together,  the total expected recourse of $H^t$ is at most
\begin{align*}
 &\expect*{\sum_t \left\vert E^t_H \oplus E^{t-1}_H \right\vert} \leq O\left(\frac{\alpha \log n}{\delta^2}\right) \cdot \sum_{t} \norm{x^t - x^{t-1}}_1 + \sum_t O(1/n^{\alpha}). \qedhere
\end{align*}
\end{proof}

\section{Lower Bound}
\label{sec:lb}

In this section we show that resource augmentation is essential in order to get a sub-polynomial competitive ratio for positive body chasing. 

\lbmainthm*

The crux of our bound is the following lemma.

\begin{lemma}
\label{lemma:lb_eps}
    No algorithm for positive body chasing can achieve competitive ratio better than \[\max\left\{\frac{\sqrt{n} (1 - 10 \eps \cdot \sqrt{n \log n})}{10},0\right\}.\]
\end{lemma}

Let us first see how this implies \cref{thm:lb_mainthm}.

\begin{proof}[Proof of \cref{thm:lb_mainthm}]
    First note that the $\Omega(\log n)$ lower bound for fractional online set cover (see e.g. \cite{BN05}) implies an $\Omega(\log n)$ bound for positive body chasing, even with $\eps$ resource augmentation when $\eps \in (0,1]$.

    It remains to argue the remaining part of the bound. If $\eps \leq 1 / (20 \sqrt{n \log n})$, then \cref{lemma:lb_eps} implies a lower bound of $\Omega(\sqrt n)$. If on the other hand $\eps > 1 / (20 \sqrt{n \log n})$, then the same construction from \cref{lemma:lb_eps} on a subspace of dimension $n' = 1 / (900 \cdot \eps^2 \log (1/\eps))\leq n$ implies a lower bound of $\Omega(\sqrt{n'}) = \Omega(1/\eps \sqrt{\log (1/\eps)})$.
\end{proof}

Finally, we prove the main lemma.

\begin{proof}[Proof of \cref{lemma:lb_eps}]
Assume that $n = 2^k$ for sufficiently large $k$. Also, assume $\eps \leq 1/10\sqrt{n\log n}$, otherwise there is nothing to prove.
Let $h^i$ be the $i^{th}$ row of the Hadamard matrix of order $n$. Define $S = \sqrt{n} (2\sqrt{\log n} + 1)$, as well as $(x^0,s^0) = (2\sqrt{(\log n)/ n}\cdot \vec 1, \sqrt{n})$. Note that $S = \|(x^0, s^0)\|_1$.

We execute the following instance for $M$ many phases. In each phase the adversary chooses a uniformly random vector $b\in\{-1,1\}^{n}$. A phase is composed of $n$ time-steps, where the body at time $t \in [n]$ is defined as 
\begin{align*}
    K_t &= \left\{(x,s)\in R^{n+1}_+ \ \middle| \ \inner{h^t}{x} + s+\sum_{i=1}^{n}x_i \geq b^t+ S + \inner{h^t}{x^0} \right\} && \text{if } b^t=1 \\
    K_t &= \left\{(x,s)\in R^{n+1}_+ \ \middle| \ \inner{h^t}{x} + s+\sum_{i=1}^{n}x_i \leq b^t+ S + \inner{h^t}{x^0} \right\} && \text{if } b^t=-1.
\end{align*} 
We claim that this is an instance of positive body chasing.
Let us first check that the constraint of each $K_t$ is a covering/packing constraint. 
For the left-hand side of each constraint, the coefficients of each variable are at least zero because entries of  $h^t$ are in $\{+1 , -1\}$ and so the coefficient of $x_i$ is $h^t_i +1 \ge 0$. Also, the coefficient of $s$ is $1$. 
For the right-hand side, we have that $b^t+ S + \inner{h^t}{x^0}\geq 0$ because, by the definition of the Hadamard matrix, $\inner{h^t}{x^0}$ equals to $0$ when $t>0$ and equals to $\|x^0\|$ when $t = 0$. 
It remains to show that $K_t\neq \emptyset$. 
Observe that $(x^0, s^0+b^t)\in K_t$ and, in fact, the constraint is tight: $\inner{h^t}{x_0} + (s^0+b^t)+\sum_{i=1}^{n}x_i = b^t+ S + \inner{h^t}{x^0}$ because $S = \sum_i x_i + s_0$.

\paragraph{Movement cost of $\alg$:}
We argue that any online algorithm must pay at least $\nicefrac{n}{2} \cdot (1- 10 \eps \cdot \sqrt{n\log n})$ per phase in expectation.

For convenience, define $L^t = \inner{h^t}{x^t} + s + \sum_i x_i$ to be the left-hand side of the constraints above, and let $R^t = S + \inner{h^t}{x^0}$. The algorithm needs to satisfy covering constraints fully, and packing constraints up to $(1+\eps)$; if $b^t =1$, the algorithm must choose $x^t$ such that
\begin{align*}
    L^t &\geq R^t + 1,
    \intertext{and if $b^t =-1$, the algorithm must choose $x^t$ such that}
    L^t &\leq (R^t - 1)(1+\eps).
\end{align*}
The term $L^t$ must change by at least half the distance between $R^t+1$ and $(R^t-1)(1+\eps)$ in expectation, i.e.
\[\expect*{\vert L^t - L^{t-1} \vert} \geq  \frac{(R^t + 1) - (R^t-1)(1+\eps)}{2} = 1 - \frac{\eps(R^t-1)}{2}.\]
Since $L^t = \inner{l^t}{(x^t, s^t)}$ where $l^t$ is a vector with coefficients in $[0,2]$, this implies that
\[\expect*{\|(x^t,s^t) - (x^{t-1},s^{t-1})\|_1} \geq \frac{1}{2} - \frac{\eps(R^t-1)}{4} = \frac{1 - 10 \eps \cdot \sqrt{n \log n}}{2}.\]
Therefore, the expected cost of any algorithm during a single phase of $n$ time-steps is at least $\nicefrac{n}{2} \cdot (1- 10 \eps \cdot \sqrt{n\log n})$.

\paragraph{Movement of \opt:} 
We argue that $\opt$ pays at most $5\sqrt{n}$ in expectation per phase. We begin by assuming that $\opt$ always returns to $(x^0,s^0)$ at the end of a phase, since this can only increase the cost of $\opt$.

First, note that since $(x^0, s^0+b^t)\in K_t$, then the optimal solution may move at each time $t$ to this point, and return at the end to $(x^0,s^0)$. The total movement cost of this solution is bounded by $4\sum_{i=1}^{n}|b^t|=4n$. However, we claim that a better solution is available to \opt with high probability. 

Let $\mathcal{E}$ be the event that $\|H^{-1} b\|_\infty\leq 2\sqrt{(\log n)/n}$. Because
\[\|H^{-1} b\|_2 \leq \|H^{-1}\|\cdot \|b\|_2 \leq \frac{1}{\sqrt{n}} \cdot \sqrt{n} = 1,\]
we have that $\|H^{-1} b\|_1 \leq \sqrt{n}\cdot \|H^{-1} b\|_2 \leq \sqrt{n}$.
Define
\begin{align*}
    x^* &= x^0+ H^{-1} b= 2\sqrt{\frac{\log n}{n}}\cdot \vec 1 + H^{-1} b,\\
    s^* &= S-\sum_i x^*_i\geq \sqrt{n} -\|H^{-1} b\|_1
\end{align*}
Now for all $t \in [n]$, 
since $\inner{h^t}{x^*} =  \inner{h^t}{x^0} + b^t$ by definition of $x^*$, we have by design
\[\inner{h^t}{x^*} + s^*+\sum_{i=1}^{n}x^*_i = b^t+ S + \inner{h^t}{x^0}.\]
Additionally, if $\mathcal{E}$ holds, then using the fact that $\|H^{-1} b\|_1 \leq \sqrt{n}$, the point $(x^*, s^*) \in R^{n+1}_+$ and thus $(x^*, s^*) \in \bigcap_{t=1}^n K_t$. Consequently, $\opt$ can move directly from $(x^0,s^0)$ to $(x^*, s^*)$ and back at the end of the phase, paying at most $4\sqrt{n}$. 

It remains to bound the probability of $\mathcal{E}$. The matrix $H^{-1}$ has entries in $\{-\nicefrac{1}{n}, \nicefrac{1}{n}\}$, so each entry of $H^{-1} b$ is a random walk on the line with step size $\nicefrac{1}{n}$. 

\begin{fact}[Chernoff Bound]
    If $Z = \sum_{i=1}^n X_i$ where $X_i$ are independent random variables such that $\prob{X_i = 1} = \prob{X_i = -1} = \nicefrac{1}{2}$, then
    \[\prob{|Z| \geq \sqrt{a n \log n}} \leq 2 n^{-a/2}.\]
\end{fact}
By this fact together with a union bound, event $\|H^{-1} b\|_\infty\leq 2\sqrt{(\log n)/n}$ holds with probability $1 - \nicefrac{2}{n}$. Hence, the expected cost of $\opt$ in total over the phase is $(1-\nicefrac{2}{n}) \cdot 4\sqrt{n} + \nicefrac{2}{n} \cdot 4n \leq 5 \sqrt{n}$, for sufficiently large $n$.

To conclude, summing over $M$ phases and accounting for $\opt$ moving to $(x^0, s^0)$ before the first phase, the competitive ratio of any algorithm is at least 
\[\frac{\expect*{\alg}}{\expect*{\opt}} \geq \frac{M\cdot \nicefrac{n}{2}\cdot (1-10\eps \cdot \sqrt{n \log n})}{5 \cdot M \cdot \sqrt{n} + \|(x^0, s^0)\|_1} \geq \frac{\sqrt{n} (1 - 10 \eps \cdot  \sqrt{n \log n})}{10} \cdot (1- o(1)). \qedhere\]

\end{proof}

\section{Handling Box Constraints}
\label{sec:box}

In this section we show that we can handle box constraints, a special case of packing constraints, without violation. Box constraints are ``static'' packing constraints of the form $x_j\leq b_j$.

In general, we allow to add to the polytope at time $t$ the box constraints $x\leq b$, where $b\in R_{+}^{n}\cup \{\infty\}$. Thus, the more general polytope at time $t$ is:

\[K_{t}^{1+\eps}=\left\{ x^{t}\in R_{+}^{n}\ |\ x\leq b, \ C^{t}x^{t}\geq1, 
\ P^{t}x^{t}\leq1+\eps\right\}\]

We claim that we can simply eliminate these new constraints by adding additional covering constraints.
For each covering constraint $\sum_{j}c_{ij}x_j \geq 1$, and every subset of variables $S$, we add the constraint:
\[\sum_{j\not \in S}c_{ij}x_j \geq \max\left\{0, 1-\sum_{j\in S}c_{ij}\cdot b_j\right\}\]

It is easy to verify that if $x\in R^n_{+}$ satisfies these new constraints, then the vector $x'$, where $x'_j= \min\{x_j, b_j\}$ satisfies the original constraints along with the box constraints. On the other hand, any $x''$ that satisfies the box constraints satisfies all the new covering constraints.

Thus, our algorithm maintains a feasible point to $K_t^{1+\eps}$ but with box constraints removed, and the new covering constraints added instead. The algorithm maintains at time $t$ the vector $x'$, where $x'_j= \min\{x_j, b_j\}$, and its total movement cost is no more than that of $x$. 

Finally, we remark that although the number of such constraints is exponential, it is easy to verify if a solution $x$ satisfies these new constraints by checking that $x$ satisfies the covering constraint with $S=\{j \mid x_j\geq b_j\}$.

{\footnotesize
    \bibliography{dblp,refs}
    \bibliographystyle{alpha}
}
 
\end{document}